\newtheorem{theorem}{Theorem}
\newtheorem{proposition}{Proposition}
\newtheorem{lemma}{Lemma}
\newtheorem{assumption}{Assumption}
\newcommand*\rel@kern[1]{\kern#1\dimexpr\macc@kerna}
\newcommand*\widebar[1]{%
  \begingroup
  \def\mathaccent##1##2{%
    \rel@kern{0.8}%
    \overline{\rel@kern{-0.8}\macc@nucleus\rel@kern{0.2}}%
    \rel@kern{-0.2}%
  }%
  \macc@depth\@ne
  \let\math@bgroup\@empty \let\math@egroup\macc@set@skewchar
  \mathsurround\z@ \frozen@everymath{\mathgroup\macc@group\relax}%
  \macc@set@skewchar\relax
  \let\mathaccentV\macc@nested@a
  \macc@nested@a\relax111{#1}%
  \endgroup
}
\newlength\figureheight
\newlength\figurewidth
\newfont{\bbb}{msbm10 scaled 700}
\newfont{\bb}{msbm10 scaled 1100}
\newcommand{\av}{{\bf a}}
\newcommand{\dv}{{\bf d}}
\newcommand{\ev}{{\bf e}}
\newcommand{\fv}{{\bf f}}
\newcommand{\gv}{{\bf g}}
\newcommand{\hv}{{\bf h}}
\newcommand{\kv}{{\bf k}}
\newcommand{\mv}{{\bf m}}
\newcommand{\nv}{{\bf n}}
\newcommand{\pv}{{\bf p}}
\newcommand{\qv}{{\bf q}}
\newcommand{\rv}{{\bf r}}
\newcommand{\sv}{{\bf s}}
\newcommand{\uv}{{\bf u}}
\newcommand{\wv}{{\bf w}}
\newcommand{\vv}{{\bf v}}
\newcommand{\xv}{{\bf x}}
\newcommand{\yv}{{\bf y}}
\newcommand{\zv}{{\bf z}}
\newcommand{\Am}{{\bf A}}
\newcommand{\Cm}{{\bf C}}
\newcommand{\Dm}{{\bf D}}
\newcommand{\Em}{{\bf E}}
\newcommand{\Gm}{{\bf G}}
\newcommand{\Hm}{{\bf H}}
\newcommand{\Id}{{\bf I}}
\newcommand{\Lm}{{\bf L}}
\newcommand{\Mm}{{\bf M}}
\newcommand{\Qm}{{\bf Q}}
\newcommand{\Sm}{{\bf S}}
\newcommand{\Tm}{{\bf T}}
\newcommand{\Um}{{\bf U}}
\newcommand{\Wm}{{\bf W}}
\newcommand{\Xm}{{\bf X}}
\newcommand{\Zm}{{\bf Z}}
\newcommand{\alphav}{\hbox{\boldmath$\alpha$}}
\newcommand{\betav}{\hbox{\boldmath$\beta$}}
\newcommand{\gammav}{\hbox{\boldmath$\gamma$}}
\newcommand{\etav}{\hbox{\boldmath$\eta$}}
\newcommand{\nuv}{\hbox{\boldmath$\nu$}}
\newcommand{\phiv}{\hbox{\boldmath$\phi$}}
\newcommand{\thetav}{\hbox{\boldmath$\theta$}}
\newcommand{\omegav}{\hbox{\boldmath$\omega$}}
\newcommand{\Thetam}{\hbox{\boldmath$\Theta$}}
\newcommand{\Xim}{\hbox{\boldmath$\Xi$}}
\begin{document}

\title{Mathematical Theory of Atomic Norm Denoising In Blind Two-Dimensional Super-Resolution\\ (Extended Version)}
%
%
%

\author{Mohamed~A.~Suliman and Wei Dai

\thanks{M. A. Suliman and W. Dai are with the Department of Electrical and Electronic Engineering, Imperial College London, London, SW7 2AZ, United Kingdom. Emails:$\{$m.suliman17, wei,dai1$\}$@imperial.ac.uk. Part of this work is presented at the IEEE International Conference on Acoustics, Speech, and Signal Processing (ICASSP), Barcelona, Spain, May 2020 \cite{suliman2020atomic}.}}

\maketitle

\begin{abstract}
This paper develops a new mathematical framework for denoising in blind two-dimensional (2D) super-resolution upon using the atomic norm. The framework denoises a signal that consists of a weighted sum of an unknown number of time-delayed and frequency-shifted unknown waveforms from its noisy measurements. Moreover, the framework also provides an approach for estimating the unknown parameters in the signal. We prove that when the number of the observed samples satisfies certain lower bound that is a function of the system parameters, we can estimate the noise-free signal, with very high accuracy, upon solving a regularized least-squares atomic norm minimization problem. We derive the theoretical mean-squared error of the estimator, and we show that it depends on the noise variance, the number of unknown waveforms, the number of samples, and the dimension of the low-dimensional space where the unknown waveforms lie. Finally, we verify the theoretical findings of the paper by using extensive simulation experiments.

\end{abstract}
\begin{IEEEkeywords}
Super-resolution, atomic norm denoising, blind deconvolution, mean-squared error.
\end{IEEEkeywords}

\section{Introduction}
\label{sec:intro}
Super-resolution refers to the inverse problem of recovering fine-scale information from low-resolution measurements. Such a problem arises in many real-world applications such as astronomy \cite{puschmann2005super}, radar imaging \cite{mi2008radar}, microscopy \cite{mccutchen1967superresolution}, and medical imaging \cite{kennedy2007improved}. Among the different fields of super-resolution, the super-resolution source localization, whose goal is to identify the locations of point sources from its convolution with low-pass point spread function, has gained considerable interest.


In recent years, many convex based methods have been widely used to super-resolve a set of unknown \emph{continuous} parameters. In the noise-free case, the work in \cite{candes2014towards} shows that we can recover a set of point sources at unknown locations in $[0,1]$, with infinite precision, upon solving a convex total-variation norm optimization problem. The problem is reformulated as semidefinite programming (SDP), and exact recovery is shown to hold as long as the distance between the points satisfies a minimum separation. The authors in \cite{tang2013compressed} apply the atomic norm framework to recover precisely a set of continuous frequencies from a random set of samples, whereas \cite{yang2016exact} extends the framework in \cite{tang2013compressed} to the case of multiple measurement vectors. The work in \cite{chi2014compressive} provides a two-dimensional (2D) atomic norm super-resolution framework to estimate a set of 2D frequencies from a random subset of samples gathered from a mixture of 2D sinusoids. The work shows that all unknown frequencies can be recovered under a minimum separation condition upon solving an atomic norm minimization problem. The recovery problem is simplified by assuming that the observed data can be represented using two 2D square matrices. Moreover, the work in \cite{heckel2016super} shows that we can recover precisely a set of 2D continuous frequencies in a linear time-varying system from a single input-output measurement upon using the atomic norm, while \cite{tian2017low} presents a 2D atomic norm super-resolution technique along with an efficient optimization technique for the direction of the arrival estimation problem. Finally, \cite{yang2016vandermonde} provides a super-resolution approach to recover multidimensional (MD) frequencies.


Blind super-resolution, or what is being referred to in the literature as blind deconvolution, is another direction of super-resolution in which the unknown sparse signal is convolved with an unknown point spread function (PSF). An example of that is when the PSF is imperfectly known or a time-varying function that drifts within the period of an experiment in a measurement device. The hurdle within blind super-resolution is that it is a severely ill-posed problem; hence, solving it typically involves imposing extra assumptions. In the 1D space, the work in \cite{chi2016guaranteed} develops an atomic norm framework to estimate unknown frequencies as well as a single point spread function and a spike signal, whereas \cite{yang2016super} generalizes \cite{chi2016guaranteed} to multiple unknown waveforms. Both works show that an exact recovery of the unknowns is guaranteed with high probability, assuming that the number of measurements is proportional to the degrees of freedom in the problem and that the unknown waveforms lie in a known low-dimensional subspace. 

On the other hand, the work in \cite{suliman2018blind} develops a blind 2D super-resolution framework to super-resolve a set of 2D continuous frequencies and unknown waveforms using the atomic norm. The work shows that when the unknown waveforms lie in a known low-dimensional subspace, an exact recovery for the frequencies and the waveforms holds given that the frequencies are well-separated and that the number of observed samples satisfy specific bound that is a function of the system parameters. This work is extended in \cite{suliman2019exact} for the 3D space.




In practical applications with real data, it is substantial to account for model imperfections and perturbation that result from noise. Thus, super-resolution from noisy measurements has been excessively studied. The authors in \cite{candes2013super} address the problem of super-resolving point sources from noisy data with high precision. The work shows that by solving a convex total-variation norm minimization problem, a stable estimate for the unknowns is guaranteed with an error that is proportional to the noise level and the super-resolution factor, provided that the points are well-separated. The work in \cite{bhaskar2013atomic} provides non-asymptotic guarantees on the mean-squared error (MSE) achieved by atomic norm denoising applied to line spectral estimation problem, whereas in \cite{li2015off}, the authors provide an atomic norm denoising approach for multiple signals and a recovery algorithm for their associated frequencies from noisy observations via atomic norm minimization. The denoising error rate in \cite{li2015off} is shown to be directly proportional to the noise level and inversely proportional to the total number of the observed samples. The work in \cite{zhang2017atomic} applies the framework provided in \cite{li2015off} for channel estimation and faulty antenna detection in the context of massive MIMO. 

An atomic norm denoising algorithm for a mixture of sinusoidal from noisy samples is proposed in \cite{tang2014near} in which the denoising error is derived and shown to be directly proportional to the noise level and inversely proportional to the number of samples. Furthermore, the authors in \cite{li2019atomic} study the problem of denoising a sum of complex exponentials with unknown waveform modulations using an atomic norm regularized least-squares problem. The theoretical MSE of the estimator is derived and shown to be proportional to the noise variance and the actual signal parameters. Finally, a comprehensive overview of atomic norm in super-resolution, along with some related applications, is provided in \cite{chi2020harnessing}.

\subsection{Contributions and Related Work}

In this paper, we provide an abstract theory for atomic norm denoising in blind 2D super-resolution that is easily extendible to any higher dimensions. We show that the atomic norm framework can be applied to denoise and then super-resolve a set of unknown continuous parameters in the 2D space. We formulate the denoising problem as a regularized least-square atomic norm minimization problem, and we show that it can be solved via an SDP. Moreover, we derive the theoretical bound on the MSE of the proposed denoising algorithm, and we show that it is a function of the noise level, the number of the observed samples, the number of the unknown parameters, and the subspace dimension where the unknown waveforms are assumed to lie. In particular, we prove that the MSE scales linearly with the noise variance and the square root of the number of the unknown waveforms and is inversely proportional to the number of the observed samples.

The significance of this work is that, to the best of our knowledge, it is the first work to address atomic norm denoising beyond the 1D blind super-resolution scheme. The work in this paper is closely related to that in \cite{suliman2018blind, suliman2019blind} and \cite{li2019atomic}. The focus of \cite{suliman2018blind, suliman2019blind} is on blind 2D super-resolution in the noise-free case, whereas in this paper, we consider the case in which the signal is contaminated by noise. Hence, we start from a different problem formulation, and we devise a distinct optimization recovery problem. Most importantly, while an exact recovery for the unknowns exists in \cite{suliman2018blind, suliman2019blind}, such exactness does not exist here due to the noise. Thus, an entirely different goal based on assessing the framework robustness to noise and the existence of a reasonable estimate for the unknowns is being addressed in this work. 

On the other hand, the work in \cite{li2019atomic} addresses the problem of blind super-resolution in 1D space. Extending the 1D framework to higher dimensions is non-trivial and comes with significant mathematical differences due to multiple reasons. First, the existence of the solution for the problem in \cite{li2019atomic} is shown by using a 1D polynomial that consists of multiple shifted versions of a \emph{single kernel}. Such formulation fails in our problem as our 2D trigonometric vector polynomial has to satisfy different constraints. Thus, we propose using \emph{multiple different kernels}, as we will show in Section~\ref{sec: proof on theorem}. Second, the MSE in \cite{li2019atomic} does not depend explicitly on the dimensionality of the problem and is entirely dependent on the formulation. Third, while \cite{li2019atomic} is based on using a 1D atomic norm, which is shown to be reformulated as an SDP, such reformulation does not exist for MD atomic norm and an alternative approach that is based on relaxing the problem is applied. Finally, our proof techniques allow us to impose less restricted assumptions on the subspace where the unknown waveforms lie than what in \cite{li2019atomic}, as we will discuss in Section~\ref{sec: main results theorem}.

It remains to point out that our proposed framework is easily extensible to any higher M dimensions upon following the same proof techniques provided in this paper. That is, we first obtain an MD trigonometric vector polynomial using our derivations; then, we can apply this polynomial and other proof methodologies provided in this paper to derive the atomic norm denoising framework. For example, the interested reader is referred to \cite{suliman2019exact} on how to obtain a 3D trigonometric vector polynomial using our proof techniques.

\subsection{Paper Organization}
In Section~\ref{sec: model}, we introduce our system model and derive our super-resolution denoising problem based on atomic norm optimization. In Section~\ref{sec: main results theorem}, we provide our main result, which characterizes the theoretical MSE performance of the problem, and we highlight its main assumptions. Section~\ref{sec: optimal con and problem solution} discusses the optimality conditions of the denoising problem introduced in Section~\ref{sec: main results theorem}, and shows how to super-resolve the unknown continuous parameters upon obtaining SDP relaxation for the recovery problem. In Section~\ref{sec: results}, we validate the theoretical findings in the paper using simulations, whereas Section~\ref{sec: proof on theorem} provides detailed proof of the theorem in Section~\ref{sec: main results theorem}. Conclusions and future work directions are drawn in Section~\ref{sec:conclusion}.

\subsection{Notations}
We use boldface lower-case symbols for column vectors (i.e., $\xv$) and upper-case for matrices (i.e., $\Xm$). $\left(\cdot\right)^{T}$, $\left(\cdot\right)^{H}$, and $\text{Tr}\left(\cdot\right)$ denote the transpose, the Hermitian, and the trace, respectively. $[\xv]_{i}$ denotes the $i$-th element of $\xv$ while $[\Xm]_{(i,j)}$ indicates the element in the $(i,j)$ entry of $\Xm$. The $M \times M$ identity matrix is denoted by $\Id_{\text{M}}$ while $\Xm \succeq \bm{0}$ indicates that $\Xm$ is a positive semidefinite matrix. When we use a two-dimensional index for vectors or matrices such as $[\xv]_{((k,l),1)},\  k,l=-N,\dots, N$, we mean that $\xv = [x_{(-N,-N)}, x_{(-N,-N+1)}, \dots , x_{(-N,N)}, \dots\dots, x_{(N,N)}]^{T}$. $\Xm_{(i,1\to K)}$ refers to the elements on the $i$-th row of $\Xm$ spanning from the first column to the $K$-th column. Moreover, we refer to the Kronecker product by $\otimes$. $||\cdot||_{2}$ designates the spectral norm for matrices and the Euclidean norm for vectors while $||\cdot||_{F}$ refers to the Frobenius norm. The infinity norm is denoted by $||\cdot||_{\infty}$. diag $\left(\xv\right)$ represents a diagonal matrix whose diagonal entries are the elements of $\xv$. Furthermore, $\left\langle\cdot,\cdot\right\rangle$ stands for the inner product whilst $\langle\cdot, \cdot\rangle_{\mathbb{R}}$ denotes the real inner product. The probability of an event is indicated by $\text{Pr}[\cdot]$, while the expectation operator is denoted by $\mathbb{E} [\cdot]$. For a given set $\mathcal{X}$, the notation $|\mathcal{X}|$ indicates the cardinality of the set, i.e., the number of the elements. Finally, $C, C_{1}, C^{*}, C^{*}_{1}, \hat{C}, \bar{C}, \dots$ are used to denote numerical constants that can take any real value.


\section{The Super-Resolution Problem}
\label{sec: model}
In this work, we model an observed signal $y^{*}\left(t\right)$ as a sum of $R$ different weighted versions of time-delayed and frequency-shifted \emph{unknown waveforms} $s_{j}\left(t\right)$, i.e., 
\begin{equation}
\label{eq: model}
y^{*} \left(t\right)= \sum_{j=1}^{R} c_{j} s_{j}\left(t-\tilde{\tau}_{j}\right) e^{i2\pi \tilde{f}_{j}t},
\end{equation}
where $\{c_{j}\}_{j=1}^{R} \in \mathbb{C}$ and the pairs $\{(\tilde{\tau}_{j}, \tilde{f}_{j})\}_{j=1}^{R}$ represent the unknown \emph{continuous} time-frequency shifts. Finally, note that both $R$ and $\{c_{j}\}_{j=1}^{R}$ are unknown. Such formulation arises in many signal processing applications such as military radar application, passive indoor source localization, and image restoration in astronomy (see \cite{suliman2018blind}). In this paper, we assume that the unknown waveforms $s_{j}\left(t\right)$ are bandlimited periodic signals with a bandwidth of $W$ and a period of length $T$ and that $y^{*}\left(t\right)$ is recorded over an interval of length $T$. Therefore, $(\tilde{\tau}_{j}, \tilde{f}_{j}) \in \left( \left[-T/2, T/2\right], \left[-W/2, W/2\right] \right)$. The discrete version of (\ref{eq: model}) can be obtained by sampling $y^{*}\left(t\right)$ in $\left[-T/2, T/2\right]$ at a rate of $1/W$ samples-per-second to accumulate a total of $L:=WT= 2N+1$ samples. Then, we can apply the DFT and the IDFT to the result equation and manipulate to obtain
\begin{align}
\label{eq: model sampled}
&y^{*}\left(p\right):= y^{*}\left(p/W\right)=  \nonumber\\
&\frac{1}{L}\sum_{j=1}^{R} c_{j} \left( \sum_{k=-N}^{N} \hspace{-1pt}\left[\hspace{-1pt}\left(\sum_{l=-N}^{N} s_{j}\left(l\right) e^{-i2\pi\frac{k l}{L}}\right)\hspace{-1pt} e^{-i2\pi  k \tau_{j}}\right] e^{i2\pi \frac{k p}{L}} \right)\times\nonumber\\
& e^{i2\pi p f_{j} },  \ \hspace{30pt} p=-N,\dots, N,
\end{align}
where ${\tau_{j}} := \frac{\tilde{\tau}_{j}}{T}$ and ${f_{j}} := \frac{\tilde{f}_{j}}{W}$. Based on these definitions, we can conclude that $\left(\tau_{j},f_{j}\right) \in [-1/2,1/2]^{2}$. From the periodicity property, we can assume without loss of generality that $\left(\tau_{j},f_{j}\right) \in [0,1]^{2}$.

In practical scenarios, the samples $y^{*} \left(p\right)$ are contaminated by noise; therefore, we can write
\begin{equation}
\label{eq: model noise}
y\left(p\right)= y^{*} \left(p\right)+ \omega\left(p\right),\ \  p=-N,\dots, N,
\end{equation}
where $\{\omega\left(p\right)\}_{p=-N}^{N}$ are the noise samples. In this paper, we assume that these samples are independent and identically distributed (i.i.d.) complex Gaussian samples of zero mean and variance $\sigma_{\omegav}^{2}$, i.e., $\{\omega\left(p\right)\}_{p=-N}^{N}\stackrel{i.i.d.}{\sim} \mathcal{CN}\left(0,\sigma_{\omegav}^{2}\right).$ 

To recover $R, c_{j}, \left(\tau_{j},f_{j}\right)$, and $s_{j}\left(l\right)$ from (\ref{eq: model noise}), it is clear that the order of the unknowns is $\mathcal{O}\left(LR\right)$, which is much greater than the number of samples $L$. Hence, the recovery problem is hopelessly ill-posed and can be filled-in randomly to obtain estimates that match the data. To mitigate this, we define $\sv_{j} \in \mathbb{C}^{L \times 1}$ such that $\sv_{j}:= [s_{j}\left(-N\right), \dots, s_{j}\left(N\right)]^{T}$, and we solve the problem under the assumption that $\{\sv_{j}\}_{j=1}^{R}$ lie in a known low-dimensional subspace spanned by the columns of a known matrix $\Dm\in \mathbb{C}^{L\times K}$ with $K \leq L$. Such assumption is applied in many existing works, e.g., \cite{yang2016super, chi2016guaranteed, ahmed2014blind, suliman2018blind}.

Based on the above discussion, we can write $\sv_{j}=\Dm\hv_{j}$ and $s_{j}\left(l\right)=\dv_{l}^{H}\hv_{j}$, where $\{\hv_{j}\}_{j=1}^{R}$ are \emph{unknown} orientation vectors while $\dv_{l} \in \mathbb{C}^{K \times 1}$ is the $l-$th column of $\Dm^{H}$, i.e, $\Dm:= \left[\dv_{-N}, \dots, \dv_{N}\right]^{H}$. Without loss of generality, we assume that $||\hv_{j}||_{2}=1$ for all $j$. Hence, recovering $\sv_{j}$ is equivalent to estimating $\hv_{j}$, and the order of the unknowns in the problem reduces to $\mathcal{O}\left(RK\right)$, which can be less than $L$ if $R, K \ll L$.

The random subspace assumption is applied in many existing works in the literature, e.g., \cite{yang2016super, chi2016guaranteed, ahmed2014blind, suliman2018blind}. The work in \cite{ahmed2014blind} shows that this assumption exists in applications such as image deblurring and in the framework of channel coding for transmitting a message over an unknown multipath channel. Moreover, \cite{yang2016super} illustrates that the random subspace assumption appears in super-resolution imagining. Finally, in multi-user communication systems \cite{luo2006low}, transmitters may send out a random signal for security and privacy reasons. In such a case, the transmitted signal can be represented in a known low-dimensional random subspace.

Now let $\rv_{j}:= [\tau_{j}, f_{j}]^{T}$ and define $\av \left(\rv_{j}\right) \in \mathbb{C}^{L^{2} \times 1}$ such that
\begin{align}
\label{eq: vec 1}
[\av\left(\rv_{j}\right)]_{\left((k,l),1\right)} = & D_{N}\left(\frac{l}{L}-\tau_{j}\right) D_{N}\left(\frac{k}{L}-f_{j}\right), \nonumber\\
& \ \ \  l,k = -N, \dots, N,
\end{align}
where $D_{N}\left(t\right)$ is the Dirichlet kernel defined by $D_{N}\left(t\right) := \frac{1}{L} \sum_{r=-N}^{N} e^{i2\pi t r}$. Starting from (\ref{eq: model sampled}), and upon using (\ref{eq: vec 1}) and setting $s_{j}\left(l\right)=\dv_{l}^{H}\hv_{j}$, we can write (see \cite[Appendix~A]{suliman2018blind})
\begin{align}
\label{eq: model 3}
y^{*}\left(p\right)= &\sum_{j=1}^{R} c_{j} \sum_{k,l=-N}^{N} [\av\left(\rv_{j}\right)]_{\left((k,l),1\right)} \dv^{H}_{\left(p-l\right)} \hv_{j} e^{i 2 \pi \frac{p k}{L}}.
\end{align}
Now, we intend to express (\ref{eq: model noise}) in a matrix-vector form. For that, we define $\widetilde{\Dm}_{p} \in \mathbb{C}^{L^{2} \times K}$ with $\ p=-N,\dots, N$ such as
\begin{equation}
\label{eq: matrix D}
[\widetilde{\Dm}_{p}]_{\left((k,l),1 \to K\right)} = e^{i 2 \pi \frac{p k}{L}} \dv_{\left(p-l\right)}^{H}, \ \ \ k,l = -N, \dots, N. 
\end{equation}
Based on (\ref{eq: model 3}) and (\ref{eq: matrix D}) we can rewrite (\ref{eq: model noise}) as
\begin{align}
\label{eq: final model}
y\left(p\right) &= \sum_{j=1}^{R} c_{j} \av\left(\rv_{j}\right)^{H} \widetilde{\Dm}_{p} \hv_{j} +\omega\left(p\right)\nonumber\\
& = \text{Tr}\left(\widetilde{\Dm}_{p} \Um_{\text{o}}\right)+\omega\left(p\right)= \left\langle \Um_{\text{o}}, \widetilde{\Dm}_{p}^{H} \right\rangle +\omega\left(p\right),
\end{align}
where
\begin{equation}
\label{eq: U matrix}
\Um_{\text{o}}=\sum_{j=1}^{R} c_{j} \hv_{j} \av\left(\rv_{j}\right)^{H}.
\end{equation}
Define the linear operator $\mathcal{X}: \mathbb{C}^{K \times L^{2}} \to \mathbb{C}^{L}$ as
\begin{equation}
\label{eq: operator}
[\mathcal{X}\left(\Um_{\text{o}}\right)]_{p}= \text{Tr}\left(\widetilde{\Dm}_{p}\Um_{\text{o}}\right), \ \ \ p=-N, \dots, N \nonumber
\end{equation}
and its adjoint $\mathcal{X}^{*}: \mathbb{C}^{L} \to \mathbb{C}^{K \times L^{2}} $ such that\begin{equation}
\mathcal{X}^{*}\left(\qv \right)= \sum_{p=-N}^{N} [\qv]_{p} \widetilde{\Dm}_{p}^{H}; \ \ \qv \in \mathbb{C}^{L \times 1}. \nonumber
\end{equation}
Then, $\yv =[y\left(-N\right), \dots, y\left(N\right)]^{T}$ can be expressed as
\begin{equation}
\label{eq: system in vector}
\yv = \mathcal{X}\left(\Um_{\text{o}}\right) + \omegav = \yv^{*}+ \omegav,
\end{equation}
where $ \omegav = [\omega\left(-N\right), \dots, \omega\left(N\right)]^{T}$ is the noise vector. Equation (\ref{eq: system in vector}) suggests that recovering the unknowns in this paper is a two-fold problem: denoising and parameters estimation using $\Um_{\text{o}}$. In practical systems, $R$ is very small, and $\Um_{\text{o}}$ can thus be viewed as sparse linear combinations of multiple matrices of dimension $K \times L$ in the following set of atoms
 \begin{equation}
\mathcal{A} = \left\lbrace \hv \av\left(\rv\right)^{H}: \ \rv \in [0,1]^{2}, ||\hv||_{2}=1, \hv \in \mathbb{C}^{K \times 1} \right\rbrace. \nonumber
\end{equation}
For a general atomic set $\mathcal{A}$, the atomic norm is defined as the gauge function associated with the convex hull of $\mathcal{A}$, i.e., $\text{conv}\left(\mathcal{A}\right)$, and is given by
\begin{align}
&||\Um||_{\mathcal{A}} = \inf \left\lbrace t>0: \Um \in t \ \text{conv}\left(\mathcal{A}\right) \right\rbrace \nonumber\\
&= \inf_{c_{j} \in \mathbb{C}, \rv_{j}\in [0,1]^{2}, ||\hv_{j}||_{2}=1} \left\lbrace \sum_{j} |c_{j}| : \Um = \sum_{j} c_{j} \hv_{j} \av\left(\rv_{j}\right)^{H} \right\rbrace. \nonumber
\end{align}
The dual of the atomic norm is given by 
\begin{equation}
\label{eq: dual atomic for}
||\Cm||_{\mathcal{A}}^{*} = \sup_{||\Um ||_{\mathcal{A}}\leq 1} \big\langle\Cm,\Um \big\rangle_{\mathbb{R}}  = \sup_{\rv \in [0,1]^{2}, ||\hv||_{2}=1} \left|\big\langle\Cm,\hv \av\left(\rv \right)^{H}\big\rangle_{\mathbb{R}}\right|.
\end{equation} 
Now let $\mu \in \mathbb{R}^{+}$ be an appropriately selected regularization parameter to be determined later. Then, to estimate $\mathcal{X}\left(\Um_{\text{o}}\right)$ from (\ref{eq: system in vector}), we consider solving the following regularized least-square atomic norm minimization problem
\begin{align}
\label{eq: reg problem}
\underset{{\Um} \in \mathbb{C}^{K \times L^2}}{\text{minimize}} \ \ \frac{1}{2} \left|\left|\yv -\mathcal{X}({\Um})\right|\right|_{2}^{2} + \mu \  ||{\Um}||_{\mathcal{A}}.
\end{align} 
Here, $\left|\left|\yv -\mathcal{X}({\Um})\right|\right|_{2}^{2}$ is the noise-controlling term, $||{\Um}||_{\mathcal{A}}$ is the sparsity-enforcing term, and $\mu$ rules the trade-off between them. In this paper, we analyze the MSE performance of (\ref{eq: reg problem}).

Solving (\ref{eq: reg problem}) may appear to be challenging since its primal variable is infinite-dimensional. An approach to solve this problem is to reformulate its dual as an SDP. This is discussed in detail in Section~\ref{sec: optimal con and problem solution}. For now, we show in Appendix~\ref{app: dual proof} that the dual equivalent of (\ref{eq: reg problem}) takes the form
\begin{align}
\label{eq: dual reg problem}
&\underset{\qv \in \mathbb{C}^{L \times 1}}{\text{maximize}} \ \langle\qv, \yv\rangle_{\mathbb{R}} - \frac{1}{2} ||\qv||_{2}^{2} \nonumber\\
 &\text{subject to} : ||\mathcal{X}^{*}\left(\qv\right) ||_{\mathcal{A}}^{*} \leq \mu.
\end{align} 
\section{Main Result}
\label{sec: main results theorem}
We start this section by highlighting the main assumptions in the paper. Then, we provide our main theorem that characterizes the MSE performance of (\ref{eq: reg problem}). 
\begin{assumption}\normalfont
\label{as 1}
We assume that the columns of $\Dm^{H}$, i.e., $\dv_{l} \in \mathbb{C}^{K \times 1}$, are independently drawn from any distribution with their entries being i.i.d. and satisfy the following conditions
\begin{align}
\label{eq: G assumption 1}
&\mathbb{E}[\dv_{l}]= \bm{0},\ \ \ \ \mathbb{E}[\dv_{l} \dv_{l}^{H}]= \Id_{\text{K}},\ \ \ \ l=-N, \dots, N, \\
\label{eq: G assumption 3}
&\mathbb{E}\left[\frac{\dv_{l} \dv_{l}^{H}}{\sum_{i=-N}^{N}||\dv_{i}||_{2}^{2}}\right]= \frac{1}{LK}\Id_{\text{K}}, \ \ l=-N, \dots, N.
\end{align}
\end{assumption}

\begin{assumption}\normalfont
\label{as 2}
({Concentration property}) We assume that the rows of $\Dm^{H}$, denoted in their column-form by $\hat{\dv}_{i} \in \mathbb{C}^{L \times 1}$; $i=1, \dots, K$, are $\widetilde{K}$-concentrated with $\widetilde{K} \geq 1$. That is, there exist two constants $C_{1}$ and $C_{2}$ such that for any 1-Lipschitz function $\varphi : \mathbb{C}^{K} \to \mathbb{R}$ and any $t > 0$ it holds that
\begin{equation}
\text{Pr}\left[ \left| \varphi\left(\hat{\dv}_{i}\right)  - \mathbb{E}\left[ \varphi\left(\hat{\dv}_{i}\right)\right] \right| \geq t \right] \leq C_{1} \exp\left( -C_{2}t^{2}/\widetilde{K}^{2}\right) \nonumber.
\end{equation}
\end{assumption}

\begin{assumption}\normalfont
\label{as 3}
The entries of $\hv_{j}$ are i.i.d. from a uniform distribution on the complex unit sphere with $||\hv_{j}||_{2}=1$.
\end{assumption}

\begin{assumption}\normalfont (Minimum separation)
\label{as 4}
The unknown shifts $\left(\tau_{j}, f_{j}\right) \in [0,1]^{2}, j=1, \dots, R$ satisfy the following separation
\begin{align}
\label{eq: seperation condition}
&\min_{ \forall j,j': j \neq j'} \max \left(|\tau_{j}-\tau_{j'}|,|f_{j}-f_{j'}|\right) \geq \frac{2.38}{N}, \nonumber\\
& \hspace{20pt} \forall [\tau_{j}, f_{j}]^{T}, [\tau_{j'}, f_{j'}]^{T} \in \{\rv_{1}, \dots, \rv_{R}\},
\end{align}
where $|a-b|$ is the wrap-around distance on the unit circle.
\end{assumption}

\subsection*{Remarks on the Assumptions:}

The assumption in (\ref{eq: G assumption 1}) is sometimes being referred to as isotropy condition in the literature. This assumption is first motivated by Cand\`es and Plan in their work on the RIPless theory of compressed sensing \cite{candes2011probabilistic}; then, it was imposed for subspace models in many related works. An example of vectors that satisfy this condition is the sensing vectors with independent components such as the Gaussian measurements ensemble and the binary measurements ensemble (which defines the single-pixel camera's sensing mechanism). Another example is vectors formulated by subsampling a tight or a continuous frame as in magnetic resonance imaging (MRI). Moreover, this assumption is also imposed on the problem of blind spikes deconvolution \cite{chi2016guaranteed}. On the other hand, we believe that (\ref{eq: G assumption 3}) is an artifact assumption for our proofs. A closely similar version to this assumption is also imposed in \cite{li2019atomic}. Examples where the vector entries satisfy (\ref{eq: G assumption 3}), is when they are drawn from a Rademacher or Gaussian distribution. 

The concentration property, on the other hand, is satisfied by many random vectors in practice. For example, if the entries of $\hat{\dv}_{i}$ are i.i.d. complex Gaussian, $\hat{\dv}_{i}$ is 1-concentrated. In contrast, if each element in $\hat{\dv}_{i}$ is upper bounded by a constant $C$, then $\hat{\dv}_{i}$ is a $C$-concentrated vector \cite[Theorem F.5]{tao2010random}. Thus, this property is, in fact, a more relaxed (general) assumption than the incoherence assumption imposed on the elements of
the low-dimensional subspace matrix in some existing super-resolution works \cite{yang2016super, chi2016guaranteed, ahmed2014blind, li2019atomic}.

It is worth mentioning that the random assumptions on $\Dm$ and $\hv_{j}$ do not appear to be crucial in practice and are doubtful to be artifacts for our proofs. This fact is also observed in closely similar works in the field, e.g., \cite{suliman2018blind,yang2016super, chi2016guaranteed}. By looking from a different viewpoint, the imposed randomness assumptions on $\Dm$ can be seen as a way to obtain random measurement results from $\Um_{\text{o}}$ based on (\ref{eq: operator}). As known, random measurements are crucial in the derivation of theoretical and empirical results \cite{candes2011probabilistic}. A future extension to this work should look at eliminating these assumptions upon modifying our proof techniques or applying a different proof strategy instead of the dual analysis of the atomic norm. In this paper, we will provide a single simulation experiment to show that our framework still works perfectly when these random assumptions are not satisfied.  

The minimum separation between the time-frequency shifts in Assumption~\ref{as 4} is essential to prevent the shifts from being too clumped together, which results in a severely ill-posed problem \cite{candes2014towards} \cite[Section~2.2]{fern2015superresolution}. To elaborate more, let us consider the more straightforward 1D non-blind super-resolution problem (such as the line spectral estimation problem), which can be obtained from (\ref{eq: model}) by assuming that $s_{j}\left(t\right)$ are known for all $j$ and that $\tau_{j}=0, \forall j$. Obviously, any separation condition necessary for this simpler problem is also necessary for our problem. The first point to raise here is that the work in \cite{moitra2015super} shows that when the separation between the frequencies is less than $2/N$, there exists a pair of spike signals with the same minimal separation, such that no estimator can distinguish them. On the other hand, assume that there are $R'$ elements with frequencies $f_{j}$ in an interval of length less than $\frac{2R'}{L}$. Then, estimating $\left(c_{j}, f_{j}\right)$ is a severely ill-posed problem when $R'$ is large \cite[Section 1.7]{candes2014towards}\cite[Theorem 1.1]{donoho1992superresolution}. Now, in the presence of a tiny amount of noise, stable recovery from $y\left(t\right)$ is impossible. Going back to our super-resolution problem, Assumption~\ref{as 4} allows us to have $0.42R'$ time-frequency shifts in the interval of length $\frac{2R'}{L}$, which is optimal up to the constant $0.5$. We stress that while separation is essential for our super-resolution problem, the form in (\ref{eq: seperation condition}) is a sufficient condition, and smaller separation is expected to be enough.

Now, we are ready to present our main theorem as follows: 
\begin{theorem}
\label{th: main result}
Consider the linear system in (\ref{eq: model}) and its sampled version in (\ref{eq: model sampled}) and assume that the unknown waveforms vectors can be written as $\sv_{j}=\Dm \hv_{j}$ where $\Dm$ satisfies Assumptions~\ref{as 1} and \ref{as 2} while $\hv_{j}$ follows Assumption~\ref{as 3}. Further, let the unknown shifts satisfy the minimum separation in Assumption~\ref{as 4}. Consider the noisy model $\yv = \mathcal{X}\left(\Um_{\text{o}}\right) + \omegav$ where the entries of $\omegav$ are i.i.d. complex Gaussian of zero mean and variance $\sigma_{\omegav}^{2}$. Then, an estimate signal $\hat{\yv}=\mathcal{X}(\widehat{\Um})$, obtained by solving (\ref{eq: reg problem}) with $\mu = 6 \lambda \sigma_{\omegav} ||\Dm||_{F} \sqrt{\log\left(N\right)}$ and $\lambda \geq 1$, satisfies
\begin{align}
\label{eq: denosing error result}
\frac{1}{L}||\yv^{*}\hspace{-1pt}-\hat{\yv}||_{2}^{2} \hspace{-1pt}\leq  \bar{C} \lambda^{2} \sigma_{\omegav}^{2} \frac{\sqrt{K^{3} R}}{L^{3/2}}  \log\left(N\right)\sqrt{\log ( \tilde{C}\left(K+1\right)N\hspace{-1pt})} 
\end{align}
with probability at least $1-\frac{C}{N}$,  provided that $L \geq \bar{C}_{1} R K \widetilde{K}^{4} \log^{2}\left(\frac{  \tilde{C}_{1} R^{2} K^{2} L^{3} }{\delta}\right)\log^{2}\left(\frac{ \tilde{C}_{1} (K+1)  L^{3} }{\delta}\right)$, where $C, \bar{C}, \tilde{C}, \bar{C}_{1}, \tilde{C}_{1}$ are numerical constants and $\delta >0$.
\end{theorem}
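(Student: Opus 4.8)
The plan is to leverage the convexity and the optimality of $\widehat{\Um}$ in \eqref{eq: reg problem} to reduce the denoising error to quantities that can be controlled independently, chiefly the dual atomic norm of the noise $\|\mathcal{X}^{*}(\omegav)\|_{\mathcal{A}}^{*}$. First I would write the optimality inequality $\tfrac12\|\yv-\mathcal{X}(\widehat{\Um})\|_{2}^{2}+\mu\|\widehat{\Um}\|_{\mathcal{A}}\le\tfrac12\|\yv-\mathcal{X}(\Um_{\text{o}})\|_{2}^{2}+\mu\|\Um_{\text{o}}\|_{\mathcal{A}}$, substitute $\yv=\mathcal{X}(\Um_{\text{o}})+\omegav$, and expand. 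Setting $\delta:=\hat{\yv}-\yv^{*}=\mathcal{X}(\widehat{\Um}-\Um_{\text{o}})$, the $\|\omegav\|_{2}^{2}$ terms cancel and I obtain the basic inequality
\begin{equation*}
\tfrac12\|\delta\|_{2}^{2}\le\langle\delta,\omegav\rangle_{\mathbb{R}}+\mu\big(\|\Um_{\text{o}}\|_{\mathcal{A}}-\|\widehat{\Um}\|_{\mathcal{A}}\big).
\end{equation*}
Using the adjoint identity $\langle\delta,\omegav\rangle_{\mathbb{R}}=\langle\widehat{\Um}-\Um_{\text{o}},\mathcal{X}^{*}(\omegav)\rangle_{\mathbb{R}}$ and the duality bound, the crude (slow-rate) route would stop here with $\|\delta\|_{2}^{2}\lesssim\mu\,\|\Um_{\text{o}}\|_{\mathcal{A}}$; but since the target error scales with $\sigma_{\omegav}^{2}$ and $\mu\propto\sigma_{\omegav}$, I must instead refine this using a dual certificate.

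The refinement rests on constructing a matrix $\Qm=\mathcal{X}^{*}(\qv)$ that is a subgradient of $\|\cdot\|_{\mathcal{A}}$ at $\Um_{\text{o}}$, i.e. $\langle\Qm,\Um_{\text{o}}\rangle_{\mathbb{R}}=\|\Um_{\text{o}}\|_{\mathcal{A}}$ with $\|\Qm\|_{\mathcal{A}}^{*}\le1$, whose associated 2D trigonometric vector polynomial interpolates the phases $c_{j}/|c_{j}|$ at the true shifts $\rv_{j}$ while its norm stays strictly below one off the active set. The minimum-separation Assumption~\ref{as 4} is exactly the hypothesis guaranteeing existence. The structural difficulty, flagged in the introduction, is that this polynomial cannot be assembled from a single shifted kernel as in the 1D work \cite{li2019atomic}; I would build it from several distinct kernels (the Dirichlet kernel $D_{N}$ together with its partial derivatives in the $\tau$ and $f$ directions) so that the value and both first-order derivative conditions can be matched simultaneously on the separated grid $\{\rv_{j}\}$. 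Feeding the subgradient inequality $\|\Um_{\text{o}}\|_{\mathcal{A}}-\|\widehat{\Um}\|_{\mathcal{A}}\le-\langle\qv,\delta\rangle_{\mathbb{R}}$ into the basic inequality yields
\begin{equation*}
\tfrac12\|\delta\|_{2}^{2}\le\langle\delta,\omegav-\mu\qv\rangle_{\mathbb{R}},
\end{equation*}
and projecting $\omegav-\mu\qv$ onto the low-dimensional tangent space spanned by the active atoms and their derivatives converts the right-hand side into a product of $\|\delta\|_{2}$ with a Gaussian term of size $\sigma_{\omegav}\sqrt{\mathrm{dof}}$, producing a bound quadratic in $\sigma_{\omegav}$.

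The probabilistic heart of the argument, and the step I expect to be hardest, is showing that $\mu=6\lambda\sigma_{\omegav}\|\Dm\|_{F}\sqrt{\log N}$ dominates $\|\mathcal{X}^{*}(\omegav)\|_{\mathcal{A}}^{*}$ with high probability. By \eqref{eq: dual atomic for} this is a supremum of $|\langle\mathcal{X}^{*}(\omegav),\hv\av(\rv)^{H}\rangle_{\mathbb{R}}|$ over the \emph{continuous} set $\rv\in[0,1]^{2}$, $\|\hv\|_{2}=1$. For each fixed $(\rv,\hv)$ this is a scalar complex Gaussian whose variance is $\sigma_{\omegav}^{2}$ times a quantity of order $\|\Dm\|_{F}^{2}$, so a pointwise tail bound is immediate; the work is to make it uniform. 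I would discretize $[0,1]^{2}$ on a grid of spacing $\sim 1/N$ and the unit sphere of $\mathbb{C}^{K}$ on an $\varepsilon$-net of cardinality $\sim(1+2/\varepsilon)^{2K}$, apply a Gaussian tail and a union bound over the resulting $\sim(K+1)N$-type net, and control the off-grid fluctuation through a Bernstein-type bound on the gradient of the polynomial. It is precisely here that the concentration property (Assumption~\ref{as 2}) and the isotropy/scaling conditions \eqref{eq: G assumption 1}--\eqref{eq: G assumption 3} enter, to concentrate the random matrices $\widetilde{\Dm}_{p}$ built from $\Dm$ so that the per-point variance and the Lipschitz constant behave as designed; the union bound generates the $\log N$ and $\sqrt{\log(\tilde{C}(K+1)N)}$ factors, and the requirement that the net argument close with the stated confidence is what forces the sample-complexity lower bound $L\ge\bar{C}_{1}RK\widetilde{K}^{4}\log^{2}(\cdots)\log^{2}(\cdots)$.

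Finally I would assemble the pieces on the high-probability event where the certificate exists and $\|\mathcal{X}^{*}(\omegav)\|_{\mathcal{A}}^{*}\le\mu$: combining the basic inequality, the subgradient bound, and the tangent-space projection gives $\|\delta\|_{2}\lesssim\sigma_{\omegav}\times(\text{geometry of the active set})$, hence an MSE quadratic in $\sigma_{\omegav}$. Tracking how $\|\Dm\|_{F}^{2}\asymp LK$, the number of atoms $R$, the subspace dimension $K$, and the net logarithms combine then reproduces the stated rate $\bar{C}\lambda^{2}\sigma_{\omegav}^{2}L^{-3/2}\sqrt{K^{3}R}\log(N)\sqrt{\log(\tilde{C}(K+1)N)}$, after which a union bound over all the failure events yields the overall probability $1-C/N$.
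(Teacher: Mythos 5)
Your overall architecture captures several correct ingredients — the optimality (basic) inequality, the role of a dual certificate built under the minimum separation, the choice of $\mu$ to dominate $\|\mathcal{X}^{*}(\omegav)\|_{\mathcal{A}}^{*}$ via discretization of $[0,1]^{2}$ plus a union bound (this matches the paper's Lemma in Appendix~C almost exactly, including where the $\log N$ factors come from) — but the central step of your argument has a genuine gap. You propose to pass from $\tfrac12\|\delta\|_{2}^{2}\le\langle\delta,\omegav-\mu\qv\rangle_{\mathbb{R}}$ to a fast rate by ``projecting $\omegav-\mu\qv$ onto the low-dimensional tangent space spanned by the active atoms and their derivatives.'' This does not go through here: the atomic norm over a \emph{continuous} dictionary is not decomposable, and nothing forces $\widehat{\Um}-\Um_{\text{o}}$ (hence $\delta$) to lie in, or near, the tangent space at $\Um_{\text{o}}$ — the estimate can place mass at arbitrary off-support locations $\rv\in[0,1]^{2}$. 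The cone/projection argument you are importing from LASSO-type analyses is exactly the step that fails for continuous super-resolution, and the paper's proof is structured around circumventing it: it writes $\|\ev\|_{2}^{2}\le|\iint\nuv(\rv)^{H}\phiv(\rv)\,d\rv|$ for the difference \emph{measure} $\nuv=\gv-\hat{\gv}$, splits $[0,1]^{2}$ into $\Omega_{\text{close}}$ and $\Omega_{\text{far}}$, and controls the resulting weighted total-variation integrals of $\nuv$ (the terms $T_{0},\dots,T_{3}$) using curvature bounds on the certificate ($\|\fv(\rv)\|_{2}\le1-C_{2}$ far from the support and $\le1-C_{2}^{*}N^{2}(\cdot)^{2}$ near it) together with a two-sided comparison of $\|\mathcal{P}_{\mathcal{R}}(\nuv)\|_{\text{TV}}$ and $\|\mathcal{P}_{\mathcal{R}^{c}}(\nuv)\|_{\text{TV}}$ (Lemma~\ref{lemma: bound for 0 1 term 4} and Proposition~\ref{pro: upper bound on the diff}). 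None of this machinery is present or replaceable by your projection step.

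A second, related omission: the paper needs \emph{three} distinct dual certificates, not one. Besides the certificate $\fv$ with $\fv(\rv_{j})=\text{sign}(c_{j})\hv_{j}$ and vanishing first derivatives, it constructs $\fv_{1}$ and $\fv_{2}$ satisfying $\fv_{i}(\rv_{j})=\bm{0}$ with prescribed \emph{first-order} derivatives equal to $\text{sign}(c_{j})\hv_{j}$ in the $\tau$ and $f$ directions respectively (Theorems~\ref{th: main dual cert 1} and \ref{th: main dual cert 2}); these are indispensable for bounding the first-moment terms $T_{1}$ and $T_{2}$ of $\nuv$ on $\Omega_{\text{close}}(j)$. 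Your remark about ``multiple kernels'' describes the internal construction of a single certificate (value plus derivative constraints at the nodes), which is a different matter. Finally, the integrals $|\iint_{0}^{1}\nuv^{H}\fv_{i}\,d\rv|$ are themselves random and are bounded in the paper by rewriting them through $\tilde{\fv}_{i}=\mathcal{X}^{*}(\mathcal{X}\mathcal{X}^{*})^{-1}(\qv_{i})\av(\rv)$ and applying the matrix Bernstein inequality — this is where the $\sqrt{R/(L^{3}K)}$ factor, and hence the final $L^{-3/2}\sqrt{K^{3}R}$ scaling, actually originates; your sketch attributes that scaling to the noise--certificate interaction, which is not where it comes from in a workable proof.
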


\subsection*{Remarks on Theorem~\ref{th: main result}:}
Theorem~\ref{th: main result} shows that the MSE scales linearly with the noise variance and the square-root of the number of unknown shifts. The MSE is also directly proportional to the subspace dimension at a scale of $\mathcal{O}\left(\sqrt{K^{3} \log\left(K+1\right)}\right)$. Moreover, Theorem~\ref{th: main result} shows that the MSE is inversely proportional to the number of samples $L$. On the other hand, and since the choice of the regularization parameter $\mu$ is crucial in the framework's performance, Theorem~\ref{th: main result} shows that $\mu$ is directly proportional to the noise level, the Frobenius norm of the subspace matrix, and the square-root of $\log\left(N\right)$.

Furthermore, the bound on $L$, which is initially derived in \cite{suliman2018blind}, shows that the more concentrated are the rows of $\Dm^{H}$, the fewer number of samples are needed. For a given $\tilde{K}$, the bound states that having $L = \mathcal{O}\left(RK\right)$, which is the same as the number of degrees of freedom in the problem, is a sufficient condition for the MSE bound to be achieved. Our simulations show that the theorem is still valid when this bound is not satisfied. The work on 1D blind super-resolution in \cite{li2019atomic} has a bound of $\mathcal{O}\left(R^{2}K\right)$ on the sample complexity, but without the random assumption on $\hv$. It will be interesting to see how dropping the random assumption on $\hv$, or some of our other assumptions will affect our sample complexity bound in the future extension of this work.

The detailed proof of Theorem~\ref{th: main result}, along with the choice of $\mu$ is provided in Section~\ref{sec: proof on theorem}.

\section{Optimality Conditions and Problem Solution}
\label{sec: optimal con and problem solution}
We start by establishing general properties about (\ref{eq: reg problem}) and its dual certificate (\ref{eq: dual reg problem}). First, the following lemma provides the optimality conditions for $\widehat{\Um}$ as the solution of (\ref{eq: reg problem}).
\begin{lemma}
\label{lemma: suff cond lemma}
The matrix $\widehat{\Um}$ is the solution of (\ref{eq: reg problem}) if and only if
\begin{equation}
\label{eq: suff cond 1}
\left|\left|\mathcal{X}^{*}\left(\yv-\mathcal{X}(\widehat{\Um})\right)\right|\right|_{\mathcal{A}} ^{*}\leq \mu
\end{equation}
and
\begin{equation}
\label{eq: suff cond 2}
\left\langle\mathcal{X}^{*}\left(\yv-\mathcal{X}(\widehat{\Um})\right),\widehat{\Um}\right\rangle = \mu ||\widehat{\Um}||_{\mathcal{A}}.
\end{equation}
\end{lemma}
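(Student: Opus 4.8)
The plan is to treat (\ref{eq: reg problem}) as an unconstrained convex program over the real vector space underlying $\mathbb{C}^{K\times L^2}$ and to characterize its minimizer through first-order (subdifferential) optimality. I would write the objective as $f(\Um) = g(\Um) + h(\Um)$, where $g(\Um) = \tfrac12\|\yv - \mathcal{X}(\Um)\|_2^2$ is smooth and convex and $h(\Um) = \mu\|\Um\|_{\mathcal{A}}$ is convex but nonsmooth. Since $g$ is everywhere differentiable, the sum rule for subdifferentials applies without any qualification, so $\widehat{\Um}$ is a global minimizer if and only if $0 \in \partial f(\widehat{\Um}) = \nabla g(\widehat{\Um}) + \mu\,\partial\|\widehat{\Um}\|_{\mathcal{A}}$. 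It is precisely the convexity of $f$ that makes stationarity both necessary and sufficient for global optimality, which is what delivers the ``if and only if'' in the statement.

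The first step is to compute $\nabla g$. Using the adjoint relation $\langle \mathcal{X}(\Um), \qv\rangle_{\mathbb{R}} = \langle \Um, \mathcal{X}^*(\qv)\rangle_{\mathbb{R}}$ and differentiating, one finds $\nabla g(\Um) = -\,\mathcal{X}^*(\yv - \mathcal{X}(\Um))$, so the optimality inclusion reads
\begin{equation}
\mathcal{X}^*\big(\yv - \mathcal{X}(\widehat{\Um})\big) \in \mu\,\partial\|\widehat{\Um}\|_{\mathcal{A}}. \nonumber
\end{equation}
It then remains only to unpack the subdifferential of the atomic norm. The key tool is the standard description of the subdifferential of any norm in terms of its dual norm: for a norm $\|\cdot\|$ with dual $\|\cdot\|^*$,
\begin{equation}
\partial\|\Xm\| = \big\{ \Gm : \|\Gm\|^* \leq 1, \ \langle \Gm, \Xm\rangle_{\mathbb{R}} = \|\Xm\| \big\}. \nonumber
\end{equation}
I would include the two-line justification: evaluating the subgradient inequality $\|\Zm\| \geq \|\Xm\| + \langle\Gm, \Zm - \Xm\rangle_{\mathbb{R}}$ at $\Zm = 0$ and at $\Zm = 2\Xm$ forces $\langle\Gm,\Xm\rangle_{\mathbb{R}} = \|\Xm\|$, after which the inequality collapses to $\langle\Gm,\Zm\rangle_{\mathbb{R}} \leq \|\Zm\|$ for all $\Zm$, i.e. $\|\Gm\|^* \leq 1$; the converse is immediate from the same two facts. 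Applying this with $\Gm = \tfrac{1}{\mu}\mathcal{X}^*(\yv - \mathcal{X}(\widehat{\Um}))$ and $\Xm = \widehat{\Um}$, the membership above is equivalent to the pair $\|\mathcal{X}^*(\yv - \mathcal{X}(\widehat{\Um}))\|_{\mathcal{A}}^* \leq \mu$ and $\langle \mathcal{X}^*(\yv - \mathcal{X}(\widehat{\Um})), \widehat{\Um}\rangle_{\mathbb{R}} = \mu\|\widehat{\Um}\|_{\mathcal{A}}$, which are exactly (\ref{eq: suff cond 1}) and (\ref{eq: suff cond 2}).

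This computation is routine, and the only points demanding care are bookkeeping rather than genuine obstacles. First, because the variable is complex, the entire argument must be carried out with the real inner product $\langle\cdot,\cdot\rangle_{\mathbb{R}}$ (identifying $\mathbb{C}^{K\times L^2}$ with $\mathbb{R}^{2KL^2}$), and one must verify that the dual pairing defining $\|\cdot\|_{\mathcal{A}}^*$ in (\ref{eq: dual atomic for}) is consistent with this identification, which it is by construction. Second, I would confirm that $\|\cdot\|_{\mathcal{A}}$ is genuinely a norm rather than merely a gauge, so that the subdifferential formula applies: this holds because the atomic set $\mathcal{A}$ is balanced, since multiplying $\hv$ by a unit-modulus phase keeps $\hv\av(\rv)^H \in \mathcal{A}$, so that $\text{conv}(\mathcal{A})$ is symmetric and its gauge is a norm. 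With these two conventions fixed, the extraction of the two stated conditions is direct, and the equivalences above are reversible at every stage, establishing both directions of the lemma simultaneously.
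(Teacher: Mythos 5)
Your proof is correct, and it reaches the two conditions by a somewhat different (more packaged) route than the paper. The paper argues from first principles: it perturbs the candidate as $\widehat{\Um}+t(\Um-\widehat{\Um})$, uses convexity of $\|\cdot\|_{\mathcal{A}}$ to bound the norm increment, divides by $2t$ and sends $t\to 0$ to obtain the variational inequality $\mu\left(\|\Um\|_{\mathcal{A}}-\|\widehat{\Um}\|_{\mathcal{A}}\right)\geq\langle\yv-\mathcal{X}(\widehat{\Um}),\mathcal{X}(\Um-\widehat{\Um})\rangle$ for all $\Um$, and then extracts (\ref{eq: suff cond 1}) and (\ref{eq: suff cond 2}) from the Fenchel-type identity that $\inf_{\Um}\left[\mu\|\Um\|_{\mathcal{A}}-\langle\Cm,\Um\rangle_{\mathbb{R}}\right]$ equals $0$ when $\|\Cm\|_{\mathcal{A}}^{*}\leq\mu$ and $-\infty$ otherwise. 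You instead invoke the subdifferential sum rule and the standard description $\partial\|\Xm\|=\{\Gm:\|\Gm\|^{*}\leq 1,\ \langle\Gm,\Xm\rangle_{\mathbb{R}}=\|\Xm\|\}$; your two-line derivation of that description is exactly the directional-derivative computation the paper performs inline, so the underlying mathematics coincides. What your version buys is brevity and an explicit, symmetric treatment of both directions of the equivalence (the paper writes out only the necessity direction in detail); what the paper's version buys is self-containedness, avoiding any appeal to subdifferential calculus. One small point worth tightening: balancedness of $\mathcal{A}$ gives symmetry of the gauge but not finiteness; to know $\|\cdot\|_{\mathcal{A}}$ is a genuine (finite-valued) norm you should also note that the atoms $\hv\av(\rv)^{H}$ span $\mathbb{C}^{K\times L^{2}}$, which holds here since the vectors $\av(\rv)$, being tensor products of Dirichlet-kernel vectors, span $\mathbb{C}^{L^{2}}$ as $\rv$ ranges over $[0,1]^{2}$. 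This is a cosmetic gap only; the conclusion and the argument stand.
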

The proof of Lemma~\ref{lemma: suff cond lemma} is provided in Appendix~\ref{app: suff cond}.

To obtain $\mathcal{X}(\widehat{\Um})$ using (\ref{eq: dual reg problem}), it is clear that since the dual objective function is a strong concave function, (\ref{eq: dual reg problem}) admits a unique solution $\qv$. Then, based on Lemma~\ref{lemma: suff cond lemma} and the strong duality between (\ref{eq: reg problem}) and (\ref{eq: dual reg problem}) we can conclude that
\begin{equation}
\yv = \mathcal{X}(\widehat{\Um})+{\qv}  \nonumber
\end{equation}
which makes $\mathcal{X}(\widehat{\Um})$ directly obtainable from $\qv$.

In the remaining part of this section, we discuss how to estimate the unknowns in (\ref{eq: model}) from $\widehat{\Um}$, and we also show how to solve (\ref{eq: dual reg problem}) numerically. We leave addressing the theoretical performance of the algorithm on estimating the parameters for future work. Starting from the constraint of (\ref{eq: dual reg problem}), and by setting $\widetilde{\mathcal{X}}^{*}\left(\qv\right)=\frac{1}{\mu} \mathcal{X}^{*}\left(\qv\right)$, we can write based on (\ref{eq: dual atomic for})
\begin{align}
||\widetilde{\mathcal{X}}^{*}\left(\qv\right) ||_{\mathcal{A}}^{*}&= \sup_{\rv \in [0,1]^{2}, ||\hv||_{2}=1} \big| \big\langle \hv, \widetilde{\mathcal{X}}^{*}\left(\qv\right)  \av\left(\rv\right) \rangle\big| \nonumber\\
&= \sup_{\rv \in [0,1]^{2}} || \widetilde{\mathcal{X}}^{*}\left(\qv\right)  \av\left(\rv\right)||_{2} \leq 1. \nonumber
\end{align}
Therefore, the constraint in (\ref{eq: dual reg problem}) is equivalent to the fact that the norm of the 2D trigonometric vector polynomial $\fv\left(\rv\right)=\widetilde{\mathcal{X}}^{*}\left(\qv\right) \av\left(\rv\right)$ is bounded by 1. Such polynomial is formulated and discussed in detail in \cite[Section~VI]{suliman2018blind}. The next proposition, which is given in \cite[Proposition~1]{suliman2018blind}, and is a consequence of the strong duality between (\ref{eq: reg problem}) and (\ref{eq: dual reg problem}), provides an approach for identifying the unknown 2D shifts in the support of the solution $\widehat{\Um}$ upon using the dual solution.
\begin{proposition}
\label{pro: main pro}
\cite[Proposition~1]{suliman2018blind}
Let $\mathcal{R}= \left\lbrace \rv_{j}\right\rbrace_{j=1}^{R}$ and refer to the primal-dual solution pair by $(\widehat{\Um},{\qv})$. Then, $\widehat{\Um}$ is the unique optimal solution of (\ref{eq: reg problem}) if
\begin{enumerate}
\item There exists a 2D trigonometric vector polynomial $\fv\left(\rv\right)=\widetilde{\mathcal{X}}^{*}\left({\qv}\right) \av\left(\rv\right)$ such that: 
\begin{equation}
\label{eq: hold assump 1}
\fv\left(\rv_{j}\right)= \text{sign}\left(c_{j}\right) \hv_{j}, \ \ \forall \rv_{j} \in \mathcal{R}
\end{equation}
\begin{equation}
\label{eq: hold assump 2}
||\fv\left(\rv\right)||_{2} < 1 , \ \ \forall \rv \in [0,1]^{2}\setminus \mathcal{R},
\end{equation}
where $\text{sign}\left(c_{j}\right)= \frac{c_{j}}{|c_{j}|}$.

\item $\left\lbrace\begin{bmatrix}\av\left(\rv_{j}\right)^{H} \widetilde{\Dm}_{-N} \\  \vdots \\ \av\left(\rv_{j}\right)^{H} \widetilde{\Dm}_{N} \end{bmatrix}\right\rbrace_{j=1}^{R}$ is a linearly independent set.
\end{enumerate}
\end{proposition}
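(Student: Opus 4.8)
The plan is to treat this as a textbook dual-certificate argument built on strong duality between (\ref{eq: reg problem}) and (\ref{eq: dual reg problem}) and the optimality conditions of Lemma~\ref{lemma: suff cond lemma}. Throughout I would write $\Cm := \mathcal{X}^{*}(\qv)$, so that $\fv(\rv)=\widetilde{\mathcal{X}}^{*}(\qv)\av(\rv)=\frac{1}{\mu}\Cm\av(\rv)$, and I would use the two identities that follow from (\ref{eq: dual atomic for}): namely $\|\Cm\|_{\mathcal{A}}^{*}=\sup_{\rv}\|\Cm\av(\rv)\|_{2}$ and $\langle\Cm,\hv\av(\rv)^{H}\rangle_{\mathbb{R}}=\mathrm{Re}\,(\hv^{H}\Cm\av(\rv))$. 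The optimization-theoretic backbone is the fixed-point relation $\yv=\mathcal{X}(\widehat{\Um})+\qv$ already noted after Lemma~\ref{lemma: suff cond lemma}, which identifies the dual variable $\qv$ with $\yv-\mathcal{X}(\widehat{\Um})$.

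First I would prove optimality. Condition~1 supplies exactly the two requirements of Lemma~\ref{lemma: suff cond lemma}. The bound (\ref{eq: hold assump 2}), together with $\|\fv(\rv_{j})\|_{2}=\|\hv_{j}\|_{2}=1$ on $\mathcal{R}$, gives $\sup_{\rv}\|\fv(\rv)\|_{2}\le 1$, i.e. $\|\Cm\|_{\mathcal{A}}^{*}\le\mu$, which is the dual-feasibility condition (\ref{eq: suff cond 1}). For the complementary-slackness condition (\ref{eq: suff cond 2}), the interpolation identity (\ref{eq: hold assump 1}) reads $\Cm\av(\rv_{j})=\mu\,\mathrm{sign}(c_{j})\hv_{j}$, hence $\hv_{j}^{H}\Cm\av(\rv_{j})=\mu\,\mathrm{sign}(c_{j})$; expanding $\langle\Cm,\widehat{\Um}\rangle_{\mathbb{R}}$ against the atomic decomposition $\widehat{\Um}=\sum_{j}\hat c_{j}\hv_{j}\av(\rv_{j})^{H}$ (with $\mathrm{sign}(\hat c_{j})=\mathrm{sign}(c_{j})$) then collapses to $\mu\sum_{j}|\hat c_{j}|=\mu\|\widehat{\Um}\|_{\mathcal{A}}$. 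By Lemma~\ref{lemma: suff cond lemma} this certifies that $\widehat{\Um}$ is a minimizer.

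Next I would upgrade ``a minimizer'' to ``the unique minimizer.'' Since the dual objective of (\ref{eq: dual reg problem}) is strongly concave, its maximizer $\qv$ is unique, and by strong duality \emph{every} primal optimum $\widetilde{\Um}$ obeys $\yv=\mathcal{X}(\widetilde{\Um})+\qv$ with the \emph{same} $\qv$, hence the same complementary-slackness identity $\langle\Cm,\widetilde{\Um}\rangle_{\mathbb{R}}=\mu\|\widetilde{\Um}\|_{\mathcal{A}}$ and the same certificate $\fv$. Writing an optimal decomposition $\widetilde{\Um}=\sum_{k}\tilde c_{k}\tilde\hv_{k}\av(\tilde\rv_{k})^{H}$, equality in complementary slackness forces $\mathrm{Re}\,(\tilde\hv_{k}^{H}\fv(\tilde\rv_{k}))=1$ for each active atom; since $\|\tilde\hv_{k}\|_{2}=1$ and $\|\fv(\rv)\|_{2}<1$ off $\mathcal{R}$ by (\ref{eq: hold assump 2}), the Cauchy--Schwarz equality case pins every $\tilde\rv_{k}$ to $\mathcal{R}$ and every $\tilde\hv_{k}$ to $\mathrm{sign}(c_{j})\hv_{j}$. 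Thus $\widetilde{\Um}=\sum_{j}\tilde c_{j}\hv_{j}\av(\rv_{j})^{H}$ shares the support and the directions of $\widehat{\Um}$. Subtracting the two fixed-point relations gives $\mathcal{X}(\widehat{\Um}-\widetilde{\Um})=0$, which by (\ref{eq: matrix D}) and $[\mathcal{X}(\gv\av(\rv_{j})^{H})]_{p}=\av(\rv_{j})^{H}\widetilde{\Dm}_{p}\gv$ becomes $\sum_{j}(\hat c_{j}-\tilde c_{j})\,M_{j}\hv_{j}=0$, where $M_{j}$ is the stacked matrix of condition~2. The linear independence assumed there yields injectivity of $\mathcal{X}$ on the model subspace, forcing $\hat c_{j}=\tilde c_{j}$ for all $j$ and hence $\widetilde{\Um}=\widehat{\Um}$.

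The hard part is not the algebra above but two supporting points. The first is the careful bookkeeping of complex phases and the real inner product when turning (\ref{eq: hold assump 1}) into the exact equality (\ref{eq: suff cond 2}), so that the sign alignment $\mathrm{sign}(\hat c_{j})=\mathrm{sign}(c_{j})$ is genuinely forced rather than assumed. The second, and more delicate, is the uniqueness step: one must rigorously exclude optimal solutions whose atoms sit off $\mathcal{R}$ or point in directions other than $\hv_{j}$ — precisely where the strict inequality (\ref{eq: hold assump 2}) and the Cauchy--Schwarz equality case carry the argument — and then convert condition~2 into the injectivity of $\mathcal{X}$ restricted to $\mathrm{span}\{\hv_{j}\av(\rv_{j})^{H}\}$ that pins the surviving coefficients. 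I would flag that the truly load-bearing hypothesis is the \emph{existence} of the interpolating vector polynomial $\fv$; constructing it (deferred to the main proof in Section~\ref{sec: proof on theorem}) is the real crux, whereas the present statement is the comparatively routine verification that such a certificate implies both optimality and uniqueness.
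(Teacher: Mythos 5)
The paper never proves Proposition~\ref{pro: main pro}: it is imported verbatim from \cite[Proposition~1]{suliman2018blind}, so there is no in-paper proof to compare against. Your dual-certificate argument is the standard one that underlies the cited result, and it is consistent with the machinery this paper does supply (Lemma~\ref{lemma: suff cond lemma}, the fixed-point relation $\yv=\mathcal{X}(\widehat{\Um})+\qv$, and the identity $\|\mathcal{X}^{*}(\qv)\|_{\mathcal{A}}^{*}=\sup_{\rv}\|\mathcal{X}^{*}(\qv)\av(\rv)\|_{2}$); I see no substantive gap in either the optimality verification or the uniqueness step. One point worth tightening: in the final step you need linear independence of the \emph{vectors} $M_{j}\hv_{j}=\mathcal{X}\bigl(\hv_{j}\av(\rv_{j})^{H}\bigr)$, which follows from condition~2 only under the reading that the concatenation $[M_{1}\,|\,\cdots\,|\,M_{R}]$ has full column rank (the reading actually used in \cite{suliman2018blind}); linear independence of the matrices $M_{j}$ as elements of $\mathbb{C}^{L\times K}$ would not by itself suffice. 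A second, routine technicality you gloss over is that distributing equality in complementary slackness atom-by-atom presupposes an atomic-norm-achieving decomposition of each optimum, which holds here by compactness of the atomic set.
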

Based on Proposition~\ref{pro: main pro}, and once we obtain ${\qv}$ by solving (\ref{eq: dual reg problem}), we can formulate $\fv\left(\rv\right)$ as a function of $\rv$ and then estimate $\mathcal{R}$ by either computing the roots of the polynomial $1-||\fv\left(\rv\right)||_{2}^{2}$ on the unit circle or by discretizing the domain $[0,1]^{2}$ on a fine grid and then locating $\hat{\rv}_{j}$ at which $||\fv\left(\hat{\rv}_{j}\right)||_{2} = 1$ (based on Proposition~\ref{pro: main pro} and the fact that $||\hv_{j}||_{2}=1$) (see the discussion in \cite[Section 4]{candes2014towards}). In this paper, we use the latter approach. To estimate ${c}_{j}{\hv}_{j}$, we formulated an overdetermined linear system based on $\yv = \mathcal{X}(\widehat{\Um})$ and then we solve it using the least-square algorithm.

Finally, we address the question of how we can solve (\ref{eq: dual reg problem}). For that, we follow the path of obtaining an SDP relaxation for it using the result in \cite{xu2014precise} as in \cite[Section~IV-C]{suliman2018blind}. First, we define the matrix $\widehat{\Qm} \in \mathbb{C}^{K \times L^2}$ such that 
\begin{equation}
\left[\widehat{\Qm}\right]_{(i,(p,k))}  =  \hspace{-2pt}\left[\frac{1}{\mu L}\qv\left(p\right) e^{i2\pi \frac{k p}{L}} \hspace{-5pt}\sum_{l=-N}^{N} \hspace{-3pt}\dv_{l}e^{-i2\pi \frac{k l}{L}}\right]_{i}\hspace{-2.5pt}, \hspace{-2pt}i=1,\dots, K. \nonumber
\end{equation}
Then, we can show after some manipulations that the equivalent SDP relaxation of (\ref{eq: dual reg problem}) is given by \cite[Section~IV-C]{suliman2018blind}
\begin{align}
\label{eq: dual of the dual}
&\underset{\qv, \Qm\succeq \bm{0}}{\text{maximize}} \ \langle\qv, \yv\rangle_{\mathbb{R}} - \frac{1}{2} ||\qv||_{2}^{2} \ \ \ \text{subject to} :\nonumber\\
 & \begin{bmatrix} \Qm  & \widehat{\Qm}^{H}  \\ \widehat{\Qm} &  \Id_{\text{K} \times \text{K} }  \end{bmatrix} \succeq \bm{0}, \hspace{2pt} \text{Tr}\left(	\left( \widetilde{\Thetam}_{\tilde{l}} \otimes \widetilde{\Thetam}_{\tilde{k}}\right)\Qm\right) =  \delta_{\tilde{l},\tilde{k}},
\end{align}
where $-(L-1) \leq \tilde{l},\tilde{k} \leq (L-1) $ while $\widetilde{\Thetam}_{i}$ is $L \times  L$ Toeplitz matrix with ones on its $i$-th diagonal and zeros elsewhere. Finally, $\delta_{\tilde{l},\tilde{k}}$ is the Dirac function, i.e., $\delta_{\tilde{l},\tilde{k}} =1$ iff $\tilde{l}=\tilde{k}=0$.  
\subsection*{Remarks on (\ref{eq: dual of the dual}):}
The optimization in (\ref{eq: dual of the dual}) is the equivalent SDP relaxation of (\ref{eq: dual reg problem}) upon applying the result in \cite{xu2014precise}. The problem can then be solved to obtain $\qv$ using any SDP solver, such as CVX and Yalmip. As \cite{xu2014precise} shows, this relaxation comes from the fact that the matrices that are used to express the dual constraint of (\ref{eq: dual reg problem}) as in (\ref{eq: dual of the dual}) are, generally speaking, of unspecified dimensions, and an approximation for their dimensions is required. This is opposed to the 1D case, as in \cite{tang2013compressed,candes2014towards, li2019atomic}, where the dimension of $\Qm$ is known to be $L \times L$. In contrast, the dimension of $\Qm$ in our case is not precisely known and may, in practice, be greater than the minimum size of $L^{2} \times L^{2}$. Therefore, fixing the size of $\Qm$ at $L^{2} \times L^{2}$ yields a relaxation to the constraint of (\ref{eq: dual reg problem}). Using a larger value than $L$ will lead, in theory, to a better approximation to the problem. However, using the minimum value $L$ is known to yield the optimal solution in practice. This fact is observed and discussed in detail in \cite{heckel2016super, dumitrescu2007positive}.

The theory behind this relaxation is discussed in \cite{xu2014precise} and in further detail in \cite[Corollary. 4.25]{dumitrescu2007positive}. Hence, we will only highlight the main idea behind it here. First, consider a symmetric $d$-variate trigonometric polynomial of degree $\mv \in \mathbb{N}^{d}$ in the form
\begin{equation}
\label{eq: poly extra info}
R\left(\thetav\right) = \sum_{\pv = -\mv}^{\mv} R_{\pv} e^{i \pv^{T} \thetav}, \ \ \thetav \in [0, 2 \pi)^{d},
\end{equation}
where the index $\pv$ takes values such that $-\mv \leq \pv \leq \mv$. Note that $d=2$ in our case. It is shown in \cite{dumitrescu2007positive}, \cite[Section 5]{dritschel2004factorization} that a polynomial in the form as in (\ref{eq: poly extra info}) can be written as a sum-of-squares in the form
\begin{equation}
\label{eq: poly extra info 2}
R\left(\thetav\right)  = \sum_{l=1}^{\xi}\left| W_{l}\left(\thetav\right)\right|^{2},
\end{equation}
where the degrees of the polynomials $W_{l}\left(\thetav\right)$ can be arbitrarily high and must be bounded to a relaxation vector $\nv = \left[n_{1}, n_{2}, \dots, n_{d}\right]^{T}$ in practical implementation. Hence, with the sum-of-squares relaxation degree vector $\nv$ in place, the sum in (\ref{eq: poly extra info 2}) is equivalent to that in (\ref{eq: poly extra info}) up to an error factor.

Following that, it is shown in \cite{xu2014precise, dumitrescu2007positive, mclean2002spectral, dumitrescu2004multidimensional} that a polynomial in the form as in (\ref{eq: poly extra info}) can be written as a sum-of-squares of degree $\nv$ if and only if there exists a matrix $\Hm \succeq \bm{0}$ of dimension $N \times N$ associated with $R\left(\thetav\right)$ such that
\begin{equation}
\label{eq: parameters for the}
R_{\pv} = \text{Tr}\left(\widetilde{\Thetam}_{\pv} \Hm \right), \ \ \ \widetilde{\Thetam}_{\pv} =  \widetilde{\Thetam}_{p_{d}} \otimes \dots \otimes \widetilde{\Thetam}_{p_{1}},
\end{equation}
where $N = \Pi_{i=1}^{d} \left(n_{i}+1\right)$ is based on the relaxation degree vector $\nv$. Therefore, $N \times N$ is not the actual dimension of $\Hm$ (i.e., must be larger) and is, in fact, a relaxed dimension. Finally, note that the parameters in (\ref{eq: parameters for the}) are applied to convert optimization problems with positive polynomials into SDP forms. Since we must replace the positive polynomials with sum-of-squares of a bounded degree to do such transform, the obtained SDP problems are relaxed versions of the original ones.
 
Going back to our formulation, and as we have shown at the beginning of this section, the bound on the dual of the atomic norm is equivalent to the fact that the norm of a 2D trigonometric vector polynomial $\fv\left(\rv\right)$ is bounded by one. This polynomial is shown in \cite[Equation (26)]{suliman2018blind} to be formulated as a sum of two positive trigonometric vector polynomials of degree $(L,L)$ in the form
\begin{eqnarray}
\label{eq: simple polynomail denoise}
\fv\left(\rv\right)= \hspace{-4pt}\sum_{p,k=-N}^{N}\hspace{-2pt} \left(\frac{1}{\mu L}[\qv]_{p} \hspace{-2pt}\sum_{l=-N}^{N} \hspace{-2pt}\dv_{l}e^{\frac{i2\pi k \left(p-l\right)}{L}}\right) e^{-i2\pi \left(k\tau+pf\right)}. 
\end{eqnarray}
Based on the above discussion, we know that the sum-of-squares representation of our trigonometric polynomial in (\ref{eq: simple polynomail denoise}) possibly involves factors of degree larger than $(L,L)$ and that the constraint of (\ref{eq: dual of the dual}) does not precisely characterize the constraint of (\ref{eq: dual reg problem}); instead, it only provides a sufficient condition.

On the other hand, and as (\ref{eq: dual of the dual}) shows, solving the denoising problem as well as estimating the time-frequency shifts using the dual of the atomic norm minimization problem involves dealing with optimization variables with dimensions $L^2 \times L^2$. Now, any standard solver that solves this optimization problem will have a computational cost of order $\mathcal{O}\left(L^{6}\right)$ \cite[Section 6.4]{nesterov1994interior}, \cite{van1983matrix}. Since this cost increases dramatically for larger $L$, this makes the problem infeasible for real-world applications. This fact limits the framework's applicability and prohibits us from evaluating the algorithm's performance at large values of $L$. Addressing this issue could include applying alternative optimization techniques as in \cite{rao2015forward}, applying the alternating direction method of multipliers (ADMM) to solve (\ref{eq: dual of the dual}) as in \cite{bhaskar2013atomic}, or using a discrete approximation of the atomic norm as in \cite{tang2013sparse}. We leave tackling this issue for the future extension of this work.

\section{Numerical Implementation}
\label{sec: results}
In this section, we validate the theoretical findings of this paper by using simulations. In all experiments, and unless otherwise stated, we let $\hv_{j}$ satisfy Assumption~\ref{as 3}, and we set the regularization parameter $\mu$ as in Theorem~\ref{th: main result} with $\lambda=1.2$. Moreover, we generate the entries of $\omegav$ from i.i.d. complex Gaussian distribution of zero mean and a variance that varies between the experiments, and we average our results over $50$ noise iterations. Finally, we use the CVX solver, which calls SDPT3, to solve (\ref{eq: dual of the dual}). 

In the first experiment, we intend to validate the performance of our denoising framework as well as to verify the dependency of the MSE on the number of samples $L$. For that, we set $K=2, R=1, [\Dm]_{i,j} \stackrel{i.i.d.}{\sim} \mathcal{CN}(0,1)$, and we let the shifts to be $(0.13,0.67)$. Moreover, we fix the noise variance $\sigma_{\omegav}^{2}$ at $0.15$, and we let $c_{1}=1$. Finally, we vary the value of $L$ from 11 to 21. In Fig~\ref{fig: perfromance for N_R1}, we plot the MSE versus $L$ for $\yv$, and we compare it with the MSE of our proposed framework obtained by solving (\ref{eq: dual of the dual}). Fig~\ref{fig: perfromance for N_R1} shows that our framework provides a significant reduction in the MSE. On the other hand, Fig~\ref{fig: normalized for N_R1} plots the scaled MSE, i.e., $\frac{L^{3/2}}{\left(\log\left(N\right)\right)^{3/2}}\left(\frac{1}{L}||\hat{\yv}-\yv^{*}||_{2}^{2}\right)$ versus $L$, and shows that the MSE does scale with $\mathcal{O}\left(L^{-3/2}\left(\log\left(N\right)\right)^{3/2}\right)$ as Theorem~\ref{th: main result} indicates. Thus, the MSE of our denoising framework decreases with the increase of the number of samples. 
\begin{figure}[h!]
\centering
\subfigure[]{\label{fig: perfromance for N_R1}\includegraphics[width=2.5in]{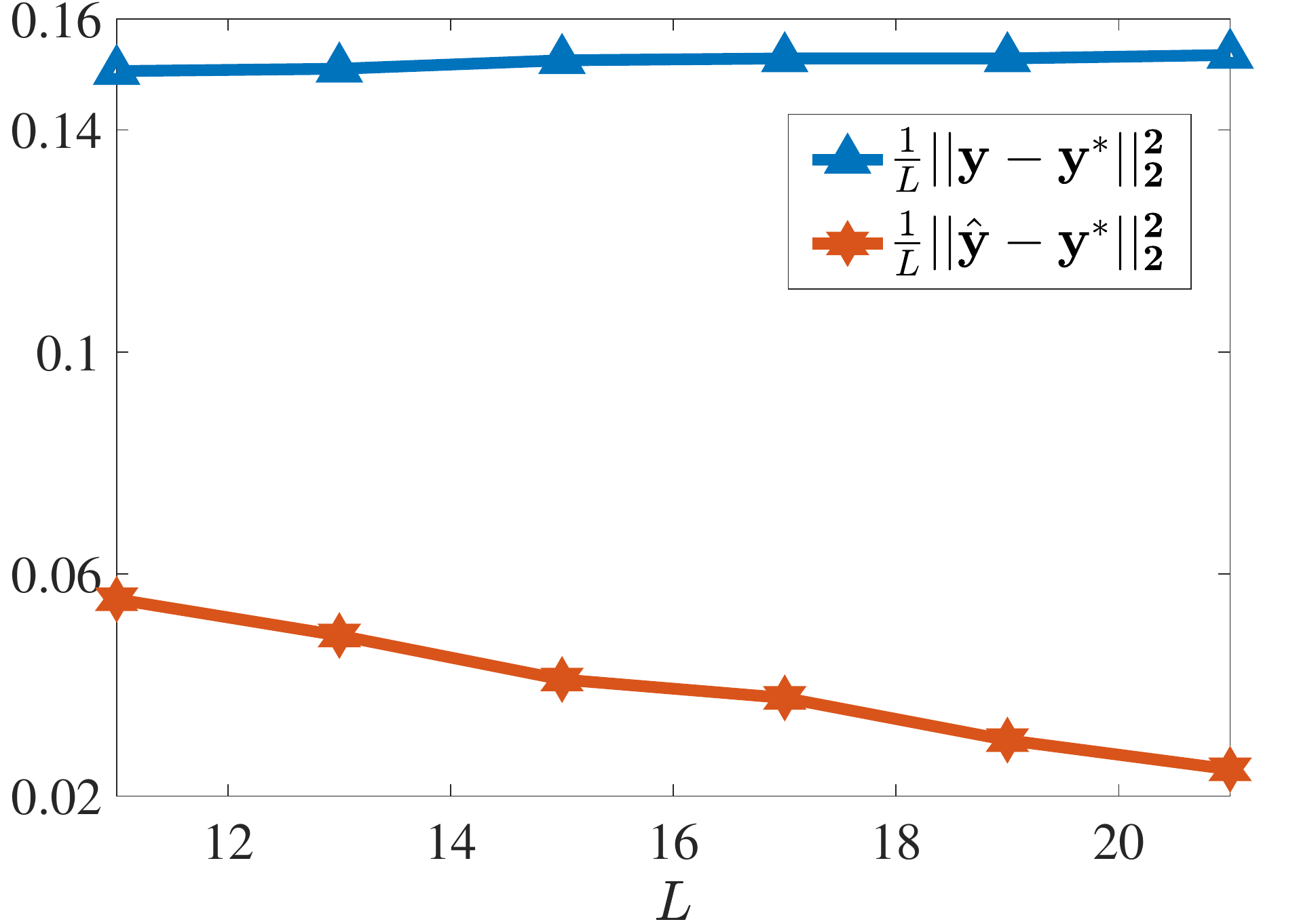}} 
\subfigure[]{\label{fig: normalized for N_R1}\includegraphics[width=2.5in]{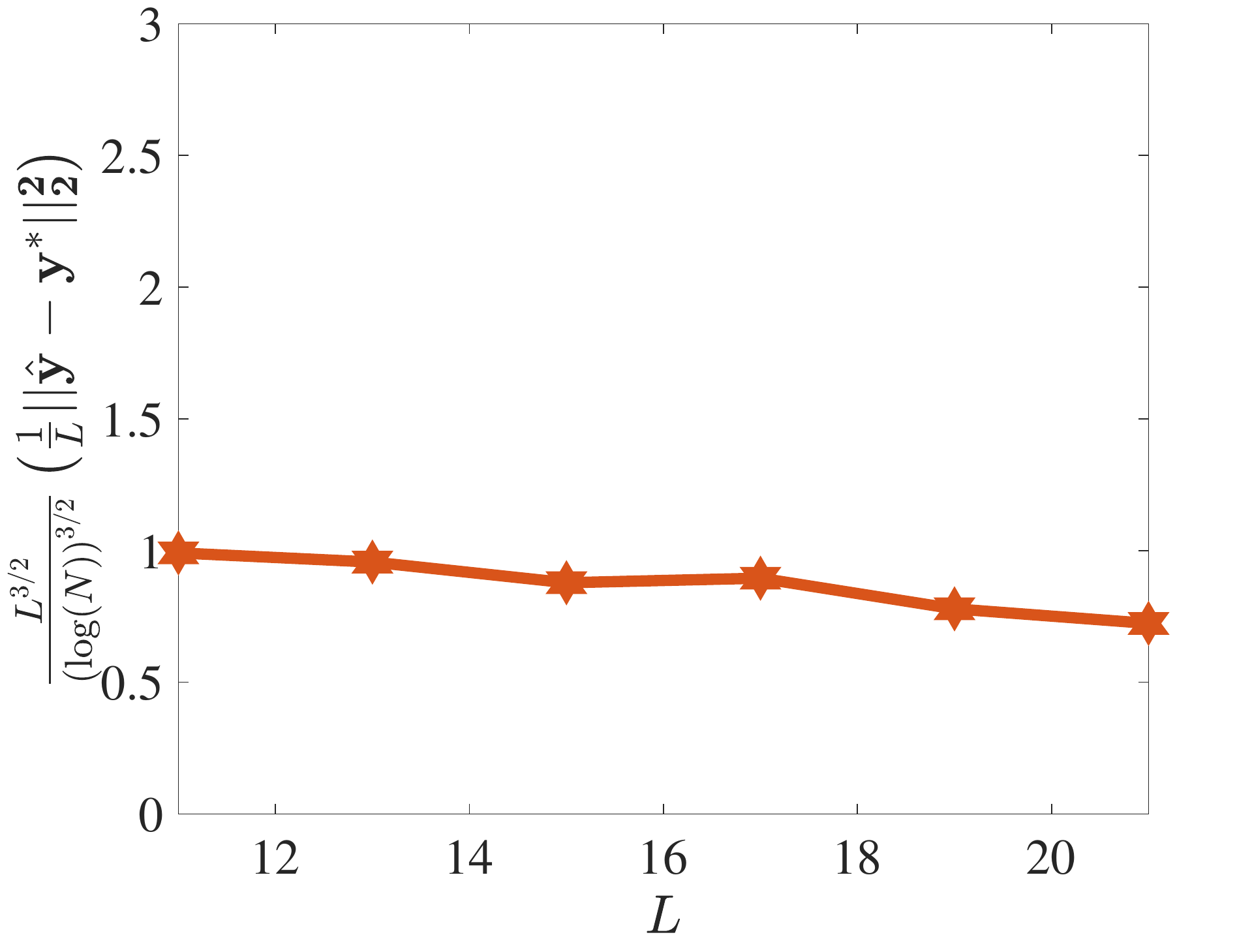}}
\caption{(a) The denoising performance of the proposed framework. (b) The relationship between $\frac{L^{3/2}}{\left(\log\left(N\right)\right)^{3/2}}\left(\frac{1}{L}||\hat{\yv}-\yv^{*}||_{2}^{2}\right)$ and $L$.}
\label{fig: perfromance}
\end{figure}

Next, we provide an additional simulation experiment to validate the dependency of the MSE on $L$. Here, we set $K=3, R=2$, and we let $[\Dm]_{i,j} \stackrel{i.i.d.}{\sim} \mathcal{CN}(0,1)$. The real and imaginary parts of $c_{1}$ and $c_{2}$ are set to be fading, i.e., $0.5+g^{2}$; $g \sim \mathcal{N}(0,1)$ with a uniform sign, while the unknown shifts are generated randomly and found to be $(0.51, 0.30)$ and $(0.94, 0.73)$. Finally, we set $\sigma_{\omegav}^{2}=0.3$, and we vary $L$ from $13$ to $21$. Fig~\ref{fig: normalized for N_R2 sub} shows that the MSE of the framework does scale with $\mathcal{O}\left(L^{-3/2}\left(\log\left(N\right)\right)^{3/2}\right)$ as in Theorem~\ref{th: main result}.

In Fig~\ref{fig: runtime}, we plot the average runtime of the framework versus $L$. As discussed in Section~\ref{sec: optimal con and problem solution}, the figure clearly illustrates the framework's substantial computational cost, resulting mainly from the large dimensions of the constraining matrices. For example, when $L=21$, solving (\ref{eq: dual of the dual}) takes around 30 minutes.

\begin{figure}[h!]
\centering
\subfigure[]{\label{fig: normalized for N_R2 sub}\includegraphics[width=2.5in]{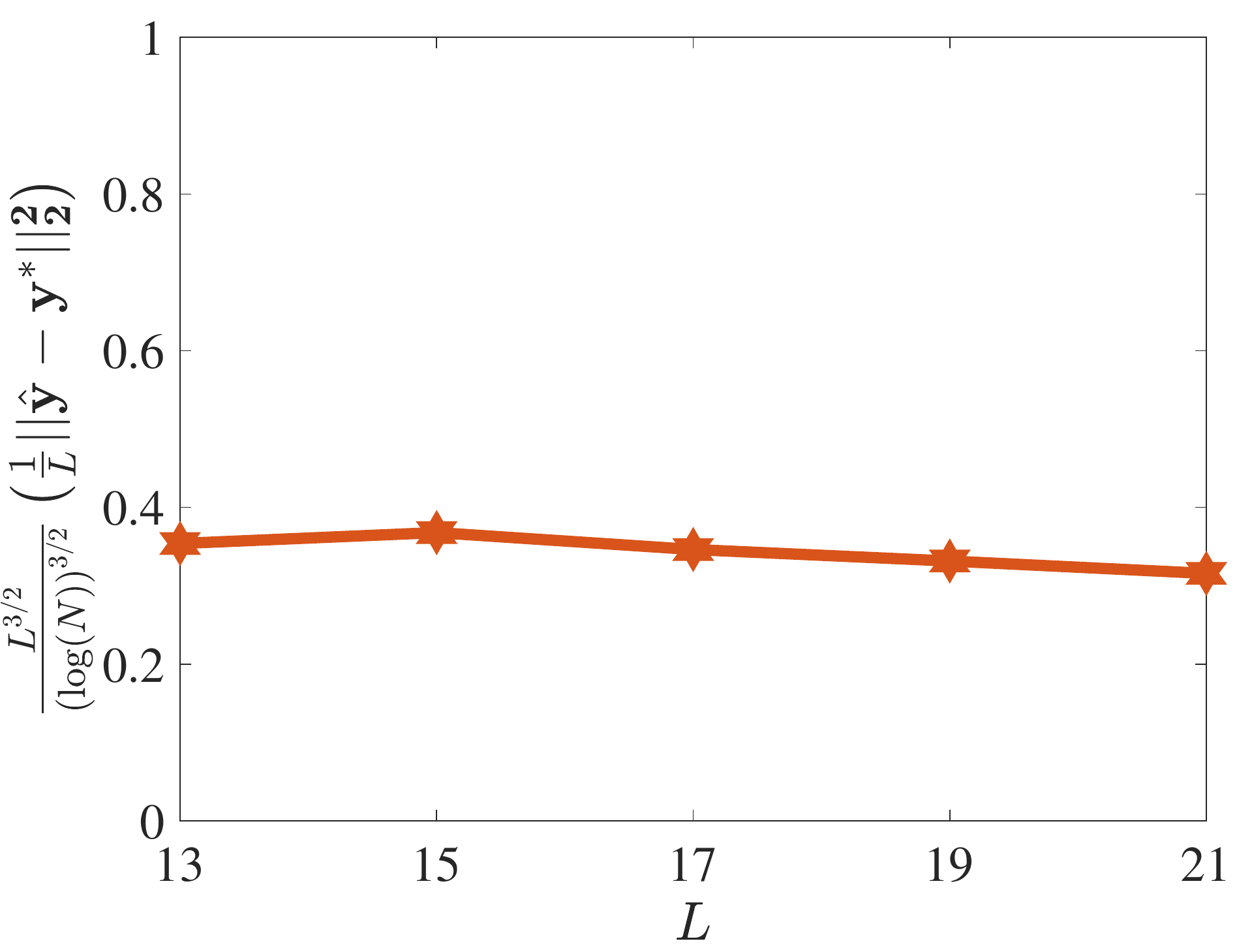}}
\subfigure[]{\label{fig: runtime}\includegraphics[width=2.5in]{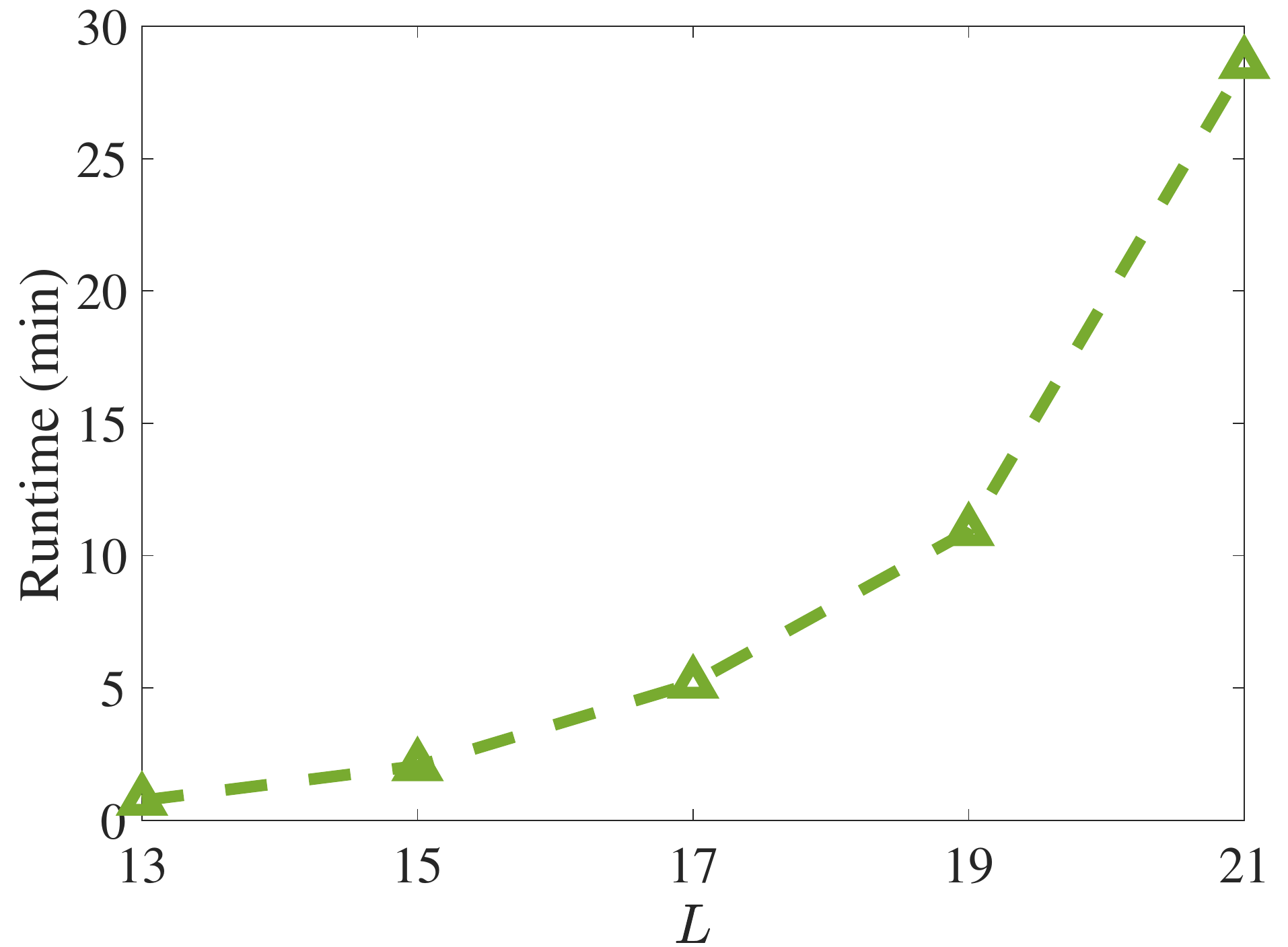}}
\caption{(a) The relationship between $\frac{L^{3/2}}{\left(\log\left(N\right)\right)^{3/2}}\left(\frac{1}{L}||\hat{\yv}-\yv^{*}||_{2}^{2}\right)$ and $L$. (b) The average runtime of the framework (in min).}
\label{fig: normalized for N_R2}
\end{figure} 
 
In the third experiment, we study the dependency of the MSE on $\sigma_{\omegav}^{2}$. For that, we set $L=15, K=3$, $R=3$, and we generate $[\Dm]_{i,j}$ from a Rademacher distribution. The values of $\{c_{j}\}_{j=1}^{3}$ are set as in the previous scenario, whereas the shifts are generated randomly and found to be $(0.1,0.46), (0.61,0.80),$ and $(0.94,0.13)$. Finally, we vary $\sigma_{\omegav}^{2}$ from $0.1$ to $0.75$ and calculate the MSE at each $\sigma_{\omegav}^{2}$ value. Fig~\ref{fig: normalized for noise} indicates that the MSE of our framework does scale linearly with $\sigma_{\omegav}^{2}$ as Theorem~\ref{th: main result} shows.

\begin{figure}[h!]
\centering
{\includegraphics[width=2.5in]{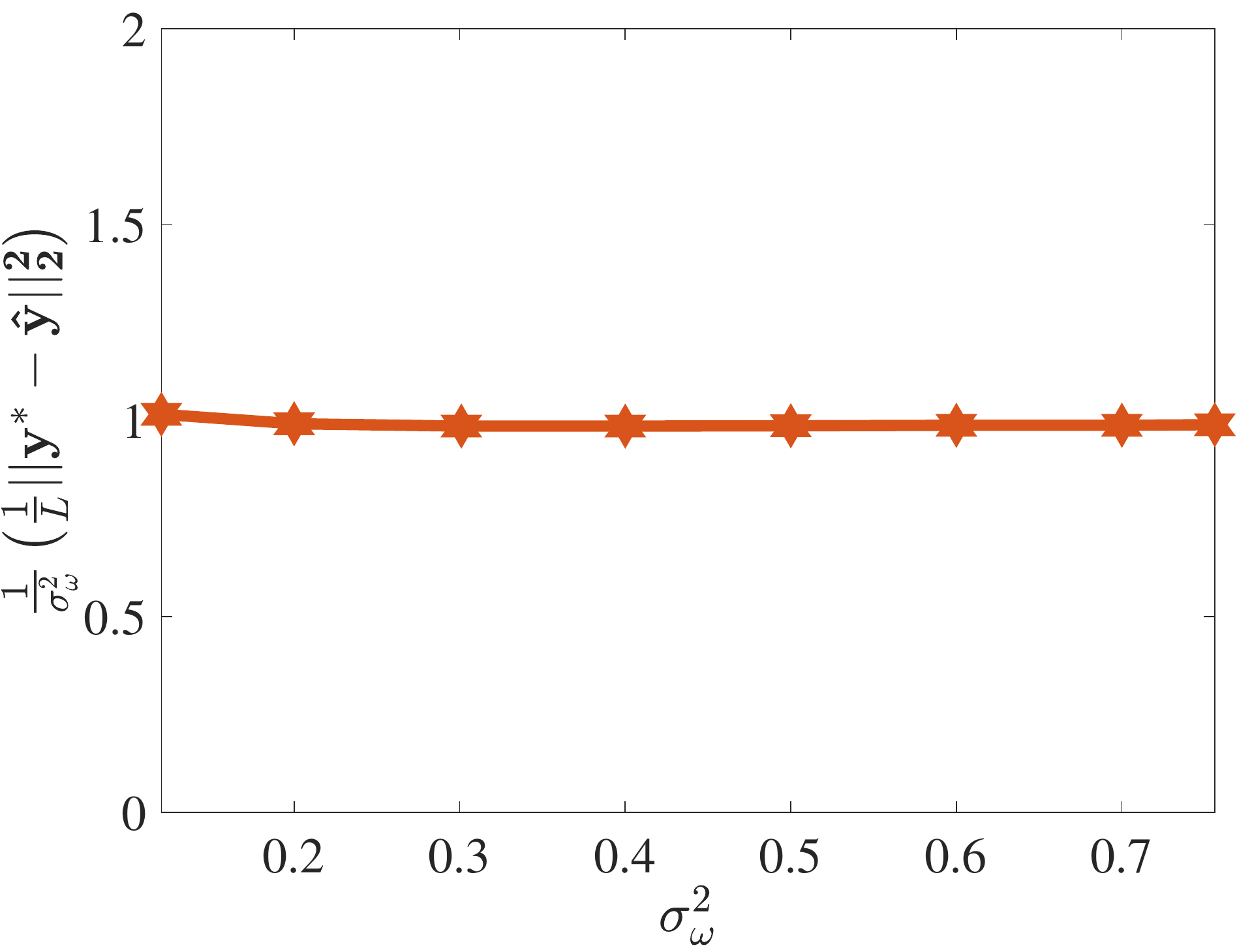}}
\caption{The relationship between $\frac{1}{\sigma_{\omegav}^{2}}\left(\frac{1}{L}||\hat{\yv}-\yv^{*}||_{2}^{2}\right)$ and $\sigma_{\omegav}^{2}$.}
\label{fig: normalized for noise}
\end{figure}

Next, we study the dependency of the MSE on $R$. In this experiment, we set $L=19, K=3, \sigma_{\omegav}^{2}=0.15, [\Dm]_{i,j} \stackrel{i.i.d.}{\sim} \mathcal{CN}(0,1)$, and we vary $R$ from 1 to 7. The values of $\{c_{j}\}_{j=1}^{R}$ are set to be $c_{j}=j$, whereas the shifts are generated as $\tau_{j} \in \{0.1:\zeta:0.1+\zeta R\}, f_{j} \in \{ 0.5:\zeta:0.5+\zeta R\}$, $j=1, \dots, R$, where $\zeta= 2.38/N$. In Fig~\ref{fig: normalized for R}, we plot the normalized MSE, i.e., $\frac{1}{\sqrt{R}}\left(\frac{1}{L}||\hat{\yv}-\yv^{*}||_{2}^{2}\right)$ versus $R$. Fig~\ref{fig: normalized for R} indicates that the MSE does scale with $\mathcal{O}(\sqrt{R})$ as Theorem~\ref{th: main result} shows.

\begin{figure}[h!]
\centering
{\includegraphics[width=2.5in]{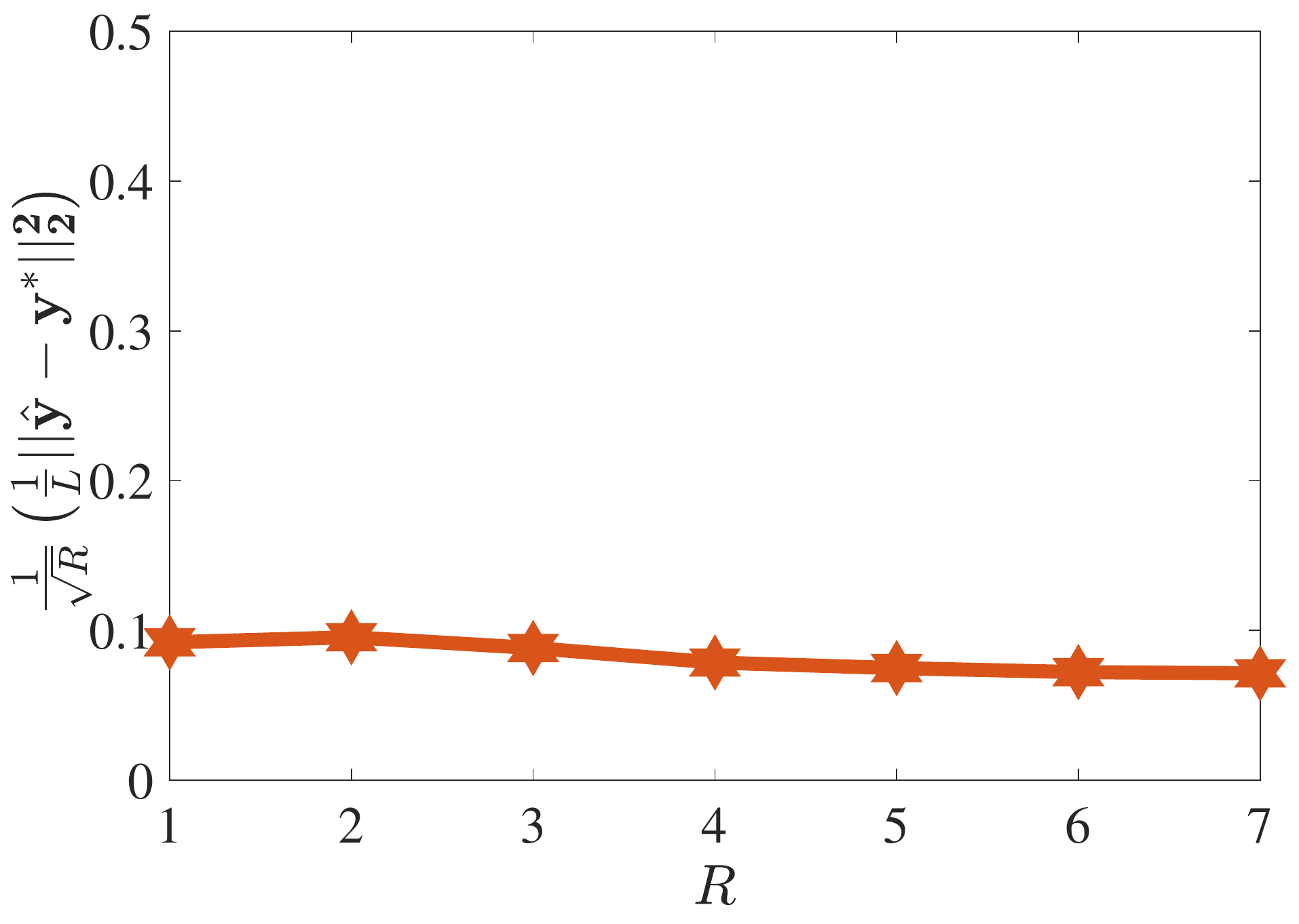}}
\caption{The relationship between $\frac{1}{\sqrt{R}}\left(\frac{1}{L}||\hat{\yv}-\yv^{*}||_{2}^{2}\right)$ and $R$}
\label{fig: normalized for R}
\end{figure}
 
Moreover, we demonstrate the relationship between the MSE of our framework and $K$. For that, we set $L=12, R=2$, and we generate $[\Dm]_{i,j}$ from a Rademacher distribution. The values of $c_{1}$ and $c_{2}$ are set to be fading, whereas the shifts are set to be $(0.1, 0.5), (0.5,0.9)$. Finally, we fix the noise variance at the level that gives an SNR = 15 dB, and we vary $K$ from $2$ to $6$. Fig~\ref{fig: normalized for K} shows that the MSE scales with $\mathcal{O}\left(\sqrt{K^{3}\log\left(K+1\right)}\right)$, as indicated in Theorem~\ref{th: main result}.


\begin{figure}[h!]
\centering
{\includegraphics[width=2.5in]{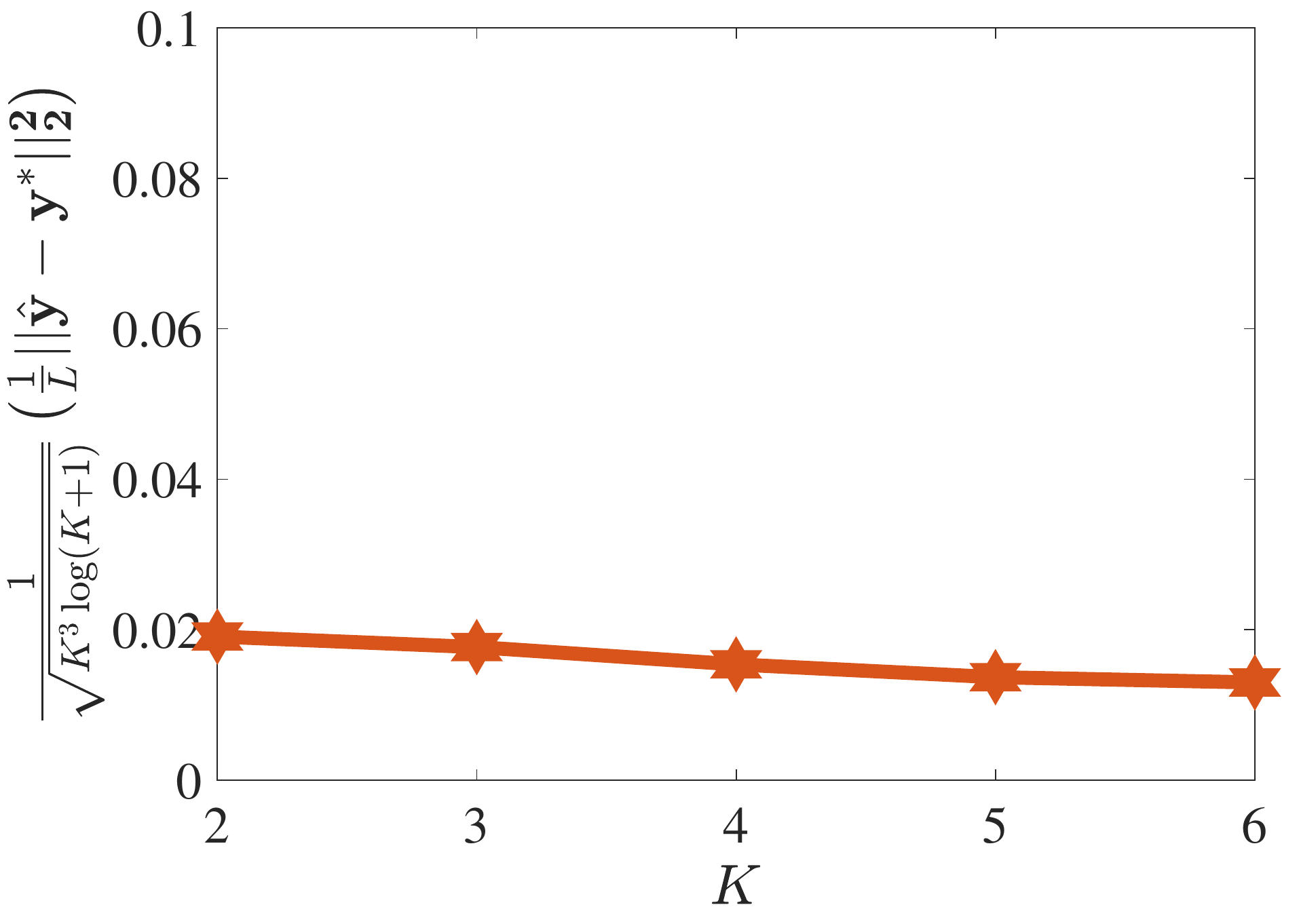}} 
\caption{The relationship between $\frac{1}{\sqrt{K^{3}\log\left(K+1\right)}}\left(\frac{1}{L}||\hat{\yv}-\yv^{*}||_{2}^{2}\right)$ and $K$.}
\label{fig: normalized for K}
\end{figure}

Next, we validate the proposed framework's performance in a more practical scheme by following the simulation setup provided in \cite{yang2016super}. For that, we generate $\Dm$ by extracting the principal components of a structured matrix; hence, making it less random. We let $K=3, R=1$, $\sigma_{\omegav}^{2}=0.15$, $c_{j} = j$, and we set the shifts to be $\left(0.13, 0.67\right)$. Moreover, we set $s_{j}\left(l\right)$ to be the samples of $s_{\sigma^2}\left(t\right) = \frac{1}{\sqrt{2\pi\sigma^2}} e^{-\frac{t^2}{2\sigma^{2}}}$ where $\sigma^{2} \in [0.1, 1]$, by taking $L$ samples from it around the origin with a sampling rate of $1/L$. To construct $\Dm$, we first build the dictionary $\Xim_{\sv}$ such that $\Xim_{\sv} = [\sv_{\sigma^{2}=0.1}  \ \sv_{\sigma^{2}=0.2} \dots \sv_{\sigma^{2}=1} ]$ where the columns of $\Xim_{\sv}$ are the samples of $s_{\sigma^2}\left(t\right)$ evaluated at the given values of $\sigma^{2}$. Following that, we apply PCA on $\Xim_{\sv}$ and obtain its best rank-$3$ approximation, i.e., $\Xim_{\sv, 3}$. Finally, we formulate $\Dm$ by setting its columns to be the left singular vectors of $\Xim_{\sv, 3}$. It should be noted that since $\Xim_{\sv}$ has a sharp singular values decay, $\Xim_{\sv, 3}$ provides a good approximation to it. For example, ${||\Xim_{\sv}-\Xim_{\sv,3}||_{F}}/{||\Xim_{\sv}||_{F}}= 6.244 \times 10^{-5}$ when $L=15$. On the other hand, we let $\hv$ be a deterministic vector by setting $\hv = \frac{\bar{\hv}}{||\bar{\hv}||_{2}}$ where $\bar{\hv} = [1+1i, -1+2i , -2-1i]^{T}$. Hence, not satisfying Assumption~\ref{as 4}.

In Fig~\ref{fig: D deterministic}, we plot the MSE of the proposed denoising framework versus $L$, and we compare it with that of $\yv$. The figure shows clearly that our framework provides a significantly lower MSE that decreases with $L$. On the other hand, Fig~\ref{fig: D deterministic 2} indicates that the MSE does scale with $\mathcal{O}\left(L^{-3/2}\left(\log\left(N\right)\right)^{3/2}\right)$ as in Theorem~\ref{th: main result}.

\begin{figure}[h!]
\centering
\subfigure[]{\label{fig: D deterministic}\includegraphics[width=2.5in]{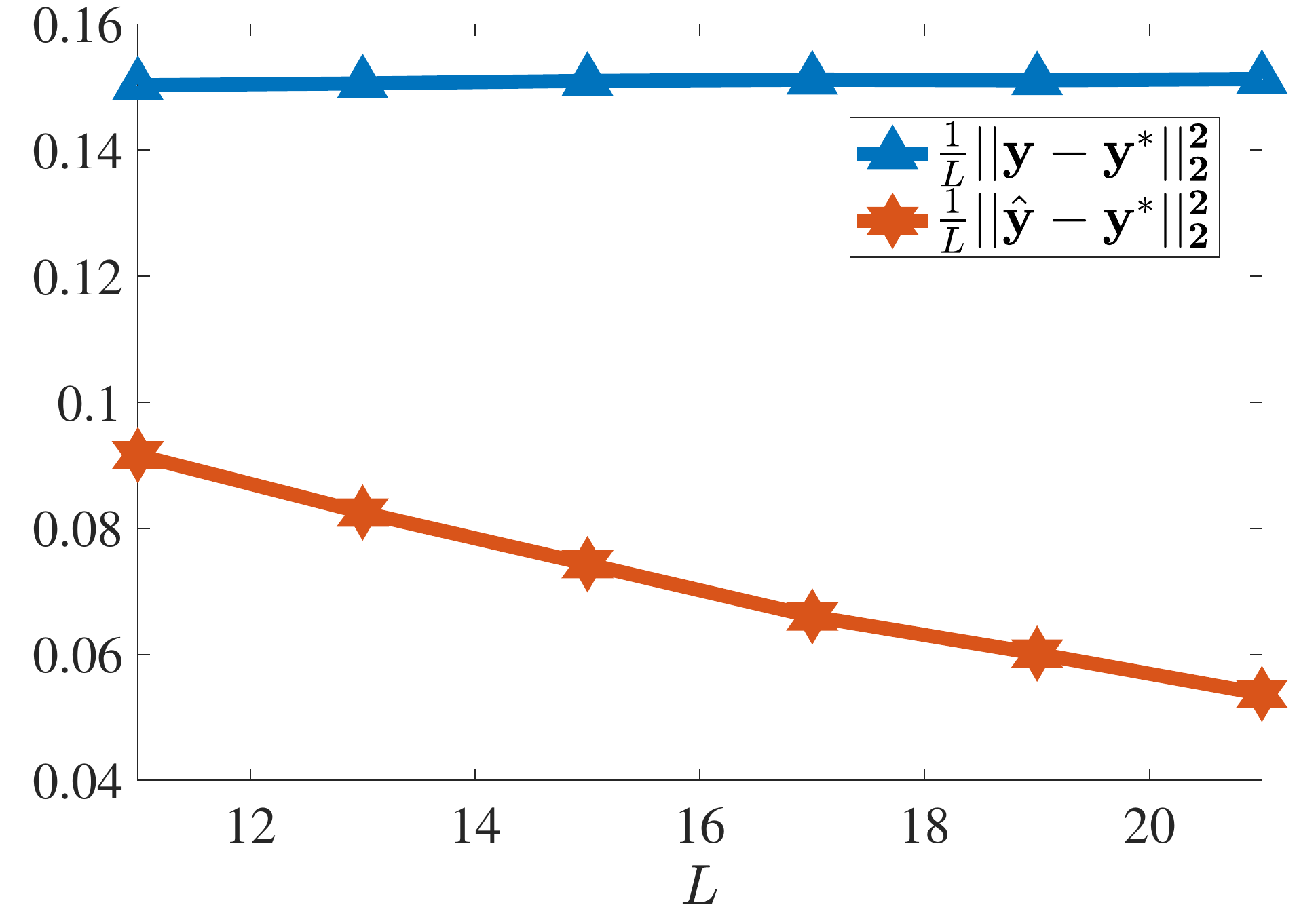}} 
\subfigure[]{\label{fig: D deterministic 2}\includegraphics[width=2.5in]{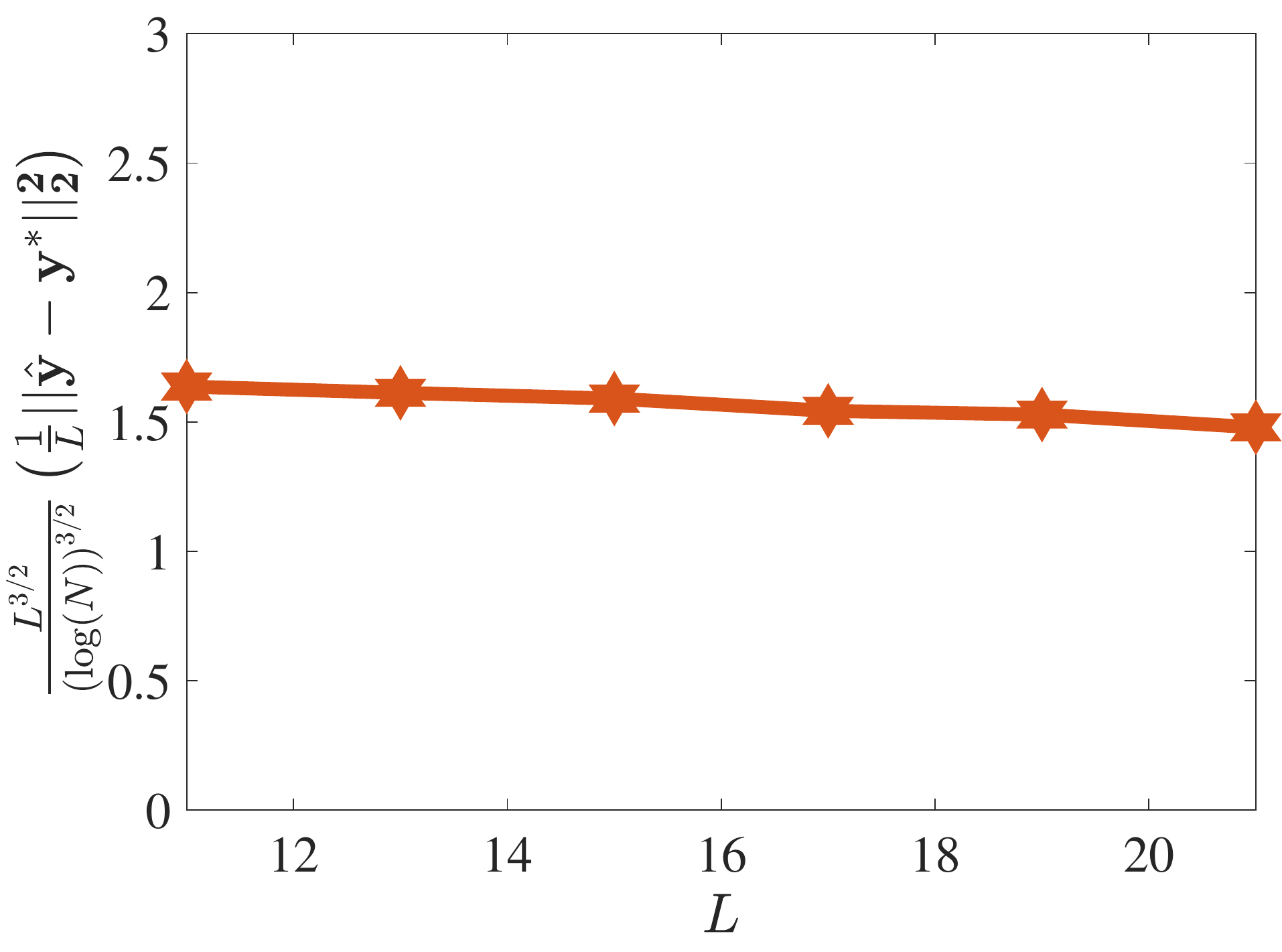}}
\caption{(a) The denoising performance of the proposed framework. (b) The relationship between $\frac{L^{3/2}}{\left(\log\left(N\right)\right)^{3/2}}\left(\frac{1}{L}||\hat{\yv}-\yv^{*}||_{2}^{2}\right)$ and $L$.}
\label{fig: D deterministic all}
\end{figure}

Finally, we provide a simulation experiment with all parameters being randomized. For this experiment, we set $K=2, R=1$, $\sigma_{\omegav}^{2}=0.15$ and we vary $L$ in the range $[15, 21]$. For each noise iteration at a given value of $L$, we generate a random matrix $\Dm$ as in the first experiment, a random vector $\hv_{j}$ that satisfies Assumption~\ref{as 3}, a random shifts that satisfy Assumption~\ref{as 4}, and a random $c_{j}$ as in the second experiment. The final MSE is obtained as an average over many noise iterations. Fig~\ref{fig: monti carlo} shows that the proposed framework provides a significantly lower MSE.   
\begin{figure}[h!]
\centering
\includegraphics[width=2.5in]{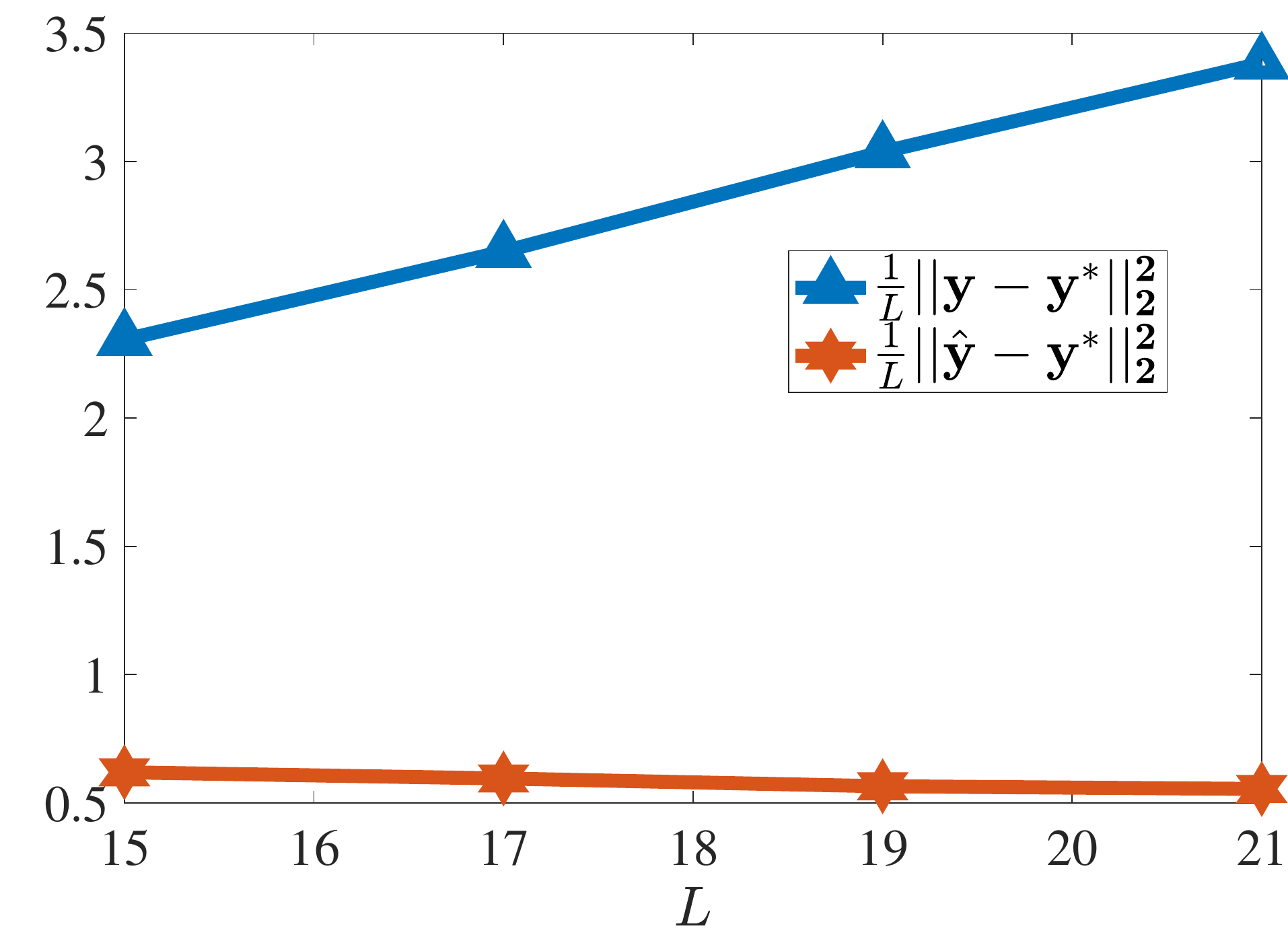}
\caption{The denoising performance of the proposed framework.}
\label{fig: monti carlo}
\end{figure}

On the other hand, we provide a single simulation experiment to demonstrate the parameters' recovery problem using Proposition~\ref{pro: main pro} with the detailed theoretical analysis of the estimation error being left to future work. Here, we set $L=19$, $K=2$, $R=1$, $\sigma_{\omegav}^{2}=0.32$, $(\tau_{1},f_{1})=(0.30,0.94)$, and we let $c_{1}$ and $[\Dm]_{i,j}$ to be as in the third experiment. To estimate the shifts, we discretize the domain $[0,1]^{2}$ with a step size of $10^{-3}$, and then we locate $\hat{\rv}_{j}$ where $||\fv\left(\hat{\rv}_{j}\right)||_{2}^{2}=1$. From Fig~\ref{fig:out3}, we can observe that $(\hat{\tau}_{1},\hat{f}_{1})=\left(0.299,0.939\right)$, which is very close to the true one. In Fig~\ref{fig:out4}, we compare the magnitude of the recovered samples with the true ones. The figure shows that the recovered samples are close to the original ones with a tenuous error. Finally, we find that $|\hv_{1}^{H}\hat{\hv}_{1}|=0.9713$ which confirms the superiority of the framework.

\begin{figure}[h!]
\centering
\subfigure[]{\label{fig:out3}\includegraphics[width=1.72in]{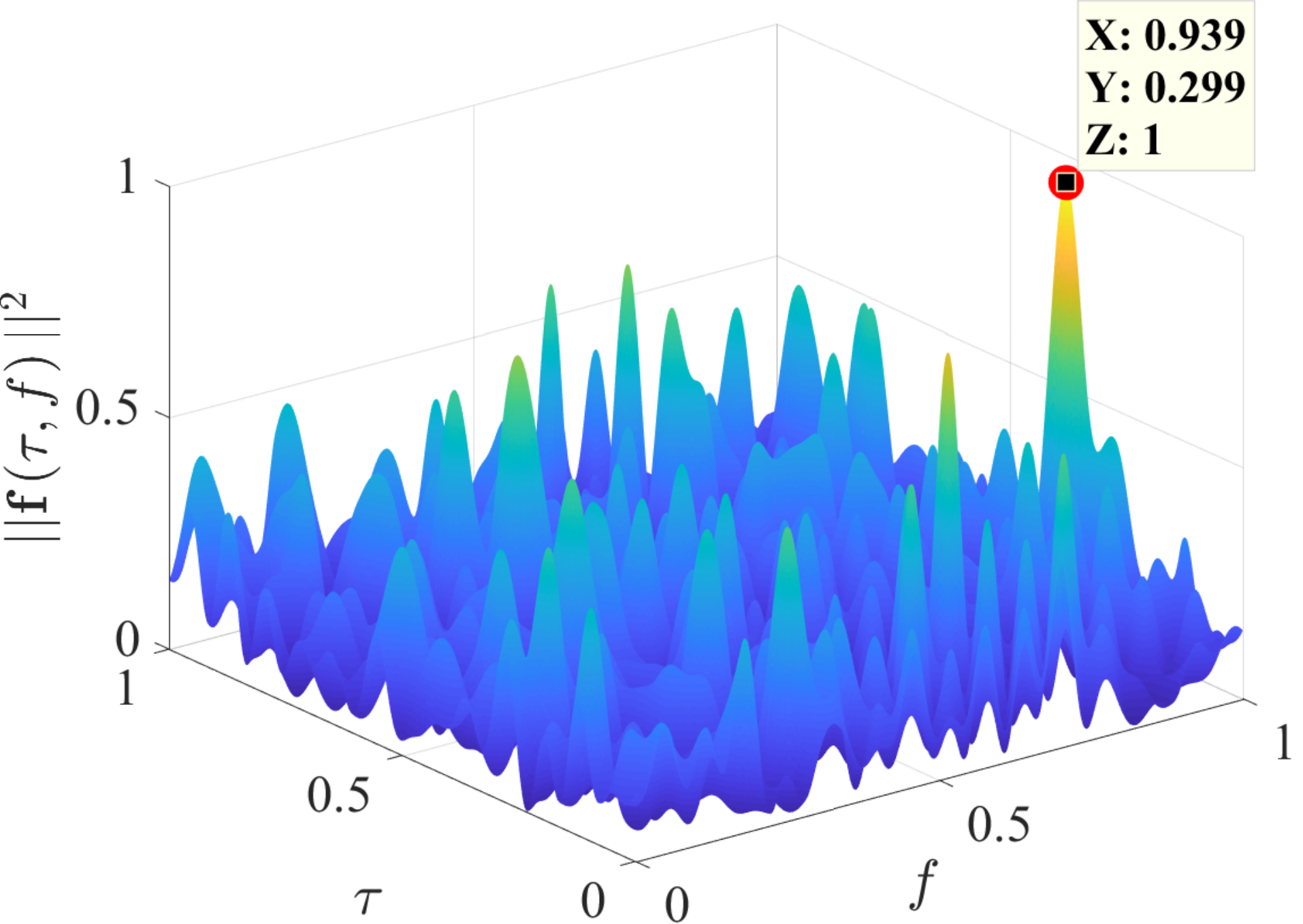}} 
\subfigure[]{\label{fig:out4}\includegraphics[width=1.72in]{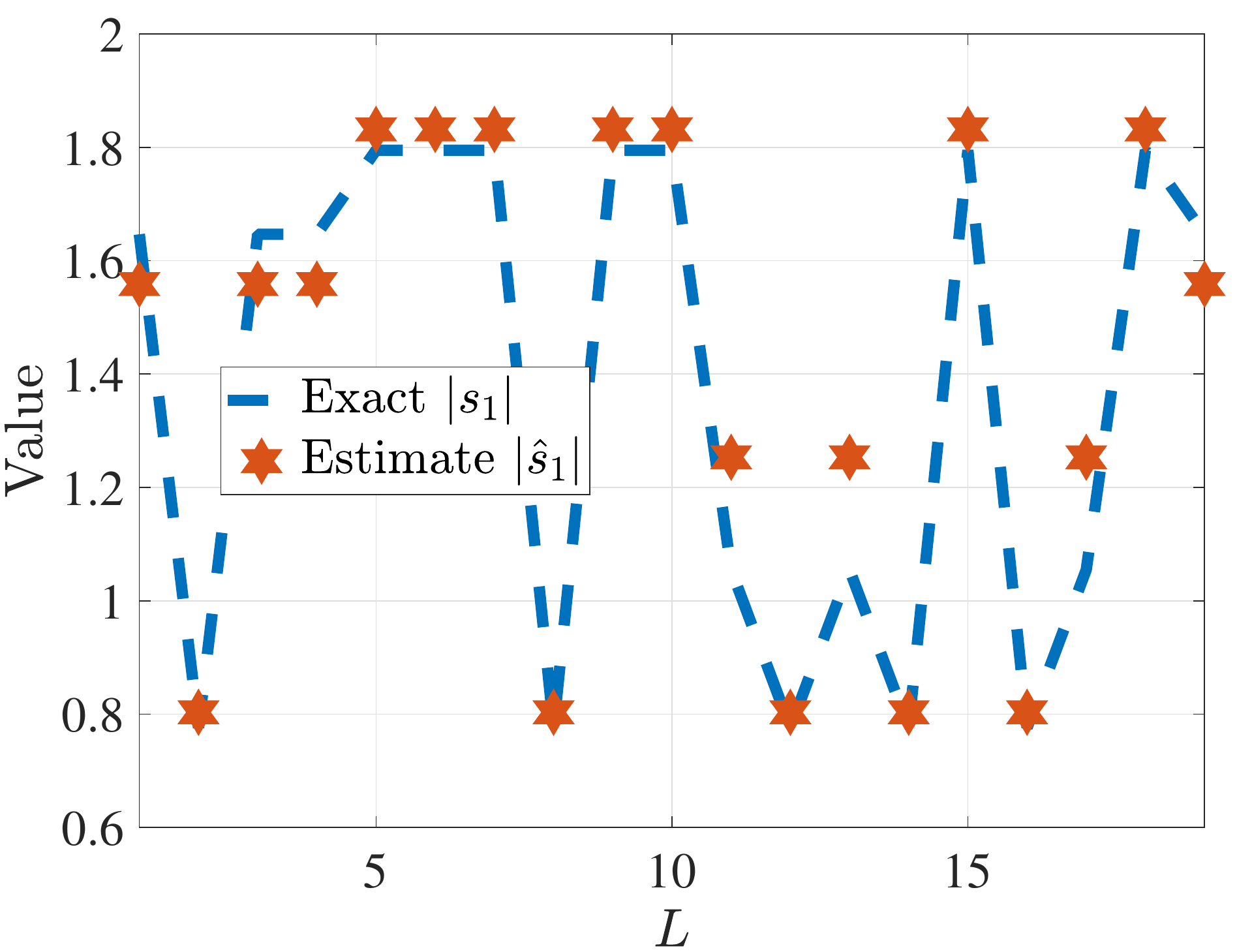}}
\caption{(a) The location of the estimated shift. (b) Comparing the estimated samples with the true ones.}
\label{fig: per 1 n}
\end{figure}

\section{Proof of Theorem~\ref{th: main result}}
\label{sec: proof on theorem}
In this section, we provide detailed proof for Theorem~\ref{th: main result}. We start by defining a vector function $\gv\left(\rv\right) \in \mathbb{C}^{K \times 1}$ such that
\begin{equation}
\label{eq: function g}
\gv\left(\rv\right) =  \sum_{j=1}^{R} c_{j} \hv_{j} \delta\left(\tau-\tau_{j}\right) \delta\left(f-f_{j}\right),
\end{equation}
where $\delta\left(0\right)=1, \delta\left(x\right)=0, \forall x \neq 0$. Based on (\ref{eq: function g}) we can express $\Um_{\text{o}}$ as
\begin{equation}
\label{eq: new U}
\Um_{\text{o}} = \iint \limits_{0}^{1}  \gv\left(\rv\right) \av\left(\rv\right)^{H} d\rv, 
\end{equation}
where $d\rv=d\tau df$. Now let us define the denoising error vector $\ev \in \mathbb{C}^{L \times 1}$ as
\begin{equation}
\label{eq: error vector}
\ev = \yv^{*}-\hat{\yv}= \mathcal{X}(\Um_{\text{o}})- \mathcal{X}(\widehat{\Um})
\end{equation}
and the difference measure vector $\nuv\left(\rv\right) \in \mathbb{C}^{K \times 1}$ as
\begin{equation}
\label{eq: diff vector}
\nuv\left(\rv\right) = \gv\left(\rv\right)- \hat{\gv}\left(\rv\right),
\end{equation}
where $\hat{\gv}\left(\rv\right)$ is the estimate of $\gv\left(\rv\right)$. Finally, consider the following two disjoint subsets
\begin{align}
&\Omega_{\text{close}}\left(j\right)=  \left\lbrace\rv: |\rv-\rv_{j}| \leq 0.2447/N \right\rbrace, \  j=1,\dots, R\nonumber\\
&\Omega_{\text{far}}= [0,1)^{2} / \Omega_{\text{close}} \nonumber
\end{align}
with $\Omega_{\text{close}}= \cup_{j=1}^{R}\Omega_{\text{close}}\left(j\right)$. Note that $\Omega_{\text{close}}$ contains the points in $[0,1]^{2}$ that are close to $\{\rv_{j}\}_{j=1}^{R}$ while $\Omega_{\text{far}}$ includes the points that are far away from it.

Next, the following lemma establishes an upper bound on the denoising error vector $\ev$ based on $\Omega_{\text{close}}\left(j\right)$ and $\Omega_{\text{far}}$.
\begin{lemma}
\label{lemma: error bound}
Define $\phiv\left(\rv\right) \in \mathbb{C}^{K \times 1}$ such that
\begin{equation}
\label{eq: func phi}
\phiv\left(\rv\right) = \mathcal{X}^{*}\left(\ev\right) \av\left(\rv\right).
\end{equation}
Then
\begin{align}
\label{eq: error bound tem}
&||\ev||_{2}^{2}  \leq \sup_{\rv\in[0,1]^{2}} ||\phiv\left(\rv\right)||_{2} \left[ \iint \limits_{\ \Omega_{\text{far}}} ||\nuv\left(\rv\right)||_{2}d\rv+ \sum _{k=0}^{3}T_{k}\right] 
\end{align}
where 
\begin{align}
&T_{0}^{j} = \left|\left|  \ \ \iint \limits_{\Omega_{\text{close}}\left(j\right)} \hspace{-5pt} \nuv\left(\rv\right)d\rv \right|\right|_{2}\hspace{-5pt},  T_{1}^{j} = 2\pi N \left|\left| \ \  \iint \limits_{\Omega_{\text{close}}\left(j\right)} \hspace{-5pt} \left(\tau-\tau_{j}\right) \nuv\left(\rv\right)d\rv\right|\right|_{2} \nonumber\\
&T_{2}^{j} =  \left(2\pi N\right) \left|\left| \ \  \iint \limits_{\Omega_{\text{close}}\left(j\right)} \left(f-f_{j}\right) \nuv\left(\rv\right)d\rv \right|\right|_{2}, \nonumber\\
&T_{3}^{j} = \frac{1}{2}\left(2\pi N\right)^{2} \  \iint \limits_{\Omega_{\text{close}}\left(j\right)} \left(|\tau-\tau_{j}|+|f-f_{j}|\right)^{2} \left|\left|\nuv\left(\rv\right)\right|\right|_{2} d\rv,  \nonumber
\end{align}
and $T_{k} = \sum_{j=1}^{R}T_{k}^{j}$ for $k=0,1,2,3$.
\end{lemma}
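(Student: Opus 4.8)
The plan is to collapse $\|\ev\|_{2}^{2}$ into a single scalar integral that pairs the difference measure $\nuv$ against the error polynomial $\phiv$, and then estimate that integral region by region. First I would use linearity of $\mathcal{X}$ together with the representation $\Um_{\text{o}}=\iint\gv(\rv)\av(\rv)^{H}\,d\rv$ and the analogous $\widehat{\Um}=\iint\hat{\gv}(\rv)\av(\rv)^{H}\,d\rv$ to write $\Um_{\text{o}}-\widehat{\Um}=\iint\nuv(\rv)\av(\rv)^{H}\,d\rv$. Writing $\ev=\mathcal{X}(\Um_{\text{o}}-\widehat{\Um})$ and applying the adjoint identity $\langle\mathcal{X}(\Mm),\qv\rangle=\langle\Mm,\mathcal{X}^{*}(\qv)\rangle$ with $\qv=\ev$, together with the trace evaluation $\langle\nuv(\rv)\av(\rv)^{H},\mathcal{X}^{*}(\ev)\rangle=\nuv(\rv)^{H}\mathcal{X}^{*}(\ev)\av(\rv)=\nuv(\rv)^{H}\phiv(\rv)$, yields the master identity
\begin{equation}
\|\ev\|_{2}^{2}=\iint_{[0,1]^{2}}\langle\nuv(\rv),\phiv(\rv)\rangle_{\mathbb{R}}\,d\rv, \nonumber
\end{equation}
where passing to the real part is legitimate since the left-hand side is real.

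Next I would split the domain as $[0,1]^{2}=\Omega_{\text{far}}\cup\bigcup_{j=1}^{R}\Omega_{\text{close}}(j)$. On $\Omega_{\text{far}}$ a direct Cauchy--Schwarz estimate $\langle\nuv,\phiv\rangle_{\mathbb{R}}\le\|\nuv\|_{2}\|\phiv\|_{2}\le\big(\sup_{\rv}\|\phiv(\rv)\|_{2}\big)\|\nuv\|_{2}$ produces exactly the term $\sup_{\rv}\|\phiv(\rv)\|_{2}\iint_{\Omega_{\text{far}}}\|\nuv(\rv)\|_{2}\,d\rv$. The substance lies on each $\Omega_{\text{close}}(j)$, where $\phiv$ varies slowly relative to the concentrated mass of $\nuv$, so I would expand $\phiv$ to second order about $\rv_{j}$ by Taylor's theorem, $\phiv(\rv)=\phiv(\rv_{j})+(\tau-\tau_{j})\partial_{\tau}\phiv(\rv_{j})+(f-f_{j})\partial_{f}\phiv(\rv_{j})+\tfrac{1}{2}Q_{j}(\rv)$, with $Q_{j}$ the quadratic remainder. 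Pairing term by term against $\nuv$ and integrating over $\Omega_{\text{close}}(j)$, Cauchy--Schwarz turns the zeroth-order piece into $\big\|\iint_{\Omega_{\text{close}}(j)}\nuv\,d\rv\big\|_{2}\,\|\phiv(\rv_{j})\|_{2}$, i.e. $T_{0}^{j}$ times $\|\phiv(\rv_{j})\|_{2}$, and the first-order pieces into the weighted moment integrals appearing in $T_{1}^{j}$ and $T_{2}^{j}$.

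To convert the derivative factors into the sup-norm of $\phiv$, I would invoke the Bernstein inequality for trigonometric polynomials. Since $\phiv(\rv)=\mathcal{X}^{*}(\ev)\av(\rv)$ is, coordinatewise, a fixed linear combination of products of Dirichlet kernels, it is a vector of bivariate trigonometric polynomials of degree $(N,N)$. Testing against an arbitrary unit vector $\uv$ reduces $\|\partial_{\tau}\phiv(\rv)\|_{2}=\sup_{\|\uv\|_{2}=1}|\uv^{H}\partial_{\tau}\phiv(\rv)|$ to the scalar Bernstein bound applied to the degree-$N$ polynomial $\uv^{H}\phiv$, giving $\|\partial_{\tau}\phiv(\rv)\|_{2},\|\partial_{f}\phiv(\rv)\|_{2}\le 2\pi N\sup_{\rv}\|\phiv(\rv)\|_{2}$, and a factor $(2\pi N)^{2}$ for the second derivatives. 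The $2\pi N$ normalizations then line up exactly with the definitions of $T_{1}^{j}$ and $T_{2}^{j}$, while bounding $Q_{j}$ by $(2\pi N)^{2}\sup_{\rv}\|\phiv(\rv)\|_{2}$ against the weight $(|\tau-\tau_{j}|+|f-f_{j}|)^{2}$ reproduces $T_{3}^{j}$. Summing over $j$, adding the far-region contribution, and using $\|\phiv(\rv_{j})\|_{2}\le\sup_{\rv}\|\phiv(\rv)\|_{2}$ throughout yields the claimed inequality with common prefactor $\sup_{\rv}\|\phiv(\rv)\|_{2}$.

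The main obstacle is the Bernstein step coupled with the remainder control: one must verify that $\phiv$ is genuinely a degree-$(N,N)$ trigonometric polynomial so that the derivative bounds hold with the sharp constant $2\pi N$, and then bound the second-order Taylor remainder uniformly over $\Omega_{\text{close}}(j)$ so that it produces precisely the $(|\tau-\tau_{j}|+|f-f_{j}|)^{2}$ weighting defining $T_{3}^{j}$ rather than a looser quadratic form. Everything else is bookkeeping with the triangle and Cauchy--Schwarz inequalities.
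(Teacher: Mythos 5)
Your proposal is correct and follows essentially the same route as the paper: the same master identity $\|\ev\|_{2}^{2}=\iint\nuv(\rv)^{H}\phiv(\rv)\,d\rv$, the same far/close split with Cauchy--Schwarz on $\Omega_{\text{far}}$, the same first-order Taylor expansion of $\phiv$ about each $\rv_{j}$ with a quadratic remainder, and the same scalarization $\rv\mapsto\xv^{H}\phiv(\rv)$ over unit vectors $\xv$ to apply Bernstein's polynomial inequality and obtain the $2\pi N$ and $(2\pi N)^{2}$ factors. The only cosmetic difference is that the paper works with $|\iint\nuv^{H}\phiv\,d\rv|$ rather than the real inner product, which changes nothing.
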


\begin{proof}
Based on (\ref{eq: new U}), (\ref{eq: error vector}), and (\ref{eq: diff vector}) we can write
\begin{align}
&||\ev||_{2}^{2} \leq |\left\langle\ev,\ev\right\rangle| = \left| \left\langle \ev,\mathcal{X}\left( \iint \limits_{0}^{1} \nuv\left(\rv\right) \av\left(\rv\right)^{H} d\rv \right) \right\rangle\right|= \nonumber\\
&\left| \iint \limits_{0}^{1} \left\langle \mathcal{X}^{*}\left( \ev\right), \nuv\left(\rv\right) \av\left(\rv\right)^{H} \right\rangle  d\rv\right| =\left|  \iint \limits_{0}^{1} \nuv\left(\rv\right)^{H}\phiv\left(\rv\right) d\rv \right|. \nonumber
\end{align}
By using the triangular inequality and the definitions of $\Omega_{\text{close}}$ and $\Omega_{\text{far}}$ we can bound the above term as
\begin{align}
\label{eq: error proof2}
||\ev||_{2}^{2} &\leq \left|  \iint \limits_{\Omega_{\text{far}}} \nuv\left(\rv\right)^{H}\phiv\left(\rv\right) d\rv \right|  + \left| \iint \limits_{ \ \ \Omega_{\text{close}}} \nuv\left(\rv\right)^{H}\phiv\left(\rv\right) d\rv \right|
\end{align} 
The first term in (\ref{eq: error proof2}) can be bounded as
\begin{align}
\label{eq: error proof term1}
 &\left| \iint \limits_{\Omega_{\text{far}}} \nuv\left(\rv\right)^{H}\phiv\left(\rv\right) d\rv \right|  \leq  \ \ \iint \limits_{\Omega_{\text{far}}} \left|\nuv\left(\rv\right)^{H}\phiv\left(\rv\right) \right| d\rv \nonumber\\
 & \leq  \sup_{\rv\in[0,1]^{2}} ||\phiv\left(\rv\right)||_{2} \iint \limits_{\Omega_{\text{far}}} ||\nuv\left(\rv\right)||_{2} \  d\rv,
\end{align}
where the first inequality is from triangular inequality, while the second one is from Cauchy-Schwarz inequality. To upper bound the second term in (\ref{eq: error proof2}), we first define $\bm{\varphi}\left(\rv\right)$ as
\begin{equation}
\label{eq: var fn definition}
\bm{\varphi}\left(\rv\right) = \bm{\phi}\left(\rv\right)-\bm{\phi}\left(\rv_{j}\right)-\left(\tau-\tau_{j}\right) \bigtriangledown_{\tau}\bm{\phi}\left(\rv_{j}\right)- \left(f-f_{j}\right) \bigtriangledown_{f}\bm{\phi}\left(\rv_{j}\right) 
\end{equation}
for all $\rv_{j} \in \mathcal{R}$. Based on (\ref{eq: var fn definition}), we can write
\begin{align}
\label{eq: error proof 3}
&\left| \ \iint \limits_{\Omega_{\text{close}}} \nuv\left(\rv\right)^{H}\phiv\left(\rv\right) d\rv \right| \leq \sum_{j=1}^{R} \left|  \ \ \iint \limits_{\Omega_{\text{close}}\left(j\right)} \nuv\left(\rv\right)^{H} \bm{\varphi}\left(\rv\right)d\rv\right| \nonumber\\
&+  \sum_{j=1}^{R} \left|  \ \ \iint \limits_{\Omega_{\text{close}}\left(j\right)}  \nuv\left(\rv\right)^{H} \phiv\left(\rv_{j}\right)d\rv \right|\nonumber\\
&+ \sum_{j=1}^{R} \left|  \ \ \iint \limits_{\Omega_{\text{close}}\left(j\right)} \left(\tau-\tau_{j}\right)  \nuv\left(\rv\right)^{H} \bigtriangledown_{\tau}\bm{\phi}\left(\rv_{j}\right)d\rv \right| \nonumber\\
&+  \sum_{j=1}^{R} \left|  \ \ \iint \limits_{\Omega_{\text{close}}\left(j\right)} \left(f-f_{j}\right) \nuv\left(\rv\right)^{H}  \bigtriangledown_{f}\bm{\phi}\left(\rv_{j}\right) d\rv\right|.
\end{align}
Before proceeding forward, it is useful first to introduce different results to facilitate our proof. First, define
\begin{equation}
\label{eq: rho function}
\rho\left(\rv\right) = \xv^{H} \phiv\left(\rv\right)
\end{equation}
with $||\xv||_{2} = 1$. Then, based on (\ref{eq: var fn definition}) and (\ref{eq: rho function}) we can write
\begin{align}
\label{eq: bound q}
&||\bm{\varphi}\left(\rv\right)||_{2}= \sup_{\xv:||\xv||_{2}=1} \left| \left\langle\xv,\bm{\varphi}\left(\rv\right)\right\rangle \right| =  \sup_{\xv}\left| \rho\left(\rv\right)-\rho\left(\rv_{j}\right)-\right.\nonumber\\
&\left.\left(\tau-\tau_{j}\right) \bigtriangledown_{\tau}\rho\left(\rv_{j}\right)- \left(f-f_{j}\right) \bigtriangledown_{f}\rho\left(\rv_{j}\right) \right|.
\end{align}
On the other hand, the two-dimensional Tyler series expansion for $\rho\left(\rv\right)$ around $\rv_{j}$ can be written as
\begin{align}
\label{eq: rho tyler final}
\rho\left(\rv\right) = &\rho\left(\rv_{j}\right)+ \left(\tau-\tau_{j}\right) \bigtriangledown_{\tau} \rho \left(\rv_{j}\right) + \left(f-f_{j}\right) \bigtriangledown_{f} \rho \left(\rv_{j}\right) \nonumber\\
&+\frac{1}{2} \left(\tau-\tau_{j}\right)^{2} \bigtriangledown_{\tau}^{2} \rho \left(\rv_{j}\right)+ \frac{1}{2} \left(f-f_{j}\right)^{2} \bigtriangledown_{f}^{2} \rho \left(\rv_{j}\right) \nonumber\\
&+ \left(\tau-\tau_{j}\right) \left(f-f_{j}\right) \bigtriangledown_{\tau f} \rho \left(\rv_{j}\right).
\end{align} 
Using (\ref{eq: rho tyler final}) we can bound (\ref{eq: bound q}) as
\begin{align}
\label{eq: bound q 2}
&||\bm{\varphi}\left(\rv\right)||_{2} \leq  \frac{1}{2} \sup_{\xv}\left| \left(\tau-\tau_{j}\right)^{2} \bigtriangledown_{\tau}^{2} \rho \left(\rv_{j}\right) \right| \nonumber\\
&+ \frac{1}{2} \sup_{\xv}\left| \left(f-f_{j}\right)^{2} \bigtriangledown_{f}^{2} \rho \left(\rv_{j}\right) \right|\nonumber\\
&+\sup_{\xv}\left|  \left(\tau-\tau_{j}\right) \left(f-f_{j}\right) \bigtriangledown_{\tau f} \rho \left(\rv_{j}\right) \right|.
\end{align}
However, note that
\begin{align}
&\sup_{\xv}\left| \left(\tau-\tau_{j}\right)^{2} \bigtriangledown_{\tau}^{2} \rho \left(\rv_{j}\right) \right| \leq \left(\tau-\tau_{j}\right)^{2} \sup_{\xv, \rv\in \Omega_{\text{close}}\left(j\right)}\left| \bigtriangledown_{\tau}^{2} \rho \left(\rv\right) \right| \nonumber\\
& \leq \left(2\pi N\right)^{2} \left(\tau-\tau_{j}\right)^{2} \sup_{\xv,\rv\in \Omega_{\text{close}}\left(j\right)}\left| \rho \left(\rv\right) \right| \nonumber\\
&=\left(2\pi N\right)^{2} \left(\tau-\tau_{j}\right)^{2} \sup_{\rv\in \Omega_{\text{close}}\left(j\right)}\left|\left| \phiv\left(\rv\right) \right|\right|_{2} ,  
\end{align}
where the second inequality is based on Bernstein's polynomial inequality \cite[Theorem 1.6.2]{tropp2015introduction}. Applying the same inequality to the other terms in (\ref{eq: bound q 2}), and after some algebraic manipulations, we can show that
\begin{align}
\label{eq: bound q final}
&||\bm{\varphi}\left(\rv\right)||_{2} \leq \frac{1}{2} \left(2\pi N\right)^{2} \left(|\tau-\tau_{j}|+|f-f_{j}|\right)^{2}\sup_{\rv\in [0,1]^{2}}\left|\left| \phiv\left(\rv\right) \right|\right|_{2} 
\end{align}
Upon following the same steps, we can show that
\begin{align}
\label{eq: two bounds 1}
&||\bigtriangledown_{\tau}\phiv\left(\rv_{j}\right)||_{2} \leq \left(2\pi N \right) \sup_{\rv\in [0,1]^{2}}\left|\left| \phiv\left(\rv\right) \right|\right|_{2} \\
\label{eq: two bounds 2}
&||\bigtriangledown_{f}\phiv\left(\rv_{j}\right)||_{2} \leq \left(2\pi N \right) \sup_{\rv\in [0,1]^{2}}\left|\left| \phiv\left(\rv\right) \right|\right|_{2}.  
\end{align}
By applying triangular inequality and Cauchy-Schwarz inequality for each term in (\ref{eq: error proof 3}), along with the bounds in (\ref{eq: bound q final}), (\ref{eq: two bounds 1}), and (\ref{eq: two bounds 2}), we can show that
\begin{align}
\label{eq: error bound final}
&\left| \ \iint \limits_{\Omega_{\text{close}}}  \nuv\left(\rv\right)^{H}\phiv\left(\rv\right) d\rv \right| \leq \sup_{\rv\in [0,1]^{2}}\left|\left| \phiv\left(\rv\right) \right|\right|_{2} \left[T_{0}+T_{1}+T_{2}+T_{3}\right]
\end{align}
The proof of Lemma~\ref{lemma: error bound} is concluded by substituting (\ref{eq: error proof term1}) and (\ref{eq: error bound final}) in (\ref{eq: error proof2}).
\end{proof}

To proceed with the proof of Theorem~\ref{th: main result}, we now need to bound each term in (\ref{eq: error bound tem}). For that, we will first introduce some useful results in Lemma~\ref{lemma: bound on phi} and Theorems~\ref{th: main dual cert}, \ref{th: main dual cert 1}, and \ref{th: main dual cert 2}. 
\begin{lemma}
\label{lemma: bound on phi}
Recall the definition of $\phiv\left(\rv\right)$ in (\ref{eq: func phi}) and set the regularization parameter $\mu$ to be 
\begin{equation}
\label{eq: regula}
\mu = 6 \lambda \sigma_{\omegav} ||\Dm||_{F} \sqrt{\log\left(N\right)},
\end{equation}
where $\lambda\geq 1$. Then, with probability at least $1- \frac{CK}{\sqrt{N^{3}}}$ we have
\begin{equation}
\sup_{\rv\in [0,1]^{2}}\left|\left| \phiv\left(\rv\right) \right|\right|_{2} \leq 2 \mu, 
\end{equation}
where $C$ is a numerical constant.
\end{lemma}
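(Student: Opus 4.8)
The plan is to separate a deterministic reduction, based on the optimality of $\widehat{\Um}$, from a probabilistic estimate on a Gaussian supremum. First I would invoke the first optimality condition (\ref{eq: suff cond 1}) of Lemma~\ref{lemma: suff cond lemma}. Since $\yv=\yv^{*}+\omegav=\mathcal{X}(\Um_{\text{o}})+\omegav$ and $\ev=\mathcal{X}(\Um_{\text{o}})-\mathcal{X}(\widehat{\Um})$, we have $\yv-\mathcal{X}(\widehat{\Um})=\ev+\omegav$, so (\ref{eq: suff cond 1}) reads $\|\mathcal{X}^{*}(\ev+\omegav)\|_{\mathcal{A}}^{*}\leq\mu$. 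Recalling the identity $\|\mathcal{X}^{*}(\qv)\|_{\mathcal{A}}^{*}=\sup_{\rv\in[0,1]^{2}}\|\mathcal{X}^{*}(\qv)\av(\rv)\|_{2}$ established at the beginning of Section~\ref{sec: optimal con and problem solution}, together with the triangle inequality for the dual norm and the linearity of $\mathcal{X}^{*}$, this gives
\begin{equation}
\sup_{\rv\in[0,1]^{2}}\|\phiv(\rv)\|_{2}=\|\mathcal{X}^{*}(\ev)\|_{\mathcal{A}}^{*}\leq\|\mathcal{X}^{*}(\ev+\omegav)\|_{\mathcal{A}}^{*}+\|\mathcal{X}^{*}(\omegav)\|_{\mathcal{A}}^{*}\leq\mu+\|\mathcal{X}^{*}(\omegav)\|_{\mathcal{A}}^{*}. \nonumber
\end{equation}
Hence the lemma reduces to proving $\|\mathcal{X}^{*}(\omegav)\|_{\mathcal{A}}^{*}\leq\mu$ with probability at least $1-CK/\sqrt{N^{3}}$, which I would establish conditionally on $\Dm$, so that $\|\Dm\|_{F}$ is deterministic and the randomness is only in $\omegav$.

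Writing $\psiv(\rv):=\mathcal{X}^{*}(\omegav)\av(\rv)\in\mathbb{C}^{K}$, the quantity to control is $\sup_{\rv}\|\psiv(\rv)\|_{2}$, a supremum of a $\mathbb{C}^{K}$-valued Gaussian process linear in $\omegav$. I would treat the $K$ coordinates separately through the elementary bound $\sup_{\rv}\|\psiv(\rv)\|_{2}^{2}\leq\sum_{i=1}^{K}\sup_{\rv}|[\psiv(\rv)]_{i}|^{2}$. Using (\ref{eq: vec 1}) and (\ref{eq: matrix D}), the vector $\widetilde{\Dm}_{p}^{H}\av(\rv)$ factors as $e^{-i2\pi pf}\sum_{l}\dv_{p-l}D_{N}(l/L-\tau)$, so each coordinate $[\psiv(\rv)]_{i}=\sum_{p}[\omegav]_{p}e^{-i2\pi pf}\sum_{l}[\dv_{p-l}]_{i}D_{N}(l/L-\tau)$ is a scalar complex Gaussian whose variance, by a Parseval/DFT identity, equals $\sigma_{\omegav}^{2}\|\hat{\dv}_{i}\|_{2}^{2}$ independently of $\rv$, where $\hat{\dv}_{i}$ is the $i$-th row of $\Dm^{H}$ in column form. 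Summing these variances over $i$ returns exactly $\sigma_{\omegav}^{2}\|\Dm\|_{F}^{2}$, which is what ties the final bound to $\|\Dm\|_{F}$.

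For each fixed $i$, the map $\rv\mapsto[\psiv(\rv)]_{i}$ is a bivariate trigonometric polynomial of degree $N$ in each of $\tau$ and $f$. I would therefore bound $\sup_{\rv}|[\psiv(\rv)]_{i}|$ by discretizing $[0,1]^{2}$ on a grid of $\mathcal{O}(N)\times\mathcal{O}(N)$ points, controlling the gap between grid points with Bernstein's polynomial inequality (the same tool already used in Lemma~\ref{lemma: error bound}) through a self-bounding argument, and applying a scalar Gaussian tail bound at each grid point. Choosing the deviation level proportional to $\sqrt{\log N}$ makes the union bound over the $\mathcal{O}(N^{2})$ grid points yield a per-coordinate failure probability of order $1/\sqrt{N^{3}}$ and the estimate $\sup_{\rv}|[\psiv(\rv)]_{i}|\leq C\sigma_{\omegav}\|\hat{\dv}_{i}\|_{2}\sqrt{\log N}$. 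A final union bound over the $K$ coordinates produces the probability $1-CK/\sqrt{N^{3}}$ and, combined with the coordinate decomposition and $\sum_{i}\|\hat{\dv}_{i}\|_{2}^{2}=\|\Dm\|_{F}^{2}$, gives $\sup_{\rv}\|\psiv(\rv)\|_{2}\leq C\sigma_{\omegav}\|\Dm\|_{F}\sqrt{\log N}$; the constants are calibrated so that this does not exceed $\mu=6\lambda\sigma_{\omegav}\|\Dm\|_{F}\sqrt{\log N}$ for $\lambda\geq1$.

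The main obstacle I anticipate is the constant bookkeeping in the net/Bernstein step: the grid spacing, the Bernstein factor $\mathcal{O}(N)$, the Gaussian tail constant, and the two union bounds must be balanced so that the leading constant stays below the hard-wired value $6$ while the failure probability collapses to precisely $CK/\sqrt{N^{3}}$. A secondary delicate point is computing the per-coordinate variance exactly as $\sigma_{\omegav}^{2}\|\hat{\dv}_{i}\|_{2}^{2}$ via the Dirichlet-kernel and Parseval identities rather than through a loose upper bound, since it is this exact evaluation that produces the sharp $\|\Dm\|_{F}$ dependence appearing in $\mu$.
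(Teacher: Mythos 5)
Your proposal is correct and follows essentially the same route as the paper's Appendix~C: the same triangle-inequality reduction via the optimality condition (\ref{eq: suff cond 1}) to the noise term $\left|\left|\mathcal{X}^{*}\left(\omegav\right)\right|\right|_{\mathcal{A}}^{*}$, the same exact per-coordinate variance $\sigma_{\omegav}^{2}\left|\left|\hat{\dv}_{i}\right|\right|_{2}^{2}$ summing to $\sigma_{\omegav}^{2}\left|\left|\Dm\right|\right|_{F}^{2}$, and the same grid discretization with Bernstein's polynomial inequality, Gaussian tails, and a union bound over $\mathcal{O}(N^{2})$ grid points and $K$ coordinates (the paper takes $M=5\pi N$). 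The only cosmetic difference is that you move the supremum inside the coordinate sum before discretizing, whereas the paper discretizes the full squared norm $Q\left(\rv\right)$ first and then peels off the $K$ coordinates at each grid point; both orderings yield the same failure probability $CK/\sqrt{N^{3}}$.
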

The proof of Lemma~\ref{lemma: bound on phi} is provided in Appendix~\ref{app: phi bound proof}.
\begin{theorem}
\label{th: main dual cert}
Let $\{\rv_{j}\}_{j=1}^{R}$ satisfy the separation in (\ref{eq: seperation condition}) and take any vector $\hv_{j} \in \mathbb{C}^{K \times 1}$ with $||\hv_{j}||_{2}=1$. Then, there exists a trigonometric vector polynomial $\fv\left(\rv\right)=\mathcal{X}^{*}\left(\qv\right) \av\left(\rv\right) \in \mathbb{C}^{K \times 1}$, where $\qv \in \mathbb{C}^{K \times 1}$ is the solution to (\ref{eq: dual reg problem}), such that when $L \geq \bar{C}_{1} R K \widetilde{K}^{4} \log^{2}\left(\frac{  \tilde{C}_{1} R^{2} K^{2} L^{3} }{\delta}\right)\log^{2}\left(\frac{ \tilde{C}_{1} (K+1)  L^{3} }{\delta}\right)$ with $\delta >0$ we have 
\begin{equation}
\label{eq: hold assump 1}
\fv\left(\rv_{j}\right)= \text{sign}\left(c_{j}\right) \hv_{j}, \ \ \forall \rv_{j} \in \mathcal{R}
\end{equation}
\begin{equation}
\label{eq: hold assump 3}
||\fv\left(\rv\right)||_{2} \leq  1-C_{2},\ \ \forall \rv \in \Omega_{\text{far}}.
\end{equation}
Moreover, in the region $\Omega_{\text{close}}\left(j\right)$ we have
\begin{equation}
\label{eq: hold assump 2}
||\fv\left(\rv\right)||_{2} \leq  1-C_{2}^{*}N^{2} \left(\left(\tau-\tau_{j}\right)-\left(f-f_{j}\right)\right)^{2},
\end{equation}
\begin{align}
\label{eq: hold assump 4}
||\fv\left(\rv\right)-\text{sign}\left(c_{j}\right)\hv_{j}||_{2} \leq &\bar{C}_{2} N^{2}\left(|\tau-\tau_{j}|+|f-f_{j}|\right)^{2}.
\end{align}
\end{theorem}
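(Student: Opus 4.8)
The plan is to construct an explicit dual certificate and verify the four listed properties by the interpolation strategy of Cand\`es--Fern\'andez-Granda, adapted to the blind 2D setting as in \cite{suliman2018blind}. I would seek $\fv$ as a superposition, centered at the support points $\{\rv_j\}_{j=1}^R$, of a base kernel together with its first-order partial derivatives. Because $\fv$ is $\mathbb{C}^K$-valued and must both match the directions $\hv_j$ and be expressible as $\mathcal{X}^*(\qv)\av(\rv)=\sum_p [\qv]_p \widetilde{\Dm}_p^H\av(\rv)$, the effective kernel is the \emph{randomized} one induced by the columns $\dv_l$ of $\Dm^H$. As flagged in the introduction, a single shifted kernel (as in the 1D treatment of \cite{li2019atomic}) cannot simultaneously meet the distinct 2D constraints, so I would use several distinct kernels: one governing the interpolation value and separate ones governing the $\tau$- and $f$-directional behavior. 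Writing
\[
\fv(\rv)=\sum_{j=1}^R\Big[\bm{\alpha}_j\,\bar K_0(\rv-\rv_j)+\bm{\beta}_j^{(\tau)}\,\partial_\tau\bar K_1(\rv-\rv_j)+\bm{\beta}_j^{(f)}\,\partial_f\bar K_2(\rv-\rv_j)\Big],
\]
with $\bm{\alpha}_j,\bm{\beta}_j^{(\tau)},\bm{\beta}_j^{(f)}\in\mathbb{C}^K$, leaves the coefficient vectors free.

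I would fix these coefficients by imposing the interpolation system that makes each $\rv_j$ a strict local maximizer of $\|\fv\|_2$ attaining the prescribed value: namely $\fv(\rv_j)=\text{sign}(c_j)\hv_j$ together with the vanishing-gradient conditions $\nabla_\tau\fv(\rv_j)=\nabla_f\fv(\rv_j)=\zerov$ for every $j$. Stacking these over $j=1,\dots,R$ gives a square linear system $\Bm\bm{v}=\bm{u}$, where $\Bm$ is block-structured with entries built from the kernels and their derivatives evaluated at the pairwise separations $\rv_j-\rv_{j'}$, and $\bm{u}$ encodes the signs and directions.

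The crux of the argument is to show that $\Bm$ is invertible and well-conditioned, with $O(1)$ coefficients. Under the isotropy condition (\ref{eq: G assumption 1}), $\mathbb{E}[\Bm]$ collapses to the deterministic scalar 2D Fej\'er interpolation matrix tensored with $\Id_K$; this matrix is strictly diagonally dominant, hence invertible, precisely because of the minimum separation $2.38/N$ in Assumption~\ref{as 4}, mirroring the non-blind 2D theory. I would then prove that $\|\Bm-\mathbb{E}[\Bm]\|_2$ is small via a matrix Bernstein inequality, where the per-term bounds and the variance proxy are controlled through the concentration property of Assumption~\ref{as 2} and the sample-complexity lower bound $L\geq \bar C_1 RK\widetilde K^4\log^2(\cdot)\log^2(\cdot)$. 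Invertibility of $\Bm$ and smallness of the coefficients follow, and (\ref{eq: hold assump 1}) holds by construction.

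Finally I would establish the pointwise estimates. For (\ref{eq: hold assump 4}) and the local bound (\ref{eq: hold assump 2}), I would Taylor-expand $\fv$ about $\rv_j$; the vanishing first derivatives annihilate the linear term, and the kernel's curvature at its peak controls the quadratic remainder, producing the $N^2(|\tau-\tau_j|+|f-f_j|)^2$ scaling. For the far-field bound (\ref{eq: hold assump 3}), I would combine the rapid off-peak decay of the kernels with the separation condition to recover the corresponding deterministic estimate, then transfer it to the random $\fv$ uniformly over $\rv\in\Omega_{\text{far}}$ by an $\varepsilon$-net argument, together with the concentration bounds and Bernstein's polynomial inequality \cite[Theorem 1.6.2]{tropp2015introduction} to pass from the net to all of $[0,1]^2$. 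I expect the concentration step to be the main obstacle: one must control the random kernel and its first and second derivatives \emph{uniformly} in $\rv$, which is exactly why the proof needs both the concentration property and the strong bound on $L$, and why the success probability degrades through the union bound inherent to the net.
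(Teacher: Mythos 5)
Your proposal follows essentially the same route as the paper's proof: the certificate is built exactly as a sum of a randomized kernel matrix and its two first-order partials with vector coefficients fixed by the interpolation-plus-stationarity system, invertibility is obtained from the expected (Fej\'er-type) interpolation matrix under the separation condition plus matrix-Bernstein concentration under the bound on $L$, the near-field bounds (\ref{eq: hold assump 2}) and (\ref{eq: hold assump 4}) come from Taylor expansion with the vanishing gradients killing the linear term, and the far-field bound is transferred from the deterministic estimate by uniform concentration. The paper simply delegates most of these steps to the construction already carried out in \cite{suliman2018blind}, so no substantive difference in approach remains.
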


\begin{theorem}
\label{th: main dual cert 1}
Let $\{\rv_{j}\}_{j=1}^{R}$ satisfy the minimum separation in (\ref{eq: seperation condition}) and consider the set $\left\lbrace\hv \in \mathbb{C}^{K \times 1}: ||\hv||_{2}=1\right\rbrace$. Then, there exists a vector polynomial $\fv_{1}\left(\rv\right)=\mathcal{X}^{*}\left(\qv_{1}\right) \av\left(\rv\right) \in \mathbb{C}^{K \times 1}$ for some $\qv_{1} \in \mathbb{C}^{K \times 1}$ with
\begin{equation}
\label{eq: conds for f1}
\fv_{1}\left(\rv_{j}\right)=\fv_{1}^{\left(0,1\right)}\left(\rv_{j}\right)= \bm{0}_{K}, \fv_{1}^{\left(1,0\right)}\left(\rv_{j}\right)= \text{sign}\left(c_{j}\right) \hv_{j}, \forall \rv_{j} \in \mathcal{R}
\end{equation}
where $\fv_{1}^{(m,n)}\left(\rv_{j}\right) = \frac{\partial^{m}}{\partial \tau^{m}} \frac{\partial^{n}}{\partial f^{n}}\fv_{1}\left(\rv\right)\bigr\rvert_{\rv = \rv_{j}}$, that satisfies
\begin{equation}
\label{eq: hold assump 3 for 1}
||\fv_{1}\left(\rv\right)||_{2} \leq  \frac{C_{3}}{N},\ \ \forall \rv \in \Omega_{\text{far}},
\end{equation}  
\begin{align}
\label{eq: hold assump 4 for 1}
||\fv_{1}\left(\rv\right)-\hv_{j}\left(\tau-\tau_{j}\right)||_{2} &\leq  \bar{C}_{3} N \left(|\tau-\tau_{j}|+|f-f_{j}|\right)^{2}, \nonumber\\
&\forall \rv \in \Omega_{\text{close}}\left(j\right).
\end{align}
\end{theorem}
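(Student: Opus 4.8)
The plan is to build $\fv_1$ with the same kernel-interpolation machinery used for the main certificate $\fv$ in Theorem~\ref{th: main dual cert}, only changing the interpolation data from (\ref{eq: hold assump 1}) to (\ref{eq: conds for f1}). Concretely, I would take $\fv_1$ to be a combination of a 2D kernel $\Kc$ and its first partial derivatives, centered at each node, with \emph{vector} coefficients in $\mathbb{C}^{K}$,
\begin{equation}
\fv_1(\rv) = \sum_{j'=1}^{R}\Big[\Kc(\rv-\rv_{j'})\,\bm{\alpha}_{j'} + \partial_\tau\Kc(\rv-\rv_{j'})\,\bm{\beta}_{j'} + \partial_f\Kc(\rv-\rv_{j'})\,\bm{\gamma}_{j'}\Big], \nonumber
\end{equation}
where $\Kc$ is assembled from products of squared Fej\'er/Dirichlet factors so that, exactly as for $\fv$, each summand---and hence $\fv_1$---lies in the range of $\qv\mapsto\mathcal{X}^{*}(\qv)\av(\cdot)$, giving the required form $\fv_1=\mathcal{X}^{*}(\qv_1)\av(\rv)$. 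This is where the ``multiple different kernels'' device flagged in Section~\ref{sec: main results theorem} is needed: a single shifted kernel cannot simultaneously satisfy the vanishing, zero-$f$-derivative, and unit-$\tau$-derivative constraints of (\ref{eq: conds for f1}). Imposing (\ref{eq: conds for f1}) then turns into a square linear system for the $3R$ coefficient vectors.

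The second step is to solve that system. I would show that, in expectation over the random $\Dm$, the $3RK\times 3RK$ interpolation matrix factors---up to lower-order random fluctuations---as the Kronecker product of a deterministic $3R\times 3R$ kernel Gram matrix with $\Id_{\text{K}}$, the reduction to $\Id_{\text{K}}$ coming from the isotropy condition (\ref{eq: G assumption 1}). The scalar Gram matrix is well-conditioned by the minimum separation of Assumption~\ref{as 4}, exactly as in the scalar Cand\`es--Fern\'andez-Granda analysis. Invoking the concentration property of Assumption~\ref{as 2} through a matrix Bernstein argument, I would then prove that the genuine random interpolation matrix concentrates around this mean and stays invertible once $L \geq \bar{C}_1 RK\widetilde{K}^4\log^2(\cdot)\log^2(\cdot)$, which is precisely the sample-complexity hypothesis; the failure probability is controlled by $\delta$ through the logarithmic factors. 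Solving the system then shows the coefficients are small perturbations of their ideal values, with $\bm{\beta}_{j'}$ the dominant coefficient carrying the $\text{sign}(c_{j'})\hv_{j'}$ derivative data and $\bm{\alpha}_{j'},\bm{\gamma}_{j'}$ of lower order.

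With the coefficients controlled, the bounds (\ref{eq: hold assump 3 for 1}) and (\ref{eq: hold assump 4 for 1}) follow from decay estimates for $\Kc$ and its derivatives. On $\Omega_{\text{far}}$ every node sits at distance at least $2.38/N$, so the kernel tails sum to a quantity of order $1/N$, yielding $||\fv_1(\rv)||_{2}\le C_3/N$. On $\Omega_{\text{close}}(j)$ I would split off the nearest-node term and Taylor expand it to produce the linear piece $\hv_j(\tau-\tau_j)$, bounding the remainder by the product of the second-derivative estimate for $\Kc$ (of order $N$ at this normalization) with $(|\tau-\tau_j|+|f-f_j|)^2$ and absorbing the far-node tails into the same order; this is the same mechanism that yields (\ref{eq: hold assump 4}) for $\fv$ in Theorem~\ref{th: main dual cert}.

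I expect the main obstacle to be the probabilistic step: establishing that the block interpolation matrix concentrates and stays invertible with the required probability while simultaneously tracking the three coupled parameters $R$, $K$, and $\widetilde{K}$. Unlike the 1D scalar construction of \cite{li2019atomic}, here the randomness of $\Dm$ is entangled with the subspace structure through $\mathcal{X}^{*}$, so the concentration must be controlled in operator norm across $K\times K$ blocks; this is what dictates both the $\widetilde{K}^4$ dependence and the logarithmic factors in the sample-complexity bound, and it is the step demanding the most care. A secondary subtlety is ensuring that the perturbed coefficients preserve all three interpolation identities in (\ref{eq: conds for f1}) exactly rather than only approximately, which is what keeps the near-field linear term clean in (\ref{eq: hold assump 4 for 1}).
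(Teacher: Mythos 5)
Your proposal follows essentially the same route as the paper's proof: the paper constructs $\fv_{1}$ as $\sum_{j}\Mm_{(0,0)}(\rv,\rv_{j})\alphav^{(1)}_{j}+\Mm_{(1,0)}(\rv,\rv_{j})\betav^{(1)}_{j}+\Mm_{(0,1)}(\rv,\rv_{j})\gammav^{(1)}_{j}$ with the matrix kernels of \cite{suliman2018blind}, reduces the expected interpolation system to $\widebar{\Em}\otimes\Id_{\text{K}}$, bounds the coefficients via Schur complements and the Cand\`es--Fern\'andez-Granda kernel estimates, controls the deviation from the mean by the concentration result of \cite[Lemma~13]{suliman2018blind} under the stated sample complexity, and obtains the far-field and near-field bounds exactly as you describe (coefficient bounds times kernel sums, and a second-order Taylor expansion of $\bar{\fv}_{1}(\rv)-\hv_{j}(\tau-\tau_{j})$ whose first-order terms vanish by construction). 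The only cosmetic difference is that the $C_{3}/N$ far-field bound arises from the coefficients being $O(1/N)$ and $O(1/N^{2})$ against $O(1)$ and $O(N)$ kernel sums rather than from the kernel tails themselves decaying, but this does not affect the correctness of your argument.
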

\begin{theorem}
\label{th: main dual cert 2}
Let $\{\rv_{j}\}_{j=1}^{R}$ satisfy the minimum separation in (\ref{eq: seperation condition}) and take any vector $\hv_{j} \in \mathbb{C}^{K \times 1}$ with $||\hv_{j}||_{2}=1$. Then, there exists a trigonometric vector polynomial $\fv_{2}\left(\rv\right)=\mathcal{X}^{*}\left(\qv_{2}\right) \av\left(\rv\right) \in \mathbb{C}^{K \times 1}$ for some $\qv_{2} \in \mathbb{C}^{K \times 1}$ with 
\begin{equation}
\label{eq: conds for f2}
\fv_{2}\left(\rv_{j}\right)=  \fv_{2}^{\left(1,0\right)}\left(\rv_{j}\right)=\bm{0}_{K}, \fv_{2}^{\left(0,1\right)}\left(\rv_{j}\right)= \text{sign}\left(c_{j}\right) \hv_{j}, \forall \rv_{j} \in \mathcal{R}
\end{equation}
that satisfies
\begin{equation}
\label{eq: hold assump 3 for 2}
||\fv_{2}\left(\rv\right)||_{2} \leq \frac{C_{4}}{N},\ \ \forall \rv \in \Omega_{\text{far}},
\end{equation}
\begin{align}
\label{eq: hold assump 4 for 2}
||\fv_{2}\left(\rv\right)-\hv_{j}\left(f-f_{j}\right)||_{2} &l\leq \bar{C}_{4} N \left(|\tau-\tau_{j}|+|f-f_{j}|\right)^{2}, \nonumber\\
&\forall \rv \in \Omega_{\text{close}}\left(j\right).
\end{align}
\end{theorem}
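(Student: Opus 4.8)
The plan is to exploit the symmetry between this statement and Theorem~\ref{th: main dual cert 1}: the polynomial $\fv_2$ is the exact analogue of $\fv_1$ with the roles of the two coordinates $\tau$ and $f$ interchanged, so the construction transfers essentially verbatim once one notes that the underlying interpolation kernel is symmetric in its two arguments. Concretely, I would build $\fv_2$ from a separable kernel $\bar{K}\left(\rv\right)=\bar{K}\left(\tau\right)\bar{K}\left(f\right)$ (a squared-Fej\'er / Dirichlet-type kernel, as in \cite{suliman2018blind, candes2014towards}) together with its first-order partial derivatives, centered at each support point, positing
\begin{equation}
\fv_2\left(\rv\right)=\sum_{j=1}^{R}\Big[\bar{K}\left(\rv-\rv_{j}\right)\alphav_{j}+\bar{K}^{(1,0)}\left(\rv-\rv_{j}\right)\betav_{j}+\bar{K}^{(0,1)}\left(\rv-\rv_{j}\right)\gammav_{j}\Big], \nonumber
\end{equation}
with unknown coefficient vectors $\alphav_{j},\betav_{j},\gammav_{j}\in\mathbb{C}^{K\times 1}$, and then requiring $\fv_2$ to have the form $\mathcal{X}^{*}\left(\qv_{2}\right)\av\left(\rv\right)$ so that a feasible dual point $\qv_{2}$ is produced.

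Next I would impose the interpolation constraints (\ref{eq: conds for f2}), namely $\fv_2\left(\rv_{j}\right)=\fv_2^{(1,0)}\left(\rv_{j}\right)=\bm{0}_{K}$ and $\fv_2^{(0,1)}\left(\rv_{j}\right)=\text{sign}\left(c_{j}\right)\hv_{j}$ at every $\rv_{j}\in\mathcal{R}$. Evaluating the ansatz and its derivatives at the nodes converts these into a linear system for the coefficients whose system matrix is, by the minimum separation in Assumption~\ref{as 4} and the rapid decay of $\bar{K}$ and its derivatives, a small perturbation of a block-diagonal (indeed near-identity after suitable normalization) matrix. The task is then to show this system is invertible and that its solution satisfies $||\alphav_{j}||_{2},||\gammav_{j}||_{2}=\mathcal{O}(1)$ and $||\betav_{j}||_{2}=\mathcal{O}(1/N)$ (after the $2\pi N$ normalization of the derivative kernels) with high probability. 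This is where the randomness of $\Dm$ (Assumptions~\ref{as 1} and \ref{as 2}) and of $\hv_{j}$ (Assumption~\ref{as 3}) enters: because admissible polynomials are constrained to the range of $\mathcal{X}^{*}$, the effective kernel is a random object determined by $\Dm$, and the concentration property must be invoked to show it concentrates around its deterministic subspace-averaged counterpart under the sample-complexity bound $L\ge \bar{C}_{1}RK\widetilde{K}^{4}\log^{2}(\cdots)\log^{2}(\cdots)$ assumed in the statement.

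With the coefficients bounded, I would finish with the two standard estimates. On $\Omega_{\text{far}}$, $\fv_2\left(\rv\right)$ is a sum of kernel tails evaluated at distance at least $0.2447/N$ from every node; combining the decay of $\bar{K},\bar{K}^{(1,0)},\bar{K}^{(0,1)}$ with the coefficient bounds yields $||\fv_2\left(\rv\right)||_{2}\le C_{4}/N$, giving (\ref{eq: hold assump 3 for 2}), where the extra factor $1/N$ reflects that $\fv_2$ carries no zeroth-order term at the target node. On $\Omega_{\text{close}}\left(j\right)$ I would Taylor-expand $\fv_2$ about $\rv_{j}$: the constraints (\ref{eq: conds for f2}) annihilate the constant and $\tau$-linear terms and fix the $f$-linear term to $\text{sign}\left(c_{j}\right)\hv_{j}\left(f-f_{j}\right)$, which matches the first-order term in (\ref{eq: hold assump 4 for 2}) up to the unit-modulus factor, so the remainder is second order. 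Bounding the Hessian of $\fv_2$ uniformly (again via Bernstein's inequality \cite[Theorem 1.6.2]{tropp2015introduction}, as in Lemma~\ref{lemma: error bound}) controls this remainder by $\bar{C}_{4}N\left(|\tau-\tau_{j}|+|f-f_{j}|\right)^{2}$, which is (\ref{eq: hold assump 4 for 2}).

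The main obstacle is the middle step: proving that the constrained interpolation system is well-conditioned and that the coefficients obey the stated magnitudes with high probability. Unlike the deterministic Cand\`es--Fern\'andez-Granda certificate, the kernel here is forced to live in the range of $\mathcal{X}^{*}$, so one cannot simply invoke closed-form kernel identities; instead the heavy lifting is a concentration argument (using Assumptions~\ref{as 1}--\ref{as 3} and the bound on $L$) showing the random interpolation matrix is close to a well-conditioned reference matrix. This is the same technical heart that underlies Theorem~\ref{th: main dual cert} and Theorem~\ref{th: main dual cert 1}, and I would reuse that machinery with $\tau$ and $f$ swapped rather than redo it from scratch.
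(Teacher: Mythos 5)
Your proposal matches the paper's proof: the paper disposes of Theorem~\ref{th: main dual cert 2} in exactly the way you describe, by reusing the ansatz of Theorem~\ref{th: main dual cert 1} built from the kernels $\Mm_{(0,0)},\Mm_{(1,0)},\Mm_{(0,1)}$ with new coefficient vectors $\alphav^{(2)}_{j},\betav^{(2)}_{j},\gammav^{(2)}_{j}$ chosen to satisfy (\ref{eq: conds for f2}), and then running the same invertibility, coefficient-bounding, far-region, and Taylor-expansion arguments with the roles of $\tau$ and $f$ interchanged. The only quibble is your stated coefficient magnitudes (the paper's analogous bounds are $\|\alphav_{j}\|_{2}=\mathcal{O}(1/N)$ and $\|\betav_{j}\|_{2},\|\gammav_{j}\|_{2}=\mathcal{O}(1/N^{2})$ before the $\kappa$ normalization), but this does not affect the structure or validity of the argument.
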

The proofs of Theorems~\ref{th: main dual cert}, \ref{th: main dual cert 1}, and \ref{th: main dual cert 2} are the subject of Appendix~\ref{app: Theorems proof}. 

Now we are ready to bound each term in (\ref{eq: error bound tem}).
\begin{lemma}
\label{lemma: bounds for Ts}
Let $C_{5}, \bar{C}_{5}$ be two constants with different values for different terms. Then, the following bounds hold
\begin{align}
& T_{0} \leq   \left| \ \iint \limits_{0}^{1} \nuv\left(\rv\right)^{H}\fv\left(\rv\right) d\rv \right|+ \iint \limits_{\Omega_{\text{far}}} ||\nuv\left(\rv\right)||_{2}d\rv +C_{5} T_{3} \nonumber \\
&T_{1} \leq   (2\pi N) \hspace{-2pt}\left|  \iint \limits_{0}^{1}\hspace{-3pt} \nuv\left(\rv\right)^{H}\fv_{1}\left(\rv\right) d\rv \right|\hspace{-3pt}+ \bar{C}_{5}\hspace{-3pt}\iint \limits_{\Omega_{\text{far}}}\hspace{-3pt} ||\nuv\left(\rv\right)||_{2}d\rv \hspace{-2pt}+C_{5} T_{3} \nonumber \\
&T_{2} \leq   (2\pi N) \hspace{-2pt}\left|  \iint \limits_{0}^{1}\hspace{-3pt} \nuv\left(\rv\right)^{H}\fv_{2}\left(\rv\right) d\rv \right|\hspace{-3pt}+ \bar{C}_{5}\hspace{-3pt}\iint \limits_{\Omega_{\text{far}}}\hspace{-3pt} ||\nuv\left(\rv\right)||_{2}d\rv \hspace{-2pt}+C_{5} T_{3}. \nonumber 
\end{align}
\end{lemma}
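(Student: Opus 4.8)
The plan is to treat the three interpolating polynomials $\fv$, $\fv_1$, and $\fv_2$ from Theorems~\ref{th: main dual cert}, \ref{th: main dual cert 1}, and \ref{th: main dual cert 2} as vector-valued test functions to be paired with the difference measure $\nuv$, exploiting the freedom in the interpolation directions. The central observation is that in each theorem the unit vector attached to $\rv_j$ may be prescribed freely; I will choose it, separately for each bound, to point along the vector integral that defines $T_0^j$, $T_1^j$, or $T_2^j$. With this alignment the leading (interpolated) part of the pairing reproduces \emph{exactly} the Euclidean norm appearing in the corresponding $T_k^j$, while the interpolation remainder is absorbed into $T_3$ and the far-region contribution into $\iint_{\Omega_{\text{far}}}\|\nuv\|_2\,d\rv$.

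Concretely, for $T_0$ I would take the unit vector $\uv_j \propto \iint_{\Omega_{\text{close}}(j)}\nuv(\rv)\,d\rv$ and invoke Theorem~\ref{th: main dual cert} with the direction chosen so that $\fv(\rv_j)=\uv_j$. Splitting $\iint_0^1 = \sum_j\iint_{\Omega_{\text{close}}(j)} + \iint_{\Omega_{\text{far}}}$ and writing $\fv(\rv)=\uv_j+(\fv(\rv)-\uv_j)$ on each ball, the first piece gives $\big(\iint_{\Omega_{\text{close}}(j)}\nuv\,d\rv\big)^H\uv_j = T_0^j$, which is real and positive by the choice of $\uv_j$. The second piece is controlled by Cauchy--Schwarz together with the quadratic remainder \eqref{eq: hold assump 4}, $\|\fv(\rv)-\uv_j\|_2\le\bar C_2 N^2(|\tau-\tau_j|+|f-f_j|)^2$, which integrates to at most $C_5 T_3^j$ once the $\tfrac12(2\pi N)^2$ normalization of $T_3^j$ is accounted for. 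On $\Omega_{\text{far}}$ the bound $\|\fv\|_2\le 1-C_2\le 1$ from \eqref{eq: hold assump 3} gives $|\nuv^H\fv|\le\|\nuv\|_2$. Summing over $j$ and applying the (reverse) triangle inequality to move the remainder and far-region terms to the right, together with $|\iint_{\Omega_{\text{close}}}\nuv^H\fv|\le|\iint_0^1\nuv^H\fv|+|\iint_{\Omega_{\text{far}}}\nuv^H\fv|$, yields the stated inequality for $T_0$.

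The bounds for $T_1$ and $T_2$ follow the same template with $\fv_1$ and $\fv_2$ in place of $\fv$. For $T_1$ I would align the free direction of $\fv_1$ with the unit vector $\vv_j\propto\iint_{\Omega_{\text{close}}(j)}(\tau-\tau_j)\nuv\,d\rv$, so that by the first-order interpolation \eqref{eq: conds for f1} the leading term of $2\pi N\iint_{\Omega_{\text{close}}(j)}\nuv^H\fv_1\,d\rv$ equals $T_1^j$. The remainder \eqref{eq: hold assump 4 for 1}, which scales like $N(|\tau-\tau_j|+|f-f_j|)^2$, produces $C_5 T_3^j$ after multiplication by $2\pi N$, while the far-region decay $\|\fv_1\|_2\le C_3/N$ from \eqref{eq: hold assump 3 for 1} converts the factor $2\pi N$ into the finite constant $\bar C_5$ in front of $\iint_{\Omega_{\text{far}}}\|\nuv\|_2\,d\rv$. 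The argument for $T_2$ is identical, replacing $(\tau-\tau_j)$ by $(f-f_j)$ and invoking \eqref{eq: conds for f2}, \eqref{eq: hold assump 3 for 2}, and \eqref{eq: hold assump 4 for 2}.

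The main obstacle, and the step needing the most care, is justifying the \emph{adaptive} choice of the interpolation directions: the vectors $\uv_j$ and $\vv_j$ depend on $\nuv$, hence on the noise realization and on the optimizer $\widehat{\Um}$, whereas Theorems~\ref{th: main dual cert}--\ref{th: main dual cert 2} are high-probability existence statements over the random subspace $\Dm$. I would therefore rely on those certificate constructions holding \emph{uniformly} over all admissible unit-norm direction patterns $\{\hv_j\}$ — the construction is linear in the prescribed pattern, so the governing invertibility and operator-norm estimates are pattern-independent — which legitimizes selecting the pattern after $\nuv$ is revealed. The residual bookkeeping is routine but must be tracked carefully: keeping the scalar $\mathrm{sign}(c_j)$ and the Hermitian conjugation straight so that each leading term is genuinely real and equals the intended norm, and checking that the prefactors $\bar C_2,\bar C_3,\bar C_4$ collapse into the single constants $C_5,\bar C_5$ after the normalization of $T_3$.
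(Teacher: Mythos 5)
Your proposal matches the paper's own proof essentially step for step: the paper also writes each $T_k^j$ as an inner product with the self-normalized direction vector $\iint_{\Omega_{\text{close}}(j)}\nuv\,d\rv / \|\cdot\|_2$ (resp.\ its $(\tau-\tau_j)$- and $(f-f_j)$-weighted analogues), adds and subtracts $\fv$, $\fv_1$, $\fv_2$ with those directions prescribed at the nodes, bounds the leading term by extending to $[0,1]^2$ plus the far region via (\ref{eq: hold assump 3}), (\ref{eq: hold assump 3 for 1}), (\ref{eq: hold assump 3 for 2}), and absorbs the quadratic interpolation remainders (\ref{eq: hold assump 4}), (\ref{eq: hold assump 4 for 1}), (\ref{eq: hold assump 4 for 2}) into $C_5 T_3$. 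Your observation that the adaptive, $\nuv$-dependent choice of interpolation directions must be licensed by the certificate theorems holding for arbitrary prescribed unit vectors is exactly the (implicit) mechanism the paper relies on, so the argument is correct and not genuinely different.
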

The proof of Lemma~\ref{lemma: bounds for Ts} is based on Theorems~\ref{th: main dual cert}, \ref{th: main dual cert 1}, and \ref{th: main dual cert 2} and is appended in Appendix~\ref{app: Ts bounds}.

Upon applying Lemmas~\ref{lemma: bound on phi} and \ref{lemma: bounds for Ts} we can rewrite (\ref{eq: error bound tem}) as 
\begin{align}
\label{eq: error simplified}
&||\ev||_{2}^{2} \leq 2\mu \times \nonumber\\ 
& \hspace{-5pt}\left[\left| \iint \limits_{0}^{1} \nuv\left(\rv\right)^{H}\fv\left(\rv\right) d\rv \right|+2\pi N \left| \iint \limits_{0}^{1} \nuv\left(\rv\right)^{H}\fv_{1}\left(\rv\right) d\rv \right|+2\pi N \times \right.\nonumber\\
& \left. \left| \ \iint \limits_{0}^{1} \nuv\left(\rv\right)^{H}\fv_{2}\left(\rv\right) d\rv \right|\hspace{-2pt}+\bar{C}_{6}\iint \limits_{\Omega_{\text{far}}} \hspace{-2pt}||\nuv\left(\rv\right)||_{2}d\rv + C_{6} T_{3} \right]
\end{align}
In Lemmas~\ref{lemma: bound for 0 1 term}, \ref{lemma: bound for 0 1 term 2}, \ref{lemma: bound for 0 1 term 3}, and \ref{lemma: bound for 0 1 term 4} below, we provide an upper bound for each term in (\ref{eq: error simplified}) separately.
\begin{lemma}
\label{lemma: bound for 0 1 term}
For a constant $C_{9}$ and $\delta >0$, the bound
\begin{align}
\label{eq: bound for int 0 to 1}
&\left| \iint \limits_{0}^{1} \nuv\left(\rv\right)^{H}\fv\left(\rv\right) d\rv \right| \nonumber\\
&\leq \sup_{\rv \in [0,1]^{2}} ||\phiv\left(\rv\right)||_{2} \left( C_{9} \sqrt{\frac{R}{L^{3} K} \log \left( \frac{2\left(K+1\right)}{\delta}\right)}\right) 
\end{align}
occurs with probability at least $1- \delta$ provided that $L \geq C_{9} R K \log \left(\frac{2(K+1)}{\delta}\right)$.
\end{lemma}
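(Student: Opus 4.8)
The plan is to convert the integral into a finite sum over the $R$ support points $\{\rv_j\}$, each term pairing a fixed coefficient vector with $\phiv$ (or a partial derivative of $\phiv$) evaluated at $\rv_j$, so that $\sup_{\rv}\|\phiv(\rv)\|_2$ factors out and only the magnitude of the certificate coefficients remains to be controlled.

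First I would use the explicit structure of the certificate $\fv$ produced in the proof of Theorem~\ref{th: main dual cert}. There $\fv(\rv)=\mathcal{X}^*(\qv)\av(\rv)$, and its coefficient vector is assembled as a combination of atomic measurements and their partial derivatives at the support, $\qv=\sum_{j=1}^{R}\mathcal{X}\bigl(\bm{\alpha}_j\av(\rv_j)^H+\bm{\beta}_j\,(\bigtriangledown_\tau\av(\rv_j))^H+\bm{\gamma}_j\,(\bigtriangledown_f\av(\rv_j))^H\bigr)$ with $\bm{\alpha}_j,\bm{\beta}_j,\bm{\gamma}_j\in\mathbb{C}^{K}$; the derivative atoms are present because (\ref{eq: hold assump 4}) forces $\bigtriangledown_\tau\fv(\rv_j)=\bigtriangledown_f\fv(\rv_j)=\bm{0}$ in addition to $\fv(\rv_j)=\text{sign}(c_j)\hv_j$. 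Recalling from (\ref{eq: new U}) that $\Um_{\text{o}}-\widehat{\Um}=\iint_{0}^{1}\nuv(\rv)\av(\rv)^H\,d\rv$ and from (\ref{eq: error vector}) that $\mathcal{X}(\Um_{\text{o}}-\widehat{\Um})=\ev$, the adjoint identity $\langle\mathcal{X}(\Wm),\ev\rangle=\langle\Wm,\mathcal{X}^*(\ev)\rangle$ together with $\mathcal{X}^*(\ev)\av(\rv)=\phiv(\rv)$ from (\ref{eq: func phi}) yields, up to conjugation,
\[
\iint_{0}^{1}\nuv(\rv)^H\fv(\rv)\,d\rv=\sum_{j=1}^{R}\Bigl(\bm{\alpha}_j^H\phiv(\rv_j)+\bm{\beta}_j^H\bigtriangledown_\tau\phiv(\rv_j)+\bm{\gamma}_j^H\bigtriangledown_f\phiv(\rv_j)\Bigr).
\]

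Next I would take moduli and apply Cauchy--Schwarz together with the Bernstein-type gradient estimates already established for Lemma~\ref{lemma: error bound} (cf. (\ref{eq: two bounds 1})--(\ref{eq: two bounds 2})), namely $\|\bigtriangledown_\tau\phiv(\rv_j)\|_2,\|\bigtriangledown_f\phiv(\rv_j)\|_2\le 2\pi N\sup_{\rv}\|\phiv(\rv)\|_2$. Using $\|\hv_j\|_2=1$, this factors out $\sup_{\rv}\|\phiv(\rv)\|_2$ and leaves
\[
\left|\iint_{0}^{1}\nuv^H\fv\,d\rv\right|\le\sup_{\rv}\|\phiv(\rv)\|_2\left(\sum_{j=1}^{R}\|\bm{\alpha}_j\|_2+2\pi N\sum_{j=1}^{R}\bigl(\|\bm{\beta}_j\|_2+\|\bm{\gamma}_j\|_2\bigr)\right).
\]
Since $(\bigtriangledown\av(\rv_j))^H$ carries an extra factor of order $N$ relative to $\av(\rv_j)^H$, the vanishing-gradient constraints force $\|\bm{\beta}_j\|_2,\|\bm{\gamma}_j\|_2$ to be of order $\|\bm{\alpha}_j\|_2/N$, so the two sums are of the same order and it suffices to bound $\sum_j\|\bm{\alpha}_j\|_2$ by $C_9\sqrt{(R/(L^3K))\log(2(K+1)/\delta)}$.

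Finally, by Cauchy--Schwarz over the $R$ points, $\sum_j\|\bm{\alpha}_j\|_2\le\sqrt{R}\,(\sum_j\|\bm{\alpha}_j\|_2^2)^{1/2}$, reducing matters to the $\ell_2$ aggregate of the certificate coefficients. These coefficients solve the interpolation system set up in the proof of Theorem~\ref{th: main dual cert}, whose Gram matrix concentrates around a scaled identity once $L\gtrsim RK\log(2(K+1)/\delta)$; this is exactly where Assumptions~\ref{as 1}--\ref{as 3} and the concentration property enter, and where the failure probability $\delta$ and the factor $\log(2(K+1)/\delta)$ originate (the $K+1$ reflecting a union/covering bound over the sign directions in $\mathbb{C}^{K}$). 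Propagating the normalization of $\av(\rv)$ and of $\mathcal{X}$ through the inverse of this Gram matrix gives $(\sum_j\|\bm{\alpha}_j\|_2^2)^{1/2}\lesssim (L^3K)^{-1/2}\sqrt{\log(2(K+1)/\delta)}$, and combining with the $\sqrt{R}$ factor delivers the claim. The main obstacle is precisely this last estimate: establishing, with the stated probability and under $L\ge C_9RK\log(2(K+1)/\delta)$, that the aggregate coefficient norm is of the sharp order $(L^3K)^{-1/2}$ rather than merely polynomial in $L$ and $K$. This rests on the random-operator concentration analysis underlying Theorem~\ref{th: main dual cert}, and tracking the exact $L^{3/2}$ and $\sqrt{K}$ normalization is the delicate step.
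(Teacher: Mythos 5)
Your opening manipulation (pulling the adjoint through so that the integral becomes $\langle\qv,\ev\rangle$ and hence an expression in $\phiv$) coincides with the first step of the paper's proof, but the remainder of your argument has a fatal scaling gap. The interpolation coefficients of the certificate $\fv$ are $O(1)$, not $O((L^{3}K)^{-1/2})$: the Gram matrix of the (normalized) kernels concentrates around the \emph{identity}, so $\alphav_{j}\approx\text{sign}(c_{j})\hv_{j}$ with $\|\alphav_{j}\|_{2}\approx 1$ and $\|\betav_{j}\|_{2},\|\gammav_{j}\|_{2}=O(1/N)$. Your reduction therefore yields $\sum_{j}\|\alphav_{j}\|_{2}+2\pi N\sum_{j}\left(\|\betav_{j}\|_{2}+\|\gammav_{j}\|_{2}\right)\lesssim R$, which misses the claimed bound by roughly a factor of $\sqrt{R}\,L^{3/2}\sqrt{K}$ (and would give an MSE that does not decay in $L$ at all). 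The "delicate last step" you defer to is asking for something that is simply false of these coefficients; nothing in your argument can manufacture the $(L^{3}K)^{-1/2}$ factor.

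The paper obtains the correct scale from two ingredients your proposal drops. First, it does not expand $\qv$ over the support; it inserts the renormalization $\langle\qv,\ev\rangle=\langle(\mathcal{X}\mathcal{X}^{*})^{-1}\qv,\mathcal{X}\mathcal{X}^{*}(\ev)\rangle$, computes $\mathcal{X}\mathcal{X}^{*}=L\left(\sum_{i}\|\dv_{i}\|_{2}^{2}\right)\Id\approx L^{2}K\,\Id$, and thereby reduces the problem to bounding $\iint_{0}^{1}\|\tilde{\fv}(\rv)\|_{2}\,d\rv$ for the rescaled polynomial $\tilde{\fv}=\mathcal{X}^{*}(\mathcal{X}\mathcal{X}^{*})^{-1}(\qv)\av(\rv)$, whose kernels carry the small factor $1/\sum_{i}\|\dv_{i}\|_{2}^{2}$; this is where the bulk of the $L$ and $K$ powers originate. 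Second, writing $\tilde{\fv}(\rv)=\Tm(\rv)^{H}\Lm\hv=\sum_{j}\Wm_{j}\,\text{sign}(c_{j})\hv_{j}$, it exploits the randomness of the $\hv_{j}$ (Assumption~\ref{as 3}) via the Matrix Bernstein inequality: the summands are independent and zero-mean across $j$, the identity $\mathbb{E}[\hv_{j}\hv_{j}^{H}]=\frac{1}{K}\Id_{\text{K}}$ supplies an additional $1/K$ in the variance statistic, and concentration of the sum of $R$ independent terms yields $\sqrt{R}$ rather than $R$. Your deterministic Cauchy--Schwarz over the $R$ support points discards exactly this cancellation. (Also, the $(K+1)$ in the logarithm is the dimension factor of Matrix Bernstein applied to $K\times 1$ vectors, not a covering of sign directions, and the failure probability $\delta$ together with the condition on $L$ arise from the choice of the deviation level $t$ in that inequality, not from invertibility of the interpolation system.)
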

The proof of Lemma~\ref{lemma: bound for 0 1 term} is deferred to Appendix~\ref{app: proof for bound 0 1}.
\begin{lemma}
\label{lemma: bound for 0 1 term 2}
For a constant $\bar{C}_{9}$ and $\delta >0$, the bound
\begin{align}
&(2\pi N) \left| \ \iint \limits_{0}^{1} \nuv\left(\rv\right)^{H}\fv_{1}\left(\rv\right) d\rv \right| \nonumber\\
& \leq \sup_{\rv \in [0,1]^{2}} ||\phiv\left(\rv\right)||_{2} \left(\bar{C}_{9} \sqrt{\frac{R}{L^{3} K} \log \left( \frac{2\left(K+1\right)}{\delta}\right)}\right) \nonumber
\end{align}
occurs with probability at least $1- \delta$ provided that $L \geq \bar{C}_{9}R K \log \left(\frac{2(K+1)}{\delta}\right)$.
\end{lemma}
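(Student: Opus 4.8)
The estimate is the exact analogue of Lemma~\ref{lemma: bound for 0 1 term}, with the dual certificate $\fv$ of Theorem~\ref{th: main dual cert} replaced by the first--order certificate $\fv_1$ of Theorem~\ref{th: main dual cert 1} and an extra prefactor $2\pi N$. The plan is therefore to reproduce the proof of Lemma~\ref{lemma: bound for 0 1 term} step by step, propagating the interpolation conditions (\ref{eq: conds for f1}) and tracking how the size of $\fv_1$ interacts with the $2\pi N$ in front.

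First I would reduce the integral to point evaluations of $\phiv$. Writing $\fv_1(\rv)=\mathcal{X}^{*}(\qv_1)\av(\rv)$ and using the adjoint of $\mathcal{X}$,
\[
\iint\limits_{0}^{1}\nuv(\rv)^{H}\fv_1(\rv)\,d\rv=\Big\langle\iint\limits_{0}^{1}\nuv(\rv)\av(\rv)^{H}d\rv,\ \mathcal{X}^{*}(\qv_1)\Big\rangle=\big\langle\mathcal{X}(\Um_{\text{o}}-\widehat{\Um}),\ \qv_1\big\rangle=\langle\ev,\qv_1\rangle,
\]
since $\iint\nuv\av^{H}d\rv=\Um_{\text{o}}-\widehat{\Um}$ (analogously to (\ref{eq: new U})) and $\ev=\mathcal{X}(\Um_{\text{o}})-\mathcal{X}(\widehat{\Um})$ by (\ref{eq: error vector}). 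The coefficient vector $\qv_1$ is the least--norm solution of the linear system enforcing (\ref{eq: conds for f1}); expressing it through the interpolation operator $\mathcal{L}$ that maps $\qv$ to the stacked values of $\mathcal{X}^{*}(\qv)\av$ and its $(1,0)$ and $(0,1)$ partials at the $\rv_j$, together with its adjoint, and using $\mathcal{X}^{*}(\ev)\av(\rv_j)=\phiv(\rv_j)$ and the fact that differentiation of $\av$ commutes with $\mathcal{X}^{*}$, I can write
\[
\langle\ev,\qv_1\rangle=\sum_{j=1}^{R}\Big(\wv_{0,j}^{H}\phiv(\rv_j)+\wv_{1,j}^{H}\bigtriangledown_{\tau}\phiv(\rv_j)+\wv_{2,j}^{H}\bigtriangledown_{f}\phiv(\rv_j)\Big),
\]
where the $\{\wv_{\cdot,j}\}$ are read off from $(\mathcal{L}\mathcal{L}^{*})^{-1}$ applied to the target carrying $\text{sign}(c_j)\hv_j$ in the $(1,0)$--slot.

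The second step bounds the point values. I would use $\|\phiv(\rv_j)\|_{2}\le\sup_{\rv}\|\phiv(\rv)\|_{2}$ together with Bernstein's polynomial inequality, exactly as in (\ref{eq: two bounds 1})--(\ref{eq: two bounds 2}), to obtain $\|\bigtriangledown_{\tau}\phiv(\rv_j)\|_{2},\|\bigtriangledown_{f}\phiv(\rv_j)\|_{2}\le 2\pi N\sup_{\rv}\|\phiv(\rv)\|_{2}$, which yields
\[
\big|\langle\ev,\qv_1\rangle\big|\le\sup_{\rv\in[0,1]^{2}}\|\phiv(\rv)\|_{2}\sum_{j=1}^{R}\Big(\|\wv_{0,j}\|_{2}+2\pi N\big(\|\wv_{1,j}\|_{2}+\|\wv_{2,j}\|_{2}\big)\Big).
\]

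The crux, and the main obstacle, is the third step: controlling $\sum_j(\cdots)$, i.e. the size of $(\mathcal{L}\mathcal{L}^{*})^{-1}$ applied to the unit--norm--per--spike target. Under Assumptions~\ref{as 1} and \ref{as 2} and the separation of Assumption~\ref{as 4}, the Gram matrix $\mathcal{L}\mathcal{L}^{*}$ concentrates around a well--conditioned, essentially block--diagonal matrix whose diagonal scale is inherited from the near--isometry $\mathcal{X}\mathcal{X}^{*}\approx L^{2}K\,\Id_{L}$ (a consequence of Assumption~\ref{as 1}) and the normalisation of the atoms $\av$; the concentration inequalities invoked here, and the resulting requirement $L\ge\bar{C}_{9}RK\log(2(K+1)/\delta)$ and failure probability $\delta$, are the same ones already underlying Theorems~\ref{th: main dual cert}--\ref{th: main dual cert 2} and Lemma~\ref{lemma: bound for 0 1 term}. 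The delicate point is that $\fv_1$ is a \emph{first--derivative} certificate of size $O(1/N)$ on $\Omega_{\text{far}}$ (see (\ref{eq: hold assump 3 for 1})), so its coefficients are smaller by a factor $1/N$ than those of $\fv$; tracking the powers of $N$ through the heterogeneously scaled value and first--derivative blocks of $\mathcal{L}\mathcal{L}^{*}$ must produce exactly this $1/N$ gain, so that it cancels the leading $2\pi N$ and reproduces the bound $C_9\sqrt{R/(L^{3}K)\,\log(2(K+1)/\delta)}$ of Lemma~\ref{lemma: bound for 0 1 term}. Verifying that the cross--spike ($j\neq j'$) blocks remain subdominant under (\ref{eq: seperation condition}), so that this cancellation is not spoiled and the correct $\sqrt{R}$ (rather than $R$) dependence survives, is where the separation hypothesis is used and where most of the technical effort lies.
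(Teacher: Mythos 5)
Your first reduction, $\iint_{0}^{1}\nuv(\rv)^{H}\fv_{1}(\rv)\,d\rv=\langle\ev,\qv_{1}\rangle$, matches the paper's opening move, and you correctly identify that the $2\pi N$ prefactor must be absorbed by a $1/N$ gain coming from $\fv_{1}$ being a derivative-type certificate. From there, however, your argument diverges and has two genuine gaps. First, the identity $\langle\ev,\qv_{1}\rangle=\sum_{j}\big(\wv_{0,j}^{H}\phiv(\rv_{j})+\wv_{1,j}^{H}\nabla_{\tau}\phiv(\rv_{j})+\wv_{2,j}^{H}\nabla_{f}\phiv(\rv_{j})\big)$ holds only if $\qv_{1}$ is the \emph{unweighted} minimum-norm interpolant, i.e.\ lies in the range of your operator $\mathcal{L}^{*}$. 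The certificate of Theorem~\ref{th: main dual cert 1} is not that object: it is built from the Fej\'er-type kernels $\Mm_{(m,n)}(\rv,\rv_{j})$ in (\ref{eq: final dual for 1}), whose coefficient vector carries the weights $g_{p}$; for such a weighted construction the adjoint computation produces point evaluations of $\mathcal{X}^{*}(W\ev)\av(\rv_{j})$ for a nontrivial diagonal $W$, not of $\phiv(\rv_{j})$. You also cannot simply switch to the least-norm interpolant, because the properties (\ref{eq: hold assump 3 for 1})--(\ref{eq: hold assump 4 for 1}) that Lemma~\ref{lemma: bounds for Ts} needs are proved only for the kernel construction.

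Second, and more seriously, once you pass to $\sum_{j}\big(\|\wv_{0,j}\|_{2}+2\pi N(\|\wv_{1,j}\|_{2}+\|\wv_{2,j}\|_{2})\big)$ by a termwise triangle inequality, all of the randomness in $\hv_{j}$ and $\Dm$ has been discarded --- yet that randomness is precisely where the lemma's $\sqrt{R/(L^{3}K)}$ rate and its failure probability $\delta$ come from. The paper instead bounds $|\langle\ev,\qv_{1}\rangle|\leq\sup_{\rv}\|\phiv(\rv)\|_{2}\iint_{0}^{1}\|\tilde{\fv}_{1}(\rv)\|_{2}d\rv$ with $\tilde{\fv}_{1}=\mathcal{X}^{*}(\mathcal{X}\mathcal{X}^{*})^{-1}(\qv_{1})\av$ (the diagonal operator $(\mathcal{X}\mathcal{X}^{*})^{-1}\approx(L^{2}K)^{-1}\Id_{\text{L}}$ supplies the powers of $L$ and $K$), writes $\tilde{\fv}_{1}=\Tm(\rv)^{H}\Lm_{1}\hv$, proves $\|\Lm_{1}\|_{2}\leq\hat{C}/N$ from the coefficient bounds (\ref{eq: bound on alpha})--(\ref{eq: bound on gamma}) (this is the $1/N$ gain), and then applies Matrix Bernstein to the sums over $j$ in $\Delta\Tm^{H}\Lm_{1}\hv$ and $\widebar{\Tm}^{H}\Lm_{1}\hv$; it is $\mathbb{E}[\hv_{j}\hv_{j}^{H}]=\frac{1}{K}\Id_{\text{K}}$ and the zero mean of $\Delta\Tm$ that give the variance $\nu\sim R/(L^{3}N^{2}K)$ and hence $\sqrt{R}$ rather than the $R$ your deterministic sum over $j$ would produce. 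Your outline names the right ingredients (conditioning of the interpolation system, separation, the $1/N$ cancellation) but leaves this entire quantitative core --- the only place the probability statement and the stated rate can actually be generated --- as an unexecuted assertion.
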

The proof of Lemma~\ref{lemma: bound for 0 1 term 2} is presented in Appendix~\ref{app: proof for bound 0 1 term 2}.

\begin{lemma}
\label{lemma: bound for 0 1 term 3}
There exists a constant $\tilde{C}_{9}$ and $\delta >0$ such that
\begin{align}
&(2\pi N) \left| \ \iint \limits_{0}^{1} \nuv\left(\rv\right)^{H}\fv_{2}\left(\rv\right) d\rv \right| \nonumber\\
&\leq \sup_{\rv \in [0,1]^{2}} ||\phiv\left(\rv\right)||_{2} \left( \tilde{C}_{9} \sqrt{\frac{R}{L^{3} K} \log \left( \frac{2\left(K+1\right)}{\delta}\right)}\right) \nonumber
\end{align}
occurs with probability at least $1- \delta$ provided that $L \geq \tilde{C}_{9} R K \log \left(\frac{2(K+1)}{\delta}\right)$.
\end{lemma}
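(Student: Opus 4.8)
The plan is to reproduce, essentially verbatim, the argument establishing Lemma~\ref{lemma: bound for 0 1 term 2}, with the roles of $\tau$ and $f$ interchanged and with the certificate $\fv_1$ (Theorem~\ref{th: main dual cert 1}) replaced by the $f$-derivative certificate $\fv_2$ of Theorem~\ref{th: main dual cert 2}. The first step is to convert the integral into an inner product on $\mathbb{C}^{L}$. Substituting $\fv_2(\rv)=\mathcal{X}^*(\qv_2)\av(\rv)$, taking the integral inside the trace, and using $\iint_0^1 \nuv(\rv)\av(\rv)^H d\rv=\Um_{\text{o}}-\widehat{\Um}$ from (\ref{eq: new U})--(\ref{eq: diff vector}), I obtain
\[
\iint_0^1 \nuv(\rv)^H\fv_2(\rv)\,d\rv=\big\langle \Um_{\text{o}}-\widehat{\Um},\,\mathcal{X}^*(\qv_2)\big\rangle=\big\langle \ev,\,\qv_2\big\rangle,
\]
where the last equality follows from the adjoint identity $\langle \Mm,\mathcal{X}^*(\qv_2)\rangle=\langle \mathcal{X}(\Mm),\qv_2\rangle$ applied to $\Mm=\Um_{\text{o}}-\widehat{\Um}$ together with $\mathcal{X}(\Um_{\text{o}}-\widehat{\Um})=\ev$ from (\ref{eq: error vector}). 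Thus it suffices to bound $(2\pi N)\,|\langle\ev,\qv_2\rangle|$.

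Second, I would invoke the explicit structure of the certificate built in Theorem~\ref{th: main dual cert 2}: the dual vector lies in the range of $\mathcal{X}$, i.e. $\qv_2=\mathcal{X}(\Vm_2)$, where $\Vm_2$ is a finite combination over the support $\mathcal{R}$ of the rank-one atoms $\hv_j\av(\rv_j)^H$ and their first-order $\tau$- and $f$-partials, weighted by the coefficient vectors that solve the interpolation system. Passing to the adjoint once more, $\langle\ev,\qv_2\rangle=\langle\mathcal{X}^*(\ev),\Vm_2\rangle$, and recalling $\phiv(\rv)=\mathcal{X}^*(\ev)\av(\rv)$ from (\ref{eq: func phi}), this expands into a finite sum of terms of the type $\hv_j^H\phiv(\rv_j)$, $\hv_j^H\bigtriangledown_\tau\phiv(\rv_j)$ and $\hv_j^H\bigtriangledown_f\phiv(\rv_j)$. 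Each such term is controlled by $\sup_{\rv\in[0,1]^2}\|\phiv(\rv)\|_2$ after applying Bernstein's polynomial inequality to the derivative factors, exactly as in (\ref{eq: two bounds 1})--(\ref{eq: two bounds 2}); the resulting powers of $N$ combine with the $(2\pi N)$ prefactor, so that, once $\sup_\rv\|\phiv(\rv)\|_2$ is factored out, what remains is the suitably weighted norm of the interpolation coefficients.

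The final and decisive step is to show that this coefficient norm is at most $\tilde{C}_9\sqrt{\tfrac{R}{L^{3}K}\log\!\big(\tfrac{2(K+1)}{\delta}\big)}$. This is where the probabilistic hypotheses enter and is the main obstacle: the coefficients are obtained by inverting the Gram/interpolation matrix assembled from $\{\hv_j,\av(\rv_j)\}$ and the random subspace matrix $\Dm$, so I would reuse the invertibility and conditioning estimates already proved for Theorem~\ref{th: main dual cert 2}. Under the isotropy and concentration Assumptions~\ref{as 1} and \ref{as 2}, the random-orientation Assumption~\ref{as 3}, and the minimum separation of Assumption~\ref{as 4}, a matrix-Bernstein / scalar-concentration argument bounds the inverse Gram matrix with probability at least $1-\delta$ as soon as $L\geq \tilde{C}_9 RK\log(2(K+1)/\delta)$; heuristically, the $\sqrt{R}$ arises from summing the $R$ nearly-orthogonal support blocks in quadrature, the $1/\sqrt{L^{3}K}$ from the normalizations of $\av(\rv_j)$ and the $K$-dimensional orientation vectors, and the logarithmic factor from a union bound over the $K+1$ scalar concentration events. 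Combining this coefficient estimate with the per-term bounds of the previous step yields the claimed inequality. Since Lemma~\ref{lemma: bound for 0 1 term} and Lemma~\ref{lemma: bound for 0 1 term 2} have already carried out the deterministic reduction and the analogous concentration argument, the only genuinely new work here is the bookkeeping needed to exchange $\tau$ and $f$ in the certificate $\fv_2$.
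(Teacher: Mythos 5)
Your overall strategy---mirror the proof of Lemma~\ref{lemma: bound for 0 1 term 2} with $\fv_{1}$ replaced by $\fv_{2}$ and the roles of $\tau$ and $f$ exchanged---is exactly what the paper prescribes, and your first reduction $\iint_{0}^{1}\nuv\left(\rv\right)^{H}\fv_{2}\left(\rv\right)d\rv=\langle \ev,\qv_{2}\rangle$ matches the opening of the paper's argument. The gap is in what you do next. You expand $\qv_{2}$ through the certificate's interpolation coefficients and bound the resulting terms of the form $\left.\alphav^{(2)}_{j}\right.^{H}\phiv\left(\rv_{j}\right)$, $\left.\betav^{(2)}_{j}\right.^{H}\bigtriangledown_{\tau}\phiv\left(\rv_{j}\right)$, $\left.\gammav^{(2)}_{j}\right.^{H}\bigtriangledown_{f}\phiv\left(\rv_{j}\right)$ by $\sup_{\rv}||\phiv\left(\rv\right)||_{2}$ times the coefficient norms. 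But those coefficients are $O(1/N)$ and $O(1/N^{2})$ (the analogues of (\ref{eq: bound on alpha})--(\ref{eq: bound on gamma})), there are $R$ of them, and the derivative terms pick up a factor $2\pi N$ from Bernstein's polynomial inequality; after the overall $(2\pi N)$ prefactor this route yields at best a bound of order $R\cdot\sup_{\rv}||\phiv\left(\rv\right)||_{2}$. That is far weaker than the claimed $\sqrt{\frac{R}{L^{3}K}\log\left(\frac{2(K+1)}{\delta}\right)}$ and would destroy the $L^{-3/2}$ decay in Theorem~\ref{th: main result}. Your heuristic that the $1/\sqrt{L^{3}K}$ comes from inverting the interpolation Gram matrix and from the normalization of $\av\left(\rv_{j}\right)$ is not correct: that Gram matrix is well conditioned with an $O(1)$ inverse and contributes no such factor.

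The missing ingredient is the preconditioning by $\left(\mathcal{X}\mathcal{X}^{*}\right)^{-1}$. The paper continues $\langle\ev,\qv_{2}\rangle=\langle\left(\mathcal{X}\mathcal{X}^{*}\right)^{-1}\qv_{2},\mathcal{X}\mathcal{X}^{*}\left(\ev\right)\rangle=\iint_{0}^{1}\phiv\left(\rv\right)^{H}\tilde{\fv}_{2}\left(\rv\right)d\rv$ with $\tilde{\fv}_{2}\left(\rv\right):=\mathcal{X}^{*}\left(\mathcal{X}\mathcal{X}^{*}\right)^{-1}\left(\qv_{2}\right)\av\left(\rv\right)$, and then applies H\"{o}lder's inequality so that everything reduces to bounding $\iint_{0}^{1}||\tilde{\fv}_{2}\left(\rv\right)||_{2}d\rv$. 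Since $\mathcal{X}\mathcal{X}^{*}=\text{diag}\left(\uv\right)$ with $u_{m}=L\sum_{i}||\dv_{i}||_{2}^{2}\approx L^{2}K$, the preconditioned kernels $\widetilde{\Mm}_{(m,n)}$ carry the normalization $\dv_{l}\dv_{l'}^{H}/\sum_{i}||\dv_{i}||_{2}^{2}$, whose expectation is $\frac{1}{LK}\Id_{\text{K}}$ by (\ref{eq: G assumption 3}); the Matrix Bernstein inequality applied to the centered part $\left.\Delta\Tm\right.^{H}\Lm_{2}\hv$ and the mean part $\widebar{\Tm}^{H}\Lm_{2}\hv$ (where $\Lm_{2}$ is the analogue of $\Lm_{1}$ and satisfies $||\Lm_{2}||_{2}\leq C/N$, which cancels the $2\pi N$ prefactor) is what produces the factor $\sqrt{\frac{R}{L^{3}K}\log\left(\frac{2(K+1)}{\delta}\right)}$ and the sample-complexity condition $L\geq \tilde{C}_{9}RK\log\left(\frac{2(K+1)}{\delta}\right)$. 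Without this step your argument cannot reach the stated rate, even though the $\tau\leftrightarrow f$ bookkeeping you describe is otherwise the right idea.
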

The proof of Lemma~\ref{lemma: bound for 0 1 term 3}, not detailed here, follows the same steps as that of Lemma~\ref{lemma: bound for 0 1 term 2}, with $\fv_{2}\left(\rv\right)$ in (\ref{eq: final dual for 2}) being used instead of $\fv_{1}\left(\rv\right)$.

\begin{lemma}
\label{lemma: bound for 0 1 term 4}
It occurs with very high probability that
\begin{align}
\label{eq: bound on the two terms}
& \iint \limits_{\Omega_{\text{far}}} ||\nuv\left(\rv\right)||_{2}  d\rv + T_{3} \leq  \nonumber\\
& \sup_{\rv \in [0,1]^{2}} ||\phiv\left(\rv\right)||_{2}\left(C_{10}\sqrt{\frac{R}{L^{3} K} \log \left( \frac{2\left(K+1\right)}{\delta}\right)}\right).  
\end{align}
\end{lemma}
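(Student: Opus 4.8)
The plan is to recognize $\int_{\Omega_{\text{far}}}||\nuv(\rv)||_{2}\,d\rv + T_{3}$ as a weighted localization functional of the reconstructed measure $\hat{\gv}$ alone, and then to control it through the atomic-norm gap between $\widehat{\Um}$ and $\Um_{\text{o}}$ that is certified by the dual polynomials of Theorem~\ref{th: main dual cert}. First I would use that $\gv$ in (\ref{eq: function g}) is a sum of Diracs located exactly at $\{\rv_{j}\}$: on $\Omega_{\text{far}}$ one has $\nuv=-\hat{\gv}$, while inside each $\Omega_{\text{close}}(j)$ the weight $(|\tau-\tau_{j}|+|f-f_{j}|)^{2}$ in $T_{3}^{j}$ vanishes at $\rv_{j}$ and hence annihilates the atom of $\gv$ sitting there. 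Both contributions therefore depend only on $\hat{\gv}$, and the target reduces to bounding
\[
\mathcal{F}:=\int_{\Omega_{\text{far}}}||\hat{\gv}(\rv)||_{2}\,d\rv+\tfrac{1}{2}(2\pi N)^{2}\sum_{j}\int_{\Omega_{\text{close}}(j)}(|\tau-\tau_{j}|+|f-f_{j}|)^{2}\,||\hat{\gv}(\rv)||_{2}\,d\rv .
\]

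Next I would invoke the certificate $\fv$ of Theorem~\ref{th: main dual cert}, which satisfies $||\fv(\rv)||_{2}\le 1$ everywhere, $\fv(\rv_{j})=\text{sign}(c_{j})\hv_{j}$, and the quadratic margins $||\fv||_{2}\le 1-C_{2}$ on $\Omega_{\text{far}}$ and $||\fv||_{2}\le 1-C_{2}^{*}N^{2}((\tau-\tau_{j})-(f-f_{j}))^{2}$ on $\Omega_{\text{close}}(j)$. Since $||\widehat{\Um}||_{\mathcal{A}}=\int ||\hat{\gv}||_{2}\,d\rv$ and pointwise $||\hat{\gv}||_{2}-\text{Re}\langle\fv,\hat{\gv}\rangle\ge(1-||\fv||_{2})||\hat{\gv}||_{2}$, integrating gives $||\widehat{\Um}||_{\mathcal{A}}-\text{Re}\iint\langle\fv,\hat{\gv}\rangle\,d\rv\ge C_{2}\int_{\Omega_{\text{far}}}||\hat{\gv}||_{2}+C_{2}^{*}N^{2}\sum_{j}\int_{\Omega_{\text{close}}(j)}((\tau-\tau_{j})-(f-f_{j}))^{2}||\hat{\gv}||_{2}$. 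This $\fv$ only penalizes mass along the diagonal direction; to recover the full weight in $\mathcal{F}$ I would call on a second certificate from the same construction whose margin decays along the complementary diagonal $((\tau-\tau_{j})+(f-f_{j}))^{2}$ (precisely the \emph{multiple-kernel} device the paper emphasizes) and combine the two using $(|\tau-\tau_{j}|+|f-f_{j}|)^{2}\le((\tau-\tau_{j})-(f-f_{j}))^{2}+((\tau-\tau_{j})+(f-f_{j}))^{2}$. The outcome is $\mathcal{F}\le C\,(||\widehat{\Um}||_{\mathcal{A}}-\text{Re}\iint\langle\fv,\hat{\gv}\rangle)+C\,(\text{companion gap})$, so it remains to bound the gaps from above.

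To upper bound a gap I would decompose $\hat{\gv}=\gv-\nuv$ and use the interpolation property to get $\text{Re}\iint\langle\fv,\gv\rangle=\sum_{j}|c_{j}|=||\Um_{\text{o}}||_{\mathcal{A}}$, so that $||\widehat{\Um}||_{\mathcal{A}}-\text{Re}\iint\langle\fv,\hat{\gv}\rangle=(||\widehat{\Um}||_{\mathcal{A}}-||\Um_{\text{o}}||_{\mathcal{A}})+\text{Re}\iint\langle\fv,\nuv\rangle$. The second term is exactly the quantity already controlled in Lemma~\ref{lemma: bound for 0 1 term} by $\sup_{\rv\in[0,1]^{2}}||\phiv(\rv)||_{2}\cdot C\sqrt{R/(L^{3}K)\,\log(\cdot)}$. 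For the first term I would use the basic inequality from optimality of $\widehat{\Um}$ in (\ref{eq: reg problem}) tested against the feasible point $\Um_{\text{o}}$, together with $\yv-\mathcal{X}(\Um_{\text{o}})=\omegav$ and $\yv-\mathcal{X}(\widehat{\Um})=\omegav+\ev$, yielding $\mu(||\widehat{\Um}||_{\mathcal{A}}-||\Um_{\text{o}}||_{\mathcal{A}})\le-\langle\omegav,\ev\rangle_{\mathbb{R}}-\tfrac{1}{2}||\ev||_{2}^{2}$.

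The hard part will be the noise cross-term $\langle\omegav,\ev\rangle_{\mathbb{R}}=\langle\mathcal{X}^{*}(\omegav),\Um_{\text{o}}-\widehat{\Um}\rangle_{\mathbb{R}}=\text{Re}\iint\nuv^{H}\psiv$, where $\psiv(\rv)=\mathcal{X}^{*}(\omegav)\av(\rv)$ is a vector trigonometric polynomial of the same type as $\fv$ with $\sup_{\rv}||\psiv(\rv)||_{2}=||\mathcal{X}^{*}(\omegav)||_{\mathcal{A}}^{*}\le\mu$ (the event underlying the choice of $\mu$ and Lemma~\ref{lemma: bound on phi}). A crude Cauchy--Schwarz bound on $\langle\omegav,\ev\rangle$ grows with $L$ and is far too lossy, so instead I would split $\iint\nuv^{H}\psiv$ over $\Omega_{\text{far}}$ and $\Omega_{\text{close}}$ and Taylor-expand $\psiv$ about each $\rv_{j}$, exactly as in Lemma~\ref{lemma: error bound} and Lemma~\ref{lemma: bounds for Ts}, so that the noise term is re-expressed through the same integrals $\iint\nuv^{H}\fv$, $\iint\nuv^{H}\fv_{1}$, $\iint\nuv^{H}\fv_{2}$ (bounded by Lemmas~\ref{lemma: bound for 0 1 term}--\ref{lemma: bound for 0 1 term 3}) plus a fixed multiple of $\mathcal{F}$ itself. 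The crux is then a reabsorption step: the $\mathcal{F}$-proportional pieces arising on the right (both from the noise term and from the $-\tfrac{1}{2}||\ev||_{2}^{2}$ feedback through the error bound of Lemma~\ref{lemma: error bound}) must be dominated by the $C_{2},C_{2}^{*}$ margins on the left, which forces careful tracking of constants and use of the fact that $\mu$ exceeds $\sup_{\rv}||\psiv(\rv)||_{2}$ with a fixed margin. Closing this self-referential estimate and verifying that the surviving quantity is exactly $\sup_{\rv\in[0,1]^{2}}||\phiv(\rv)||_{2}\cdot C_{10}\sqrt{R/(L^{3}K)\,\log(2(K+1)/\delta)}$ is the delicate point; the concentration responsible for the $\sqrt{R/(L^{3}K)}$ scaling, and the sample requirement on $L$, are inherited from Lemmas~\ref{lemma: bound for 0 1 term}--\ref{lemma: bound for 0 1 term 3}.
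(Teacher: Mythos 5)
Your proposal is correct in architecture and lands on the same mechanism as the paper: lower-bound a total-variation/duality gap using the decay margins of the dual certificate ($C_{2}$ on $\Omega_{\text{far}}$, quadratic curvature on $\Omega_{\text{close}}(j)$), upper-bound the same gap through the optimality of $\widehat{\Um}$ in (\ref{eq: reg problem}) combined with the bound $||\mathcal{X}^{*}(\omegav)||_{\mathcal{A}}^{*}\leq \mu/\lambda$, and then absorb the $\iint_{\Omega_{\text{far}}}||\nuv||_{2}$ and $T_{3}$ terms for $\lambda$ large enough. The paper packages this as $||\mathcal{P}_{\mathcal{R}^{c}}(\nuv)||_{\text{TV}}-||\mathcal{P}_{\mathcal{R}}(\nuv)||_{\text{TV}}$, bounded below in (\ref{eq: TV final bound 2}) and above in Proposition~\ref{pro: upper bound on the diff}; your quantity $||\widehat{\Um}||_{\mathcal{A}}-\mathrm{Re}\iint\langle\fv,\hat{\gv}\rangle$ is the same object in different notation. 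Two execution differences are worth recording. First, you explicitly notice that the certificate curvature in (\ref{eq: hold assump 2}) degenerates along the diagonal $\tau-\tau_{j}=f-f_{j}$ and therefore cannot by itself dominate the weight $(|\tau-\tau_{j}|+|f-f_{j}|)^{2}$ appearing in $T_{3}^{j}$; your proposed fix via a companion certificate with curvature along the complementary diagonal is a genuine repair of a step the paper passes over silently in (\ref{eq: TV function 3}), and it is consistent with the multiple-kernel philosophy the paper advertises. Second, for the noise cross-term $\langle\omegav,\ev\rangle$ you propose a Taylor expansion of $\mathcal{X}^{*}(\omegav)\av(\rv)$ about each $\rv_{j}$ with a reabsorption argument; the paper instead disposes of this term in one line via H\"{o}lder and the dual atomic norm, $|\langle\omegav,\ev\rangle|\leq ||\mathcal{X}^{*}(\omegav)||_{\mathcal{A}}^{*}\iint_{0}^{1}||\nuv(\rv)||_{2}\,d\rv\leq (\mu/\lambda)\iint_{0}^{1}||\nuv(\rv)||_{2}\,d\rv$, and only afterwards splits the resulting integral into far and close pieces (equations (\ref{equ: bound of U hat 1 terms})--(\ref{equ: bound of U hat 2 terms})). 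The paper's route is shorter and avoids the self-referential bookkeeping you flag as delicate, so you could simplify your plan by adopting it; otherwise the proposal is sound.
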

\begin{proof}
To start with, refer to the restriction of the difference measure vector $\nuv$ on the support set $\mathcal{R} = \{\rv_{j}\}_{j=1}^{R}$ by $\mathcal{P}_{\mathcal{R}}\left(\nuv\right)$. Given the fact that the dual polynomial vector $\fv\left(\rv\right)$ interpolates the atoms $\av\left(\rv\right)$ on $\mathcal{R}$, we can write
\begin{align}
\label{eq: TV function 1}
&||\mathcal{P}_{\mathcal{R}}\left(\nuv\right)||_{\text{TV}} =   \iint \limits_{0}^{1} \mathcal{P}_{\mathcal{R}}\left(\nuv^{H}\right)\fv\left(\rv\right) d\rv \nonumber\\
& \leq \left| \iint \limits_{0}^{1} \nuv\left(\rv\right)^{H}\fv\left(\rv\right) d\rv \right| + \left| \ \iint \limits_{\mathcal{R}^{c}} \nuv\left(\rv\right)^{H}\fv\left(\rv\right) d\rv \right|,
\end{align}
where TV stands for the total variation norm while $\mathcal{R}^{c}$ refers to the complement set of $\mathcal{R}$ in $[0,1]^{2}$. The first term in (\ref{eq: TV function 1}) is bounded by Lemma~\ref{lemma: bound for 0 1 term} while for the second term we can write
\begin{align}
\label{eq: TV function 2}
&\left| \iint \limits_{\mathcal{R}^{c}} \nuv\left(\rv\right)^{H}\fv\left(\rv\right) d\rv \right| \leq \sum_{j \in \mathcal{R}} \left| \ \ \iint \limits_{\Omega_{\text{close}}\left(j\right)  \backslash \{\rv_{j}\}} \nuv\left(\rv\right)^{H}\fv\left(\rv\right) d\rv \right| \nonumber\\
& + \left| \ \iint \limits_{\Omega_{\text{far}}} \nuv\left(\rv\right)^{H}\fv\left(\rv\right) d\rv \right|.
\end{align}
By using H\"{o}lder's inequality, the bound in (\ref{eq: hold assump 2}), as well as the definition of $T_{3}^{j}$, we can bound the first term in (\ref{eq: TV function 2}) by
\begin{align}
\label{eq: TV function 3}
&\left| \ \ \iint \limits_{\Omega_{\text{close}}\left(j\right)  \backslash \{\rv_{j}\}} \nuv\left(\rv\right)^{H}\fv\left(\rv\right) d\rv \right| \leq   \iint \limits_{\Omega_{\text{close}}\left(j\right)  \backslash \{\rv_{j}\}} ||\nuv\left(\rv\right)||_{2} ||\fv\left(\rv\right)||_{2} d\rv \nonumber\\
& \leq \  \iint \limits_{\Omega_{\text{close}}\left(j\right)  \backslash \{\rv_{j}\}} ||\nuv\left(\rv\right)||_{2}  d\rv - C_{11} T_{3}^{j},
\end{align}
for a constant $C_{11}$. On the other hand, the second term in (\ref{eq: TV function 2}) can be bounded using H\"{o}lder's inequality and (\ref{eq: hold assump 3}) as
\begin{align}
\label{eq: TV function 4}
&\left| \iint \limits_{\Omega_{\text{far}}} \nuv\left(\rv\right)^{H}\fv\left(\rv\right) d\rv \right| \leq  \iint \limits_{\Omega_{\text{far}}} ||\nuv\left(\rv\right)||_{2} ||\fv\left(\rv\right)||_{2} d\rv \nonumber\\
& \leq (1-C_{2})  \iint \limits_{\Omega_{\text{far}}} ||\nuv\left(\rv\right)||_{2}  d\rv.
\end{align}
Substituting (\ref{eq: TV function 3}) and (\ref{eq: TV function 4}) in (\ref{eq: TV function 2}), then substituting the result back into (\ref{eq: TV function 1}), along with Lemma~\ref{lemma: bound for 0 1 term}, we obtain 
\begin{align}
\label{eq: TV final bound}
&||\mathcal{P}_{\mathcal{R}}\left(\nuv\right)||_{\text{TV}}  \leq \nonumber\\
&\sup_{\rv \in [0,1]^{2}} ||\phiv\left(\rv\right)||_{2} \left( C_{9} \sqrt{\frac{R}{L^{3} K} \log \left( \frac{2\left(K+1\right)}{\delta}\right)}\right) - C_{11} T_{3} \nonumber\\
&+ \sum_{j \in \mathcal{R}} \ \ \iint \limits_{\Omega_{\text{close}}\left(j\right)  \backslash \{\rv_{j}\}} ||\nuv\left(\rv\right)||_{2}  d\rv + (1-C_{2})  \iint \limits_{\Omega_{\text{far}}} ||\nuv\left(\rv\right)||_{2}  d\rv
\end{align}
Now given that 
\begin{align}
&||\mathcal{P}_{\mathcal{R}^{c}}\left(\nuv\right)||_{\text{TV}} = \sum_{j \in \mathcal{R}} \ \ \iint \limits_{\Omega_{\text{close}}\left(j\right)  \backslash \{\rv_{j}\}} \hspace{-5pt}||\nuv\left(\rv\right)||_{2}  d\rv + \hspace{-1pt} \iint \limits_{\Omega_{\text{far}}} \hspace{-2pt}||\nuv\left(\rv\right)||_{2}  d\rv \nonumber
\end{align}
we can write based on (\ref{eq: TV final bound})
\begin{align}
\label{eq: TV final bound 2}
&||\mathcal{P}_{\mathcal{R}^{c}}\left(\nuv\right)||_{\text{TV}} - ||\mathcal{P}_{\mathcal{R}}\left(\nuv\right)||_{\text{TV}} \geq  C_{11} T_{3}+ C_{2} \hspace{-2pt}\iint \limits_{\Omega_{\text{far}}} \hspace{-2pt}||\nuv\left(\rv\right)||_{2}  d\rv\nonumber\\
& -\sup_{\rv \in [0,1]^{2}} ||\phiv\left(\rv\right)||_{2} \left( C_{9} \sqrt{\frac{R}{L^{3} K} \log \left( \frac{2\left(K+1\right)}{\delta}\right)}\right).
\end{align}
To proceed, we will use the following proposition, whose proof is provided in Appendix~\ref{app: proof of U hat atomic}.
\begin{proposition}
\label{pro: upper bound on the diff}
There exists a constant $\bar{C}_{11}$ such that
\begin{align}
\label{equ: final p pc bound second one}
&||\mathcal{P}_{\mathcal{R}^{c}}\left(\nuv\right)||_{\text{TV}} -||\mathcal{P}_{\mathcal{R}}\left(\nuv\right)||_{\text{TV}} \leq  ( \bar{C}_{11} \frac{\mu}{\lambda}+\frac{1}{\lambda})\iint \limits_{\Omega_{\text{far}}} ||\nuv\left(\rv\right)||_{2}d\rv \nonumber\\
&+\bar{C}_{11}  \frac{\mu}{\lambda} \left( T_{3} + \sup_{\rv \in [0,1]^{2}} ||\phiv\left(\rv\right)||_{2} \sqrt{\frac{R}{L^{3} K} \log \left( \frac{2\left(K+1\right)}{\delta}\right)}\right)
\end{align}
\end{proposition}
Based on Proposition~\ref{pro: upper bound on the diff} and (\ref{eq: TV final bound 2}) we can obtain 
\begin{align}
& ( C_{2}-\bar{C}_{11} \frac{\mu}{\lambda}-\frac{1}{\lambda})\iint \limits_{\Omega_{\text{far}}} ||\nuv\left(\rv\right)||_{2}d\rv + (C_{11} -\bar{C}_{11}  \frac{\mu}{\lambda} )T_{3} \leq  \nonumber\\
& (\bar{C}_{11}  \frac{\mu}{\lambda}+C_{9})\hspace{-3pt}\sup_{\rv \in [0,1]^{2}}\hspace{-2pt} ||\phiv\left(\rv\right)||_{2}\sqrt{\hspace{-3pt}\frac{R}{L^{3} K} \log \left( \frac{2\left(K+1\right)}{\delta}\right)} \nonumber
\end{align}
which leads to (\ref{eq: bound on the two terms}) for a large enough $\lambda$.
\end{proof}
Starting from (\ref{eq: error simplified}), and upon applying Lemmas~\ref{lemma: bound on phi} to \ref{lemma: bound for 0 1 term 4}, we can conclude after some manipulations that
\begin{align}
&||\ev||_{2}^{2} \leq  \bar{C}_{12} \mu^{2}\sqrt{\frac{R}{L^{3} K} \log \left( \frac{2\left(K+1\right)}{\delta}\right)}. \nonumber
\end{align}
Substituting for $\mu$ from (\ref{eq: regula}) we obtain
\begin{align}
&||\ev||_{2}^{2} \leq  C_{12} \lambda^{2} \sigma_{\omegav}^{2} ||\Dm||_{F}^{2} \log\left(N\right) \frac{\sqrt{R}}{L^{3/2} \sqrt{K}} \sqrt{\log \left( \frac{2\left(K+1\right)}{\delta}\right)} \nonumber\\
& \leq  C_{12} \lambda^{2} \sigma_{\omegav}^{2} K^{3/2} \sqrt{\frac{R}{L}} \log\left(N\right) \sqrt{\log \left( \frac{2\left(K+1\right)}{\delta}\right)}, \nonumber
\end{align}
where the last inequality from $||\Dm||_{F}^{2}  \leq K ||\Dm ||_{2}^{2}  \leq L K^{2}$. The proof of Theorem~\ref{th: main result} is concluded by setting $\delta = \frac{2}{CN}$.

\section{Conclusions}
\label{sec:conclusion}
In this work, we developed a new mathematical framework for atomic norm denoising in blind 2D super-resolution. The framework recovers a signal that consists of a superposition of time-delayed and frequency-shifted unknown waveforms from its noisy measurements upon solving a regularized LS atomic norm minimization problem. Moreover, we provided an approach to estimate the unknown parameters that characterize the signal. We analyzed the MSE performance of the denoising framework, and we showed that it increases with the noise variance, the number of unknown shifts, and the dimension of the low-dimensional subspace. Furthermore, we showed that the MSE decreases with the number of observed samples. Simulation results that verify the dependency of the MSE on these parameters are provided.  Extensions to this work include: 1) Deriving theoretical bounds on the localization error to investigate how well the estimator localizes the unknown parameters. 2) Studying the framework performance under different noise distributions. 3) Reducing the computational complexity of solving the problem by studying alternative optimization techniques or considering solving the problem using a subset from the observed samples.


%



\appendices
\numberwithin{equation}{section}
\numberwithin{lemma}{section}
\numberwithin{proposition}{section}
\numberwithin{theorem}{section}

\section{Proof of the Dual Problem (\ref{eq: dual reg problem})}
\label{app: dual proof}
Starting from (\ref{eq: reg problem}), we can express the problem as a constrained optimization problem in the form
\begin{align}
\label{eq: a1}
\underset{{\Um} \in \mathbb{C}^{K \times L^2}}{\text{minimize}} \ \ &\frac{1}{2} \left|\left|\yv -\zv\right|\right|_{2}^{2} + \mu \  ||{\Um}||_{\mathcal{A}} \nonumber\\
 &\text{subject to} : \zv =\mathcal{X}({\Um}).
\end{align}
The Lagrangian function of (\ref{eq: a1}) can be written as 
\begin{equation}
L\left({\Um},\zv,\qv\right) =  \frac{1}{2} \left|\left|\yv -\zv\right|\right|_{2}^{2} + \mu \  ||{\Um}||_{\mathcal{A}}+ \left\langle \qv, \zv-\mathcal{X}({\Um})\right\rangle, \nonumber
\end{equation}
where $\qv \in \mathbb{C}^{L \times 1}$ is the Lagrangian variable. Based on that, the dual function can be written as
\begin{align}
\label{eq: a3}
&g\left(\qv\right) = \inf_{{\Um}, \zv} L\left({\Um},\zv,\qv\right) \nonumber\\
&= \inf_{\zv} \left[ \frac{1}{2} \left|\left|\yv -\zv\right|\right|_{2}^{2} + \left\langle\qv,\zv\right\rangle_{\mathbb{R}} \right]+ \inf_{{\Um}} \left[\mu ||{\Um}||_{\mathcal{A}} - \left\langle \qv, \mathcal{X}({\Um})\right\rangle_{\mathbb{R}} \right] \nonumber\\
&=  \left\langle \yv,\qv \right\rangle_{\mathbb{R}}- \frac{1}{2} ||\qv||_{2}^{2}+ \inf_{{\Um}} \left[\mu ||{\Um}||_{\mathcal{A}} - \left\langle \mathcal{X}^{*}(\qv), {\Um}\right\rangle_{\mathbb{R}} \right]
\end{align}
By using the dual atomic norm definition in (\ref{eq: dual atomic for}), we obtain
\begin{equation}
\label{eq: a4}
 \inf_{{\Um}} \left[\mu ||{\Um}||_{\mathcal{A}} - \left\langle \mathcal{X}^{*}(\qv), {\Um}\right\rangle_{\mathbb{R}} \right]=\left\{
                \begin{array}{ll}
                  0 &\text{if} \ \ ||\mathcal{X}^{*}\left(\qv\right)||_{\mathcal{A}}^{*} \leq \mu\\
                  \infty &\text{otherwise}\\
                \end{array}
              \right.
\end{equation}
Substituting (\ref{eq: a4}) in (\ref{eq: a3}), then taking the maximum of $g\left(\qv\right)$, we obtain (\ref{eq: dual reg problem}).

\section{Proof of Lemma~\ref{lemma: suff cond lemma}}
\label{app: suff cond}
The cost function in (\ref{eq: reg problem}) is minimized at $\widehat{\Um}$ if $\forall t >0, \forall \Um \in \mathbb{C}^{K\times L^{2}}$ we have
\begin{align}
&\frac{1}{2} ||\yv -\mathcal{X}(\widehat{\Um}+t(\Um-\widehat{\Um}))||_{2}^{2} + \mu \  ||\widehat{\Um}+t(\Um-\widehat{\Um})||_{\mathcal{A}} \geq \nonumber\\
&\frac{1}{2} ||\yv -\mathcal{X}(\widehat{\Um})||_{2}^{2} + \mu \  ||\widehat{\Um}||_{\mathcal{A}}. \nonumber
\end{align}
The above expression can be written as
\begin{align}
\label{eq: opti express 2}
&2\mu \left[ ||\widehat{\Um}+t(\Um-\widehat{\Um})||_{\mathcal{A}}-||\widehat{\Um}||_{\mathcal{A}}\right] \geq  \nonumber\\
& ||\yv -\mathcal{X}(\widehat{\Um})||_{2}^{2}-||\yv -\mathcal{X}(\widehat{\Um}+t(\Um-\widehat{\Um}))||_{2}^{2}.
\end{align}
On the other hand, and by using the convexity property of the atomic norm, we can conclude after some manipulations that
\begin{equation}
\label{eq: opt con atomic norm}
||\widehat{\Um}+t(\Um-\widehat{\Um})||_{\mathcal{A}}-||\widehat{\Um}||_{\mathcal{A}} \leq t \left(||\Um||_{\mathcal{A}}-||\widehat{\Um}||_{\mathcal{A}} \right).
\end{equation}
Substituting (\ref{eq: opt con atomic norm}) in (\ref{eq: opti express 2}) and then manipulating we obtain
\begin{align}
\label{eq: opti express 3}
&2\mu t \left[ ||\Um||_{\mathcal{A}}-||\widehat{\Um}||_{\mathcal{A}} \right]  \nonumber\\
& \geq ||\yv -\mathcal{X}(\widehat{\Um})||_{2}^{2}-||\yv -\mathcal{X}(\widehat{\Um})-t\mathcal{X}(\Um-\widehat{\Um})||_{2}^{2} \nonumber\\
& \geq 2t\left\langle\yv-\mathcal{X}(\widehat{\Um}),\mathcal{X}(\Um-\widehat{\Um})\right\rangle-t^{2}||\mathcal{X}(\Um-\widehat{\Um})||_{2}^{2}.
\end{align}
Since (\ref{eq: opti express 3}) holds for all $\Um$ and $t \in \left(0,1\right)$, we can divide (\ref{eq: opti express 3}) by $2t$, then, we set $t=0$ in order to obtain
\begin{align}
\label{eq: opti express 4}
&\mu \left[ ||\Um||_{\mathcal{A}}-||\widehat{\Um}||_{\mathcal{A}} \right]  \geq \left\langle\yv-\mathcal{X}(\widehat{\Um}),\mathcal{X}(\Um-\widehat{\Um})\right\rangle.
\end{align}
Based on (\ref{eq: opti express 4}) we can write
\begin{align}
&\mu ||\widehat{\Um}||_{\mathcal{A}}-\left\langle\yv-\mathcal{X}(\widehat{\Um}),\mathcal{X}(\widehat{\Um})\right\rangle  \leq \nonumber\\
&\mu ||\Um||_{\mathcal{A}} -\left\langle\yv-\mathcal{X}(\widehat{\Um}),\mathcal{X}(\Um)\right\rangle, \nonumber
\end{align}
which can be written as 
\begin{align}
\label{eq: opti express 6}
&\mu ||\widehat{\Um}||_{\mathcal{A}}-\left\langle\mathcal{X}^{*}\left(\yv-\mathcal{X}(\widehat{\Um})\right),\widehat{\Um}\right\rangle  \leq \nonumber\\
&\inf_{\Um} \left[\mu ||\Um||_{\mathcal{A}} -\left\langle\mathcal{X}^{*}\left(\yv-\mathcal{X}(\widehat{\Um}\right),\Um\right\rangle\right].
\end{align}
Given the dual atomic norm definition, we can write
\begin{equation}
\label{eq: opti express 7}
 \inf_{{\Um}} \left[||{\Um}||_{\mathcal{A}} - \left\langle \Cm, {\Um}\right\rangle_{\mathbb{R}} \right]=\left\{
                \begin{array}{ll}
                  0 &\text{if} \ \ ||\Cm||_{\mathcal{A}}^{*} \leq 1\\
                  \infty &\text{otherwise}\\
                \end{array}
              \right.
\end{equation}
Based on (\ref{eq: opti express 6}) and (\ref{eq: opti express 7}) we obtain (\ref{eq: suff cond 1}) and (\ref{eq: suff cond 2}).

\section{Proof of Lemma~\ref{lemma: bound on phi}}
\label{app: phi bound proof}
Based on (\ref{eq: error vector}) and (\ref{eq: func phi}) we can write
\begin{align}
&\sup_{\rv\in [0,1]^{2}}\left|\left| \phiv\left(\rv\right) \right|\right|_{2} =\sup_{\rv\in [0,1]^{2}}\left|\left|\mathcal{X}^{*}\left(\mathcal{X}(\Um_{\text{o}})-\mathcal{X}(\widehat{\Um})\right) \av\left(\rv\right)\right|\right|_{2} \nonumber\\
&= \sup_{\rv\in [0,1]^{2}}\left|\left|\mathcal{X}^{*}\left(\yv-\omegav-\mathcal{X}(\widehat{\Um})\right) \av\left(\rv\right)\right|\right|_{2}  \leq \nonumber\\
& \sup_{\rv\in [0,1]^{2}}\left|\left|\mathcal{X}^{*}\left(\yv-\mathcal{X}(\widehat{\Um})\right) \av\left(\rv\right)\right|\right|_{2} + \sup_{\rv\in [0,1]^{2}}\left|\left|\mathcal{X}^{*}\left(\omegav\right) \av\left(\rv\right)\right|\right|_{2} \nonumber\\
\label{eq: phi bou 1}
& = \left|\left|\mathcal{X}^{*}\left(\yv-\mathcal{X}(\widehat{\Um})\right) \right|\right|_{\mathcal{A}}^{*}+\left|\left|\mathcal{X}^{*}\left(\omegav\right) \right|\right|_{\mathcal{A}}^{*} \\
\label{eq: phi bou 2}
& \leq 2 \mu,
\end{align}
where (\ref{eq: phi bou 1}) is based on the definition of the dual atomic norm while (\ref{eq: phi bou 2}) is based on Lemma~\ref{lemma: suff cond lemma} and Lemma~\ref{lemma: sub phi bound}.

\begin{lemma}
\label{lemma: sub phi bound}
Let $\mu$ be as in (\ref{eq: regula}) with $\lambda \geq 1$, then
\begin{equation}
\text{Pr}\left[\left|\left|\mathcal{X}^{*}\left(\omegav\right) \right|\right|_{\mathcal{A}}^{*} \leq \frac{\mu}{\lambda} \right]  \geq 1- \frac{CK}{\sqrt{N^{3}}}. \nonumber
\end{equation}
\end{lemma}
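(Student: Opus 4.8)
The plan is to recognize that, exactly as in the optimality-condition computation that reduces $\|\widetilde{\mathcal{X}}^*(\qv)\|_{\mathcal{A}}^*$ to a supremum, one has
\begin{equation}
\left\|\mathcal{X}^*(\omegav)\right\|_{\mathcal{A}}^* = \sup_{\rv\in[0,1]^2}\left\|\mathcal{X}^*(\omegav)\,\av(\rv)\right\|_2 =: \sup_{\rv\in[0,1]^2}\left\|\psiv(\rv)\right\|_2,
\end{equation}
so the lemma is a uniform bound on a vector-valued random trigonometric polynomial $\psiv(\rv)\in\mathbb{C}^{K\times1}$. I would condition on $\Dm$ throughout, so that the only randomness is the Gaussian noise $\omegav$ and the target bound $\mu/\lambda = 6\sigma_{\omegav}\|\Dm\|_F\sqrt{\log N}$ is a deterministic quantity. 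Using the explicit form of the polynomial in (\ref{eq: simple polynomail denoise}) with $\qv=\omegav$, each $\psiv(\rv)=\Mm(\rv)\omegav$ is a linear image of $\omegav$, hence a circularly symmetric complex Gaussian vector for every fixed $\rv$.

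The first quantitative step is to compute the pointwise second moment. Writing $\psiv(\rv)$ through the Dirichlet kernel one gets $[\psiv(\rv)]_i = \sum_p \omega(p)\,e^{-i2\pi pf}\sum_{l}[\dv_l]_i\,D_N\!\big(\tfrac{p-l}{L}-\tau\big)$, and the reproducing identity $\sum_{p=-N}^{N} D_N(\tfrac{p}{L}-a)D_N(\tfrac{p}{L}-b)=D_N(a-b)$ together with $D_N(m/L)=\delta_{m,0}$ for integer $m$ collapses the cross terms in $l$, giving the clean identity
\begin{equation}
\mathbb{E}\left\|\psiv(\rv)\right\|_2^2 = \sigma_{\omegav}^2\sum_{i,p}\left|[\Mm(\rv)]_{i,p}\right|^2 = \sigma_{\omegav}^2\,\|\Dm\|_F^2,
\end{equation}
independent of $\rv$; in particular $\mathbb{E}\|\psiv(\rv)\|_2\le\sigma_{\omegav}\|\Dm\|_F$ and $\|\Mm(\rv)\|_F=\|\Dm\|_F$. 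Since $\omegav\mapsto\|\Mm(\rv)\omegav\|_2$ is Lipschitz with constant $\|\Mm(\rv)\|_2\le\|\Mm(\rv)\|_F=\|\Dm\|_F$, Gaussian concentration then yields, for each fixed $\rv$, a sub-Gaussian tail $\text{Pr}[\|\psiv(\rv)\|_2\ge\sigma_{\omegav}\|\Dm\|_F+t]\le C\exp(-c\,t^2/(\sigma_{\omegav}^2\|\Dm\|_F^2))$.

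The second step is to upgrade this pointwise estimate to a uniform one over $[0,1]^2$. Because $\psiv(\rv)$ is a trigonometric vector polynomial of degree $N$ in each of $\tau,f$, Bernstein's polynomial inequality (already used for $\phiv$ in the proof of Lemma~\ref{lemma: error bound}) bounds its partial derivatives by $2\pi N\sup_{\rv}\|\psiv(\rv)\|_2$, so I would place a grid on $[0,1]^2$ of spacing $\mathcal{O}(N^{-a})$ (only $\mathrm{poly}(N)$ points), union-bound the pointwise tail over the grid and over the $K$ coordinates, and transfer from the grid to the continuum through the derivative bound at the cost of a constant factor. Choosing $t$ of order $5\sigma_{\omegav}\|\Dm\|_F\sqrt{\log N}$ makes the total bound at most $6\sigma_{\omegav}\|\Dm\|_F\sqrt{\log N}=\mu/\lambda$, while the generous constant $6$ forces $c\,t^2/(\sigma_{\omegav}^2\|\Dm\|_F^2)\gtrsim 25\log N$, which dominates the $\mathrm{poly}(N)\cdot K$ union-bound factor and leaves a residual failure probability of at most $CK/\sqrt{N^3}$.

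I expect the main obstacle to be the uniform (over $\rv$) control rather than the pointwise estimate: one must simultaneously pin down the correct $\sqrt{\log N}$ scaling, keep the numerical constant at $6$, and retain the $CK N^{-3/2}$ failure probability after the net-plus-Bernstein argument and the coordinatewise union bound. The exact variance identity $\mathbb{E}\|\psiv(\rv)\|_2^2=\sigma_{\omegav}^2\|\Dm\|_F^2$ is the technical linchpin that ties the bound to $\|\Dm\|_F$, and any looseness there would propagate into the final MSE through $\mu$.
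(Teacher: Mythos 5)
Your proposal is correct and follows essentially the same route as the paper's proof: the same pointwise variance identity $\mathbb{E}\left[\left\|\mathcal{X}^{*}\left(\omegav\right)\av\left(\rv\right)\right\|_{2}^{2}\right]=\sigma_{\omegav}^{2}\left|\left|\Dm\right|\right|_{F}^{2}$ (the paper computes it coordinatewise as $\sigma_{\omegav}^{2}\sigma_{n}^{2}$ per row of $\Dm^{H}$), followed by a Gaussian tail bound, Bernstein's polynomial inequality, a grid of spacing $\mathcal{O}\left(1/N\right)$ (the paper takes $M=5\pi N$), and a union bound yielding the $CK/\sqrt{N^{3}}$ failure probability. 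The only cosmetic difference is that the paper works with the squared quantity $Q\left(\rv\right)=\sum_{n=1}^{K}|\widetilde{Q}_{n}\left(\rv\right)|^{2}$ and union-bounds scalar complex-Gaussian tails over the $K$ coordinates (the source of the factor $K$), whereas you invoke Lipschitz concentration of the full vector norm around its mean.
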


\begin{proof}
It is shown in \cite[Appendix~C]{suliman2018blind} that
\begin{align}
\mathcal{X}^{*}\left(\omegav\right)  \av\left(\rv\right)= \sum_{p=-N}^{N} [\omegav]_{p}\frac{1}{L} \sum_{l,k=-N}^{N} e^{\frac{i2\pi k \left(p-l\right)}{L} } e^{-i2\pi \left(k \tau+p f \right)}\dv_{l} \nonumber
\end{align}
As a result, we can obtain after some manipulations
\begin{align}
\label{eq: lemma proof 3}
&\left(||\mathcal{X}^{*}\left(\omegav\right) ||_{\mathcal{A}}^{*}\right)^{2}=\sup_{\rv \in [0,1]^{2}} \sum_{n=1}^{K} \frac{1}{L^{2}} \sum_{p,p',k,k',l,l'=-N}^{N}[\omegav]_{p} [\omegav^{H}]_{p'} \times \nonumber\\
&[\dv_{l}]_{n}[\dv_{l'}^{H}]_{n} e^{i \frac{2\pi}{L}\left(k\left(p-l\right)-k'\left(p'-l'\right)\right)} e^{-i2\pi\tau\left(k-k'\right)} e^{-i2\pi f\left(p-p'\right)} \nonumber\\
&=: \sup_{\rv \in [0,1]^{2}} Q\left(\rv\right).
\end{align}
Let $\rv_{1}$ and $\rv_{2}$ be any two vectors in $[0,1]^{2}$, then
\begin{align}
\label{eq: lemma proof 4}
&Q\left(\rv_{1}\right) - Q\left(\rv_{2}\right)= \nonumber\\
&Q\left(\tau_{1},f_{1}\right)-Q\left(\tau_{1},f_{2}\right)+Q\left(\tau_{1},f_{2}\right)-Q\left(\tau_{2},f_{2}\right) \nonumber\\
&\leq \left|Q\left(\tau_{1},f_{1}\right)-Q\left(\tau_{1},f_{2}\right)\right|+\left|Q\left(\tau_{1},f_{2}\right)-Q\left(\tau_{2},f_{2}\right) \right| \nonumber\\
& \leq \left|f_{1}-f_{2}\right|\sup_{x_{1}\in [0,1]}\left|Q^{(0,1)}\left(\tau_{1},x_{1}\right)\right|\nonumber\\
&+\left|\tau_{1}-\tau_{2}\right|\sup_{x_{2}\in [0,1]}\left|Q^{(1,0)}\left(x_{2},f_{2}\right)\right|.
\end{align}
Applying Bernstein's polynomial inequality to (\ref{eq: lemma proof 4}) we obtain
\begin{align}
\label{eq: lemma proof 5}
&Q\left(\rv_{1}\right) - Q\left(\rv_{2}\right) \leq \left(2\pi N\right) \left|f_{1}-f_{2}\right|  \sup_{x_{1}\in [0,1]}\left|Q\left(\tau_{1},x_{1}\right)\right|  \nonumber\\
&+\left(2\pi N\right) \left|\tau_{1}-\tau_{2}\right|\sup_{x_{2}\in [0,1]}\left|Q\left(x_{2},f_{2}\right)\right| \nonumber\\
& \leq \left(4 \pi N\right) ||\rv_{1} -\rv_{2}||_{\infty} \sup_{\rv \in [0,1]^{2}} Q\left(\rv\right).
\end{align}
Let $\rv_{2}$ to acquire any value in a set of grid points defined by $l_{1}$ and $l_{2}$ such that $l_{1},l_{2} \in \mathbb{Z}$ with $\tau_{2}=0, \frac{1}{M}, \dots, \frac{l_{1}}{M}, \dots, \frac{M-1}{M}$ and $f_{2}=0, \frac{1}{M}, \dots, \frac{l_{2}}{M}, \dots, \frac{M-1}{M}$; $M \geq L$. Then based on (\ref{eq: lemma proof 5}) we can write
\begin{align}
&\sup_{\rv \in [0,1]^{2}} Q\left(\rv\right) \leq \max_{l_{1},l_{2}=0,\dots,M-1} \sup_{\rv \in [0,1]^{2}}\left[ Q\left(\frac{l_{1}}{M},\frac{l_{2}}{M}\right) \right. \nonumber\\
&\left.+\left(4 \pi N\right) \max \left(|\tau-\frac{l_{1}}{M}|,|f-\frac{l_{2}}{M}|\right) \sup_{\rv \in [0,1]^{2}} Q\left(\rv\right) \right] \nonumber
\end{align}
which leads to 
\begin{align}
\label{eq: lemma proof 6 temp}
\sup_{\rv \in [0,1]^{2}} Q\left(\rv\right) \leq \left(1-\frac{4 \pi N}{M}\right)^{-1} \max_{l_{1},l_{2}=0.\dots, M-1} Q\left(\frac{l_{1}}{M},\frac{l_{2}}{M}\right).
\end{align}
Substituting (\ref{eq: lemma proof 6 temp}) in (\ref{eq: lemma proof 3}) we get 
\begin{align}
\left(||\mathcal{X}^{*}\left(\omegav\right) ||_{\mathcal{A}}^{*}\right)^{2} \leq \left(1-\frac{4 \pi N}{M}\right)^{-1} \hspace{-7pt}\max_{l_{1},l_{2}=0.\dots, M-1} Q\left(\frac{l_{1}}{M},\frac{l_{2}}{M}\right). \nonumber
\end{align}
Note that we can also write based on (\ref{eq: lemma proof 3})
\begin{align}
&Q\left(\frac{l_{1}}{M},\frac{l_{2}}{M}\right)= \nonumber\\
&\sum_{n=1}^{K} \left|\frac{1}{L} \sum_{p,k=-N}^{N} [\omegav]_{p} \sum_{l=-N}^{N} e^{\frac{i2\pi k \left(p-l\right)}{L} } e^{\frac{-i2\pi}{M} \left(l_{1} k+l_{2}p \right)}[\dv_{l}]_{n} \right|^{2} \nonumber\\
&=: \sum_{n=1}^{K} \left|\widetilde{Q}_{n}\left(\frac{l_{1}}{M},\frac{l_{2}}{M}\right)\right|^{2}. \nonumber 
\end{align}
Given the assumptions on $\omegav$ and $\dv_{l}$ we can deduce that
\begin{align}
&\mathbb{E}\left[\widetilde{Q}_{n}\left(\frac{l_{1}}{M},\frac{l_{2}}{M}\right)\right] = 0 \nonumber
\end{align}
and
\begin{align}
&\text{var}\left(\widetilde{Q}_{n}\left(\frac{l_{1}}{M},\frac{l_{2}}{M}\right)\right) = \frac{1}{L^2} \text{var}\left(\omegav\right) \sum_{p,k=-N}^{N} \sum_{l=-N}^{N} \left|[\dv_{l}]_{n}\right|^{2} \nonumber\\
&= \sigma_{\omegav}^{2} \sigma_{n}^{2}, \nonumber
\end{align}
where $\sigma_{n}^{2}:=\sum_{l=-N}^{N} \left|[\dv_{l}]_{n}\right|^{2}$. Hence, $\widetilde{Q}_{n}\left(\frac{l_{1}}{M},\frac{l_{2}}{M}\right)$ can be viewed as a Gaussian variable of zero mean and a variance equal to $\sigma_{\omegav}^{2} \sigma_{n}^{2}$. Now based on the concentration property on Gaussian variables we can write 
\begin{align}
&\text{Pr}\left(\left(||\mathcal{X}^{*}\left(\omegav\right) ||_{\mathcal{A}}^{*}\right)^{2} \geq \frac{\mu^{2}}{\lambda^{2}}\right) \leq  \nonumber\\
&\text{Pr}\left( \max_{l_{1},l_{2}=0.\dots, M-1} \sum_{i=1}^{K} \left|\widetilde{Q}_{n}\left(\frac{l_{1}}{M},\frac{l_{2}}{M}\right)\right|^{2} \geq \left(1-\frac{4 \pi N}{M}\right)\frac{\mu^{2}}{\lambda^{2}}\right)\nonumber\\
&\leq M^{2} \text{Pr}\left(\sum_{n=1}^{K} \hspace{-3pt} \left|\widetilde{Q}_{n}\hspace{-2pt}\left(\frac{l_{1}}{M},\frac{l_{2}}{M}\hspace{-2pt} \right)\hspace{-2pt} \right|^{2}\hspace{-3pt} \geq \left(1-\frac{4 \pi N}{M}\right)\frac{\mu^{2}}{\lambda^{2}}\hspace{-3pt}\right)\hspace{-2pt}\leq M^{2} \times  \nonumber\\
\label{eq: lemma proof 10}
& \text{Pr}\left(\hspace{-1pt}\sum_{n=1}^{K} \hspace{-1pt}\left|\widetilde{Q}_{n}\hspace{-2pt}\left(\frac{l_{1}}{M},\frac{l_{2}}{M}\right)\right|^{2} \hspace{-5pt} \geq\hspace{-3pt} \left(\hspace{-2pt}1-\hspace{-2pt} \frac{4 \pi N}{M}\right)\hspace{-1pt}36 \sigma_{\omegav}^{2}\hspace{-1pt}\log\left(N\right)\hspace{-2pt}\sum_{n=1}^{K}\hspace{-2pt} \sigma_{n}^{2} \hspace{-3pt}\right)\\
& \leq M^{2}K \text{Pr}\hspace{-2pt} \left(\hspace{-2pt}\left|\widetilde{Q}_{n}\hspace{-3pt} \left(\frac{l_{1}}{M},\frac{l_{2}}{M}\right)\hspace{-2pt} \right| \hspace{-3pt} \geq \hspace{-3pt}\sqrt{1-\hspace{-2pt} \frac{4 \pi N}{M}}6 \sigma_{\omegav}\sigma_{n} \sqrt{\log\left(N\right)}\hspace{-2pt}\right)\nonumber\\
\label{eq: lemma proof 12}
& \leq2 M^{2}K \exp\left( \frac{-36\left(1-\frac{4 \pi N}{M}\right) \sigma_{\omegav}^{2}\sigma_{n}^{2} \log\left(N\right)}{2\sigma_{\omegav}^{2}\sigma_{n}^{2}} \right)\\
\label{eq: lemma proof 13}
& \leq2 M^{2}K \exp\left( - 18\left(1-\frac{4 \pi N}{M}\right) \log\left(N\right) \right)\nonumber\\
& \leq 50 \pi^{2} N^{2} K \frac{1}{N^{3.6}} \leq \frac{C K}{\sqrt{N^{3}}},
\end{align}
where (\ref{eq: lemma proof 10}) is obtained by setting $\mu = 6\lambda \sigma_{\omegav} ||\Dm||_{F} \sqrt{\log\left(N\right)}$ with $||\Dm||_{F}^{2}= \sum_{n=1}^{K}\sigma_{n}^{2}$ while (\ref{eq: lemma proof 12}) is based on the concentration property on Gaussian variables, i.e., for $x \sim \mathcal{N}\left(0,\sigma^{2}_{x}\right)$ we have $\text{Pr}\left(|x|\geq t\right) \leq 2 \exp\left(\frac{-t^{2}}{2\sigma^{2}_{x}}\right)$. Finally, (\ref{eq: lemma proof 13}) is obtained by setting $M=5 \pi N$. 
\end{proof}

\section{Proofs of Theorems~\ref{th: main dual cert}, \ref{th: main dual cert 1}, and \ref{th: main dual cert 2} }
\label{app: Theorems proof}
\subsection{Proof of Theorem~\ref{th: main dual cert}}
We start by recalling some of the results in \cite{suliman2018blind}. First, retrieve the expression of $\fv\left(\rv\right)$ in \cite[Equation (39)]{suliman2018blind}, i.e.,
\begin{align} 
\label{eq: final dual}
\fv\left(\rv\right) =\sum_{j=1}^{R} &\Mm_{\left(0,0\right)}\left(\rv,\rv_{j}\right) \alphav_{j} + \Mm_{\left(1,0\right)}\left(\rv,\rv_{j}\right) \betav_{j}\nonumber\\
&+ \Mm_{\left(0,1\right)}\left(\rv,\rv_{j}\right) \gammav_{j},
\end{align}
where $\Mm_{\left(m,n\right)} \left(\rv,\rv_{j}\right) \in \mathbb{C}^{K \times K}, m,n=0,1$ is given by \cite[Equation (50)]{suliman2018blind} whereas $\alphav_{j}, \betav_{j},\gammav_{j}  \in \mathbb{C}^{K \times 1}$ are vector parameters selected in a way such that $\forall \rv_{j} \in \mathcal{R}$ we have
\begin{align}
\label{eq: theo con 1}
&\fv\left(\rv_{j}\right)= \text{sign}\left(c_{j}\right)\hv_{j} \\
\label{eq: theo con 2}
&\fv^{\left(1,0\right)}\left(\rv_{j}\right)= \bm{0}_{K\times 1}, \ \ \ \fv^{\left(0,1\right)}\left(\rv_{j}\right)= \bm{0}_{K\times 1}.
\end{align}
As discussed in \cite[Sections~VI-B and VI-E]{suliman2018blind}, the formulation of $\fv\left(\rv\right)$ in (\ref{eq: final dual}) ensures directly that (\ref{eq: hold assump 1}) and (\ref{eq: hold assump 3}) are both satisfied. To prove (\ref{eq: hold assump 2}), we can write using Taylor series expansion at $\rv_{j} \in \mathcal{R}$
\begin{align}
\label{eq: tho proof taylor}
&||\fv\left(\rv\right)||_{2}= ||\fv\left(\rv_{j}\right)||_{2}+\left(\tau-\tau_{j}\right) \bigtriangledown_{\tau}||\fv\left(\rv\right)||_{2} |_{\rv=\rv_{j}} \nonumber\\
&+\left(f-f_{j}\right) \bigtriangledown_{f}||\fv\left(\rv\right)||_{2} |_{\rv=\rv_{j}}+ \frac{1}{2}\left(\tau-\tau_{j}\right)^{2} \bigtriangledown_{\tau}^{2}||\fv\left(\rv\right)||_{2} |_{\rv=\rv_{j}}\nonumber\\
&+ \frac{1}{2}\left(f-f_{j}\right)^{2} \bigtriangledown_{f}^{2}||\fv\left(\rv\right)||_{2} |_{\rv=\rv_{j}}\nonumber\\
&+\left(\tau-\tau_{j}\right)\left(f-f_{j}\right) \bigtriangledown_{\tau f}||\fv\left(\rv\right)||_{2} |_{\rv=\rv_{j}}.
\end{align}
Based on (\ref{eq: theo con 2}), the second and the third terms in (\ref{eq: tho proof taylor}) will vanish, thus, (\ref{eq: tho proof taylor}) can be written after some manipulations as 
\begin{align}
\label{eq: tho proof taylor 2}
&||\fv\left(\rv\right)||_{2}=\nonumber\\
& ||\fv\left(\rv_{j}\right)||_{2}+ \frac{1}{2}\left(\tau-\tau_{j}\right)^{2} ||\fv\left(\rv_{j}\right)||_{2}^{-1}\left\langle\fv^{(2,0)}\left(\rv_{j}\right),\fv\left(\rv_{j}\right)\right\rangle \nonumber\\
&+ \frac{1}{2}\left(f-f_{j}\right)^{2} ||\fv\left(\rv_{j}\right)||_{2}^{-1}\left\langle\fv^{(0,2)}\left(\rv_{j}\right),\fv\left(\rv_{j}\right)\right\rangle\nonumber\\
&+\left(\tau-\tau_{j}\right)\left(f-f_{j}\right)||\fv\left(\rv_{j}\right)||_{2}^{-1}\left\langle\fv^{(1,1)}\left(\rv_{j}\right),\fv\left(\rv_{j}\right)\right\rangle
\end{align}
Upon using the results obtained in \cite[Proof of Lemma~16]{suliman2018blind}, and based on the fact that $||\fv\left(\rv_{j}\right)||_{2} \leq 1$, we can show that
\begin{equation}
\label{eq: the proof taylor 3}
||\fv\left(\rv_{j}\right)||_{2}^{-1} \left\langle\fv^{(m,n)}\left(\rv_{j}\right),\fv\left(\rv_{j}\right)\right\rangle \leq  -\bar{C}_{2}^{*}N^{2}
\end{equation}
for $(m,n) \in \{(2,0), (0,2)\}$. Furthermore, we also have
\begin{equation}
\label{eq: the proof taylor 4}
||\fv\left(\rv_{j}\right)||_{2}^{-1} \left\langle\fv^{(1,1)}\left(\rv_{j}\right),\fv\left(\rv_{j}\right)\right\rangle \leq   \bar{C}_{2}^{*} N^{2}.
\end{equation}
Substituting (\ref{eq: the proof taylor 3}) and (\ref{eq: the proof taylor 4}) in (\ref{eq: tho proof taylor 2}), and then manipulating, we obtain (\ref{eq: hold assump 2}) for a constant $\bar{C}_{2}^{*}$.

To prove (\ref{eq: hold assump 4}), define $\kv\left(\rv\right)= \bar{\fv}\left(\rv\right)-\text{sign}\left(c_{j}\right)\hv_{j}$ where $\bar{\fv}\left(\rv\right)=\mathbb{E}\left[\fv\left(\rv\right)\right]$. Based on that, we can write
\begin{equation}
\label{eq: tho proof taylor 5}
||\fv\left(\rv\right)-\text{sign}\left(c_{j}\right)\hv_{j}||_{2}\leq ||\fv\left(\rv\right)-\bar{\fv}\left(\rv\right)||_{2} + ||\kv\left(\rv\right)||_{2}.
\end{equation}
It is shown in \cite[Lemma~13]{suliman2018blind} that the first term in (\ref{eq: tho proof taylor 5}) is very small with probability at least $1- \delta$ when the lower bound on $L$, as provided in Theorem~\ref{th: main dual cert}, is satisfied. As for the second term, we assume without loss of generality that $\rv_{j}=\bm{0}$ and bound $||\kv\left(\rv\right)||_{2}$ in the interval $[0,0.2447/N]$. Upon writing the Taylor series expansion at $\rv_{j}=\bm{0}$, we can obtain 
\begin{align}
&||\kv\left(\rv\right)||_{2} = ||\kv\left(\bm{0}\right)+\tau \kv^{(1,0)}\left(\bm{0}\right)+ f \kv^{(0,1)}\left(\bm{0}\right)+ \nonumber\\
&\frac{1}{2} \tau^{2} \kv^{(2,0)}\left(\rv\right)|_{\rv=\bar{\rv}}+\frac{1}{2} f^{2} \kv^{(0,2)}\left(\rv\right)|_{\rv=\bar{\rv}}+\tau f \kv^{(1,1)}\left(\rv\right)|_{\rv=\bar{\rv}}||_{2} \nonumber\\
&\leq \frac{1}{2} \tau^{2} ||\kv^{(2,0)}\left(\rv\right)|_{\rv=\bar{\rv}}||_{2}+\frac{1}{2} f^{2} ||\kv^{(0,2)}\left(\rv\right)|_{\rv=\bar{\rv}}||_{2} \nonumber\\
&+\tau f ||\kv^{(1,1)}\left(\rv\right)|_{\rv=\bar{\rv}}||_{2} \nonumber\\
&\leq \frac{1}{2} 4.3856 N^{2}  \left(\tau^{2}+ f^{2}\right) + \left(\tau f \right) 0.8270 N^{2}, \nonumber
\end{align}
where $\bar{\rv} \in \Omega_{\text{close}}\left(j\right)$. Herein, the first inequality is based on the fact that $\kv\left(\bm{0}\right), \kv^{(1,0)}\left(\bm{0}\right),$ and $\kv^{(0,1)}\left(\bm{0}\right)$ are all equal to zero. In contrast, for the second inequality, we leverage different results from \cite[Appendix~K]{suliman2018blind} (in particular equations (K.1), (K.3), and (K.4)). Finally, arranging the terms results in (\ref{eq: hold assump 4}). 
\subsection{Proof of Theorem~\ref{th: main dual cert 1}}
\label{subsec: proof theorem 3}
To prove Theorem~\ref{th: main dual cert 1}, we extend the result in \cite[Lemma~7]{candes2013super}. For that, we first set $\fv_{1}\left(\rv\right)$ to be
\begin{align} 
\label{eq: final dual for 1}
\fv_{1}\left(\rv\right) =\sum_{j=1}^{R} &\Mm_{\left(0,0\right)}\left(\rv,\rv_{j}\right) \alphav^{(1)}_{j} + \Mm_{\left(1,0\right)}\left(\rv,\rv_{j}\right) \betav^{(1)}_{j}\nonumber\\
&+ \Mm_{\left(0,1\right)}\left(\rv,\rv_{j}\right) \gammav^{(1)}_{j},
\end{align}
where $\alphav^{(1)}_{j}, \betav^{(1)}_{j},\gammav^{(1)}_{j}  \in \mathbb{C}^{K \times 1}$ are vector parameters selected such that (\ref{eq: conds for f1}) is satisfied. Similar to the results in \cite[Sections~VI-A and VI-B]{suliman2018blind} we have
\begin{align} 
\label{eq: final dual for 1 2}
&\bar{\fv}_{1}\left(\rv\right):=\mathbb{E}\left[\fv_{1}\left(\rv\right)\right] =\sum_{j=1}^{R} \widebar{M}\left(\rv-\rv_{j}\right)\Id_{K} \bar{\alphav}^{(1)}_{j}\nonumber\\
& +\widebar{M}^{\left(1,0\right)}\left(\rv-\rv_{j}\right) \Id_{K}\bar{\betav}^{(1)}_{j}+ \widebar{M}^{\left(0,1\right)}\left(\rv-\rv_{j}\right) \Id_{K} \bar{\gammav}^{(1)}_{j}
\end{align}
where $\bar{\alphav}^{(1)}_{j}, \bar{\betav}^{(1)}_{j}$, and $\bar{\gammav}^{(1)}_{j}$ are the solutions of 
\begin{equation}
\label{eq: conds for f1 average}
\bar{\fv}_{1}\left(\rv_{j}\right)=\bar{\fv}_{1}^{\left(0,1\right)}\left(\rv_{j}\right)= \bm{0}_{K}, \bar{\fv}_{1}^{\left(1,0\right)}\left(\rv_{j}\right)= \text{sign}\left(c_{j}\right) \hv_{j}, \forall \rv_{j} \in \mathcal{R}
\end{equation}
whereas $\widebar{M}^{(m,n)}\left(\rv-\rv_{j}\right)$ is given by \cite[Equation~64]{suliman2018blind}. Based on (\ref{eq: final dual for 1 2}), we can express (\ref{eq: conds for f1 average}) in a matrix-vector form as
\begin{equation}
\left( \underbrace{{\begin{bmatrix} \widebar{\Em}^{(0,0)} & \widebar{\Em}^{(1,0)}   & \widebar{\Em}^{(0,1)}   \\ \widebar{\Em}^{(0,1)} &\widebar{\Em}^{(1,1)}   & \widebar{\Em}^{(0,2)} \\ \widebar{\Em}^{(1,0)}   &\widebar{\Em}^{(2,0)} &\widebar{\Em}^{(1,1)}  
  \end{bmatrix}}}_{\widebar{\Em}} \otimes \Id_{\text{K}} \right) \begin{bmatrix} \bar{\alphav}^{(1)} \\ \bar{\betav}^{(1)}  \\  \bar{\gammav}^{(1)} \end{bmatrix}  =  \begin{bmatrix}  \bm{0}_{RK } \\  \bm{0}_{RK} \\ \hv \end{bmatrix}, \nonumber 
\end{equation}
where $\widebar{\Em}^{(m',n')} \in \mathbb{C}^{R \times R}$ with $[\widebar{\Em}^{(m',n')}]_{(l,k)} := \widebar{M}^{(m',n')}\left(\rv_l - \rv_k\right)$ while $\bar{\alphav}^{(1)}  = [ \bar{\alphav}_{1}^{(1) T}, \dots , \bar{\alphav}_{R}^{(1) T} ]^{T}, \bar{\betav}^{(1)}  = [ \bar{\betav}_{1}^{(1) T}, \dots ,\bar{\betav}_{R}^{(1) T} ]^{T}$, and $\bar{\gammav}^{(1)} = [\bar{\gammav}_{1}^{(1) T}, \dots , \bar{\gammav}_{R}^{(1) T} ]^{T}$. Finally, $\hv = \left[ \text{sign}\left(c_{1}\right)\hv_{1}^{T}, \hdots, \text{sign}\left(c_{R}\right)\hv_{R}^{T}\right]^{T} \in \mathbb{C}^{RK \times 1}$. It is shown in \cite[Proposition~3]{suliman2018blind} that $\widebar{\Em}$ and $\widebar{\Em} \otimes \Id_{\text{K}}$ are invertible, and therefore, $\bar{\alphav}^{(1)}, \bar{\betav}^{(1)},$ and $\bar{\gammav}^{(1)}$ are well-defined. In order to prove Theorem~\ref{th: main dual cert 1}, we first need to find upper bounds on these coefficients. For that, define
\begin{align}
& \widebar{\Em}_{1}  = {\begin{bmatrix} \widebar{\Em}^{(0,0)} & \widebar{\Em}^{(0,1)}  \\ \widebar{\Em}^{(0,1)} &\widebar{\Em}^{(0,2)} \end{bmatrix}},  \ \ \widebar{\Em}_{2}  = {\begin{bmatrix} \widebar{\Em}^{(1,0)} \\ \widebar{\Em}^{(1,1)} \end{bmatrix}}, \nonumber\\
&  \widebar{\Em}_{3}  = {\begin{bmatrix} \widebar{\Em}^{(1,0)} & \widebar{\Em}^{(1,1)} \end{bmatrix}}, \ \ \ \widebar{\Em}_{4}  = \widebar{\Em}^{(2,0)}, \nonumber
\end{align}
and let $\bar{\etav}^{(1)} = [\bar{\gammav}^{(1) T}, \ \bar{\betav}^{(1) T} ]^{T}$. By using the block matrix inversion lemma for the $2 \times 2$ block matrices we can write
\begin{equation}
\label{eq: solution with block matr}
{\begin{bmatrix} \bar{\alphav}^{(1)} \\ \bar{\etav}^{(1)}
  \end{bmatrix}} =  \left(\begin{bmatrix}  -\widebar{\Em}_{1}^{-1} \widebar{\Em}_{2}\\  \Id_{\text{R} \times \text{R}} \end{bmatrix}\Sm_{3}^{-1} \otimes  \Id_{\text{K}}\right) \hv, 
\end{equation}
where
\begin{equation}
\label{eq: matrix S}
\Sm_{3}= \widebar{\Em}_{4} -\widebar{\Em}_{3} \widebar{\Em}_{1}^{-1} \widebar{\Em}_{2}.
\end{equation}
Upon defining 
\begin{align}
\label{eq: matrix S1}
\Sm_{1}= \widebar{\Em}^{(0,0)} - \widebar{\Em}^{(0,1)} \left.\widebar{\Em}^{(0,2)}\right.^{-1} \widebar{\Em}^{(0,1)} \\ 
\label{eq: matrix S2}
\Sm_{2}= \widebar{\Em}^{(1,0)} - \widebar{\Em}^{(0,1)} \left.\widebar{\Em}^{(0,2)}\right.^{-1} \widebar{\Em}^{(1,1)},
\end{align}
we can easily show that 
\begin{equation}
\label{eq: matrix S new}
\Sm_{3}= \widebar{\Em}^{(2,0)} + \Sm_{2}^{T} \Sm_{1}^{-1} \Sm_{2}- \widebar{\Em}^{(1,1)} \left.\widebar{\Em}^{(0,2)}\right.^{-1} \widebar{\Em}^{(1,1)}. 
\end{equation}
The definitions of $\Sm_{1}, \Sm_{2},$ and $\Sm_{3}$ allow us to rewrite (\ref{eq: solution with block matr}) after some algebraic manipulations as 
\begin{equation}
\label{eq: solution with block new}
\begin{bmatrix} \bar{\alphav}^{(1)} \\ \bar{\gammav}^{(1)} \\   \bar{\betav}^{(1)} \end{bmatrix} =  \left(\begin{bmatrix}  -\Sm_{1}^{-1}\Sm_{2}\\ \Sm_{4} \\ \Id_{\text{R} \times \text{R}} \end{bmatrix}\Sm_{3}^{-1} \otimes  \Id_{\text{K}}\right) \hv, 
\end{equation}
where 
\begin{equation}
\Sm_{4} = \left.\widebar{\Em}^{(0,2)}\right.^{-1} \widebar{\Em}^{(0,1)} \Sm_{1}^{-1} \Sm_{2} -  \left.\widebar{\Em}^{(0,2)}\right.^{-1} \widebar{\Em}^{(1,1)}.
\end{equation}  
To prove the bounds in (\ref{eq: hold assump 3 for 1}) and (\ref{eq: hold assump 4 for 1}), we first need to derive an upper bound on the Euclidean norms of $\bar{\alphav}^{(1)}, \bar{\betav}^{(1)}$, and $\bar{\gammav}^{(1)}$. For that, we will first highlight some important results. 

First, note that $\widebar{\Em}^{(0,0)},\widebar{\Em}^{(1,1)}, \widebar{\Em}^{(2,0)},$ and $\widebar{\Em}^{(0,2)}$ are symmetric matrices while $\widebar{\Em}^{(1,0)}$ and $\widebar{\Em}^{(0,1)}$ are antisymmetric matrices. Thus, $\Sm_{1}$ is symmetric matrix, whereas $\Sm_{2}$ is antisymmetric. Moreover, a symmetric matrix $\Gm $ is invertible if
\begin{equation}
\label{eq:inver of summetric}
||\Id-\Gm||_{\infty} < 1,
\end{equation}
while for any symmetric matrix $\Gm$ we have
\begin{equation}
\label{eq: symetric bound}
||\Gm^{-1}||_{\infty} \leq \frac{1}{1-||\Id_{\text{R}}-\Gm||_{\infty}}.
\end{equation}
Finally, we need the following proposition, which is based on the results obtained in \cite[Appendix C]{candes2014towards}.
\begin{proposition}
\label{pro: various bounds}
The following bounds hold true under the minimum separation in (\ref{eq: seperation condition}) 
\begin{align}
\label{eq: conditon mat 1}
&||\Id_{\text{R}} -\widebar{\Em}^{(0,0)}||_{\infty}  \leq  0.04854 \\
\label{eq: conditon mat 2}
&||\widebar{\Em}^{(1,0)}||_{\infty}= ||\widebar{\Em}^{(0,1)}||_{\infty}  \leq  7.723\times 10^{-2} N \\
\label{eq: conditon mat 3}
&||\widebar{\Em}^{(1,1)}||_{\infty} \leq  0.1576 N^{2} \\
\label{eq: conditon mat 4}
&||\left.\widebar{\Em}^{(0,2)}\right.^{-1}||_{\infty}  \leq  \frac{0.3399}{N^{2}} \\
\label{eq: conditon mat 5}
&||\ |\widebar{M}^{(2,0)}\left(0\right)|\Id_{\text{R}} -\widebar{\Em}^{(2,0)}||_{\infty}  \leq  0.3539 N^{2}.
\end{align}
\end{proposition}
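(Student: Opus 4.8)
The plan is to read each bound off the entrywise structure of the matrices $\widebar{\Em}^{(m,n)}$, whose $(l,k)$ entry is the scalar kernel value $\widebar{M}^{(m,n)}(\rv_l-\rv_k)$, and then to control the induced $\ell_\infty$ norm — which is the maximum absolute row sum — by isolating the diagonal term and majorizing the off-diagonal tail. First I would recall from \cite{suliman2018blind} that, because the atoms $\av(\rv)$ in (\ref{eq: vec 1}) factor across the two coordinates, the limiting interpolation kernel is separable, $\widebar{M}(\tau,f)=F(\tau)F(f)$, where $F$ is the one-dimensional (Fej\'er-type) kernel of the construction; hence $\widebar{M}^{(m,n)}(\tau,f)=F^{(m)}(\tau)F^{(n)}(f)$. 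Since $F$ is even with $F(0)=1$, all odd-order derivatives vanish at the origin, so the diagonal entries satisfy $\widebar{M}^{(0,0)}(0)=1$, $\widebar{M}^{(1,0)}(0)=\widebar{M}^{(0,1)}(0)=\widebar{M}^{(1,1)}(0)=0$, and $\widebar{M}^{(2,0)}(0)=F''(0)$, whose magnitude is of order $N^2$. This immediately tells us that (\ref{eq: conditon mat 2}) and (\ref{eq: conditon mat 3}) are pure off-diagonal row sums, while in (\ref{eq: conditon mat 1}) and (\ref{eq: conditon mat 5}) the scalar subtraction is chosen to remove, or to measure the deviation from, the diagonal.

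For the off-diagonal contributions I would invoke Assumption~\ref{as 4}: for $k\neq l$ the max-metric separation forces $\max(|\tau_l-\tau_k|,|f_l-f_k|)\ge 2.38/N$, so after rescaling by $N$ the support points form a $2.38$-separated set. Each row sum $\sum_{k\neq l}|F^{(m)}(\tau_l-\tau_k)\,F^{(n)}(f_l-f_k)|$ is then majorized by the supremum, over all $2.38/N$-separated configurations, of the corresponding lattice sum. Using the monotone decay and convexity of $F$, $F'$, $F''$ away from the origin — the precise one-dimensional tail estimates being exactly those tabulated in \cite[Appendix~C]{candes2014towards} and used in \cite{candes2013super} — one bounds these sums by convergent series whose numerical value at separation $2.38/N$ produces the constants $0.04854$, $7.723\times 10^{-2}$, $0.1576$, and $0.3539$. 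The $N^{0}$, $N$, $N^{2}$ prefactors simply track the order of differentiation, since $F^{(m)}$ scales like $N^{m}$ times a dimensionless profile.

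For the inverse bound (\ref{eq: conditon mat 4}) I would not estimate a row sum directly but instead exhibit $\widebar{\Em}^{(0,2)}$ as a small perturbation of a multiple of the identity. Writing $\widebar{\Em}^{(0,2)}=\widebar{M}^{(0,2)}(0)\,(\Id_R-\Gm)$ with $\Gm:=\Id_R-\widebar{M}^{(0,2)}(0)^{-1}\widebar{\Em}^{(0,2)}$, the diagonal of $\Gm$ vanishes and its off-diagonal row sums are again the $2.38/N$-separated tail sum of $|F''|$ normalized by $|F''(0)|$, which is strictly below $1$; invertibility then follows from (\ref{eq:inver of summetric}) and the quantitative estimate (\ref{eq: symetric bound}). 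Dividing by $|\widebar{M}^{(0,2)}(0)|$, which is of order $N^2$, yields the stated $0.3399/N^2$.

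The step I expect to be the main obstacle is the numerically sharp evaluation of the two-dimensional tail sums. Unlike the one-dimensional situation, the row sum $\sum_{k\neq l}|F^{(m)}(\tau_l-\tau_k)\,F^{(n)}(f_l-f_k)|$ does not factor, because a single index $k$ couples the two coordinate differences, and the minimum-separation hypothesis only guarantees that at least one of the two factors is evaluated away from the origin. The delicate part is therefore to partition the admissible neighbors according to which coordinate realizes the $2.38/N$ gap, and to dominate the coupled sum by a finite combination of products of one-dimensional tail series, each of which can then be fed into the estimates of \cite{candes2014towards}. Keeping every resulting constant below the sharp thresholds $0.04854,\dots,0.3539$, rather than merely of the correct order in $N$, is where essentially all of the technical bookkeeping resides.
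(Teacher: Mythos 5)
Your plan is correct and coincides with the paper's own route: the paper proves Proposition~\ref{pro: various bounds} only by citation to \cite[Appendix~C]{candes2014towards}, and the argument there is exactly what you reconstruct — the separable squared-Fej\'er kernel $\widebar{M}^{(m,n)}(\tau,f)=F^{(m)}(\tau)F^{(n)}(f)$ with $F(0)=1$, $F'(0)=0$, $|F''(0)|=\frac{\pi^{2}}{3}N(N+4)$, diagonal extraction, off-diagonal row sums over $2.38/N$-separated sets (partitioned by which coordinate realizes the gap) bounded via the tabulated one-dimensional tail estimates, and the Neumann-series bound (\ref{eq: symetric bound}) for (\ref{eq: conditon mat 4}). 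You also correctly locate the only genuinely technical step, namely the coupled two-dimensional tail sums that produce the stated numerical constants.
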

Based on (\ref{eq: matrix S1}) and Proposition~\ref{pro: various bounds} we can write
\begin{align}
||\Id_{\text{R}} -\Sm_{1}||_{\infty} &\leq  ||\Id_{\text{R}} -\widebar{\Em}^{(0,0)}||_{\infty} +|| \widebar{\Em}^{(0,1)} ||_{\infty}^{2} ||\left.\widebar{\Em}^{(0,2)}\right.^{-1} ||_{\infty} \nonumber\\
& \leq 0.0506. \nonumber
\end{align}
Hence, $\Sm_{1}$ is invertible and we can obtain
\begin{equation}
\label{eq: bound on S inverse}
||\Sm_{1}^{-1}||_{\infty} \leq \frac{1}{1-||\Id_{\text{R}}-\Sm_{1}||_{\infty}} \leq  1.0533. 
\end{equation} 
Next, Proposition~\ref{pro: various bounds} allows us to bound $\Sm_{2}$ as
\begin{align}
\label{eq: bounding S2}
||\Sm_{2}||_{\infty} &\leq  ||\widebar{\Em}^{(1,0)}||_{\infty} +|| \widebar{\Em}^{(0,1)} ||_{\infty} ||\left.\widebar{\Em}^{(0,2)}\right.^{-1} ||_{\infty}  || \widebar{\Em}^{(1,1)} ||_{\infty} \nonumber\\
& \leq  0.0814 N.
\end{align}
Applying Proposition~\ref{pro: various bounds} along with (\ref{eq: bound on S inverse}) and (\ref{eq: bounding S2}) yields
\begin{align}
&||\ |\widebar{M}^{(2,0)}\left(0\right)|\Id_{\text{R}} -\Sm_{3}||_{\infty}\leq ||\ |\widebar{M}^{(2,0)}\left(0\right)|\Id_{\text{R}} -\widebar{\Em}^{(2,0)}||_{\infty} \nonumber\\
& + ||\Sm_{2}||_{\infty}^{2} ||\Sm_{1}^{-1}||_{\infty} + || \widebar{\Em}^{(1,1)}||_{\infty}^{2} ||\left.\widebar{\Em}^{(0,2)}\right.^{-1}||_{\infty} \nonumber\\
& \leq 0.3693 N^{2} \nonumber
\end{align}
which when combined with $\widebar{M}^{(2,0)}\left(0\right)= -\frac{\pi^{2}}{3}N\left(N+4\right)$ and (\ref{eq: symetric bound}) yields
\begin{equation}
||\Sm_{3}^{-1}||_{\infty} \hspace{-3pt} \leq \hspace{-2pt}\frac{1}{|\widebar{M}^{(2,0)}\left(0\right)| - ||\ |\widebar{M}^{(2,0)}\left(0\right)|\Id_{\text{R}} -\Sm_{3}||_{\infty}} \hspace{-2pt}\leq\hspace{-2pt} \frac{0.3424}{N^{2}}   \nonumber
\end{equation}
Finally, we can also show using Proposition~\ref{pro: various bounds} that 
\begin{align}
||\Sm_{4}||_{\infty} &\leq  0.0558. \nonumber
\end{align}
Based on the previous results, (\ref{eq: solution with block new}), and \cite[Lemma 5.3]{yang2014exact}, we can conclude that 
\begin{align}
\label{eq: bound on alpha}
&\max_{j=1, \dots, R} ||\bar{\alphav}^{(1)}_{j}||_{2} \leq ||\Sm_{1}^{-1}\Sm_{2} \Sm_{3}^{-1} \otimes  \Id_{\text{K}} ||_{\infty} \max_{j=1,\dots, R}||\hv_{j}||_{2} \nonumber\\
& \leq ||\Sm_{1}^{-1}||_{\infty} ||\Sm_{2}||_{\infty}  ||\Sm_{3}^{-1}||_{\infty} \leq  \frac{0.0294}{N}
\end{align}
\begin{align} 
\label{eq: bound on beta}
\max_{j=1, \dots, R} ||\bar{\betav}^{(1)}_{j}||_{2} &\leq ||\Sm_{3}^{-1}||_{\infty} \leq  \frac{0.3424}{N^{2}}.
\end{align}
\begin{align}
\label{eq: bound on gamma}
&\max_{j=1, \dots, R} ||\bar{\gammav}^{(1)}_{j}||_{2} \leq ||\Sm_{4}||_{\infty}  ||\Sm_{3}^{-1}||_{\infty} \leq  \frac{ 0.0191}{N^{2}}.
\end{align}
Now to show (\ref{eq: hold assump 3 for 1}) we write
\begin{align}
\label{eq: bound step 1}
||\fv_{1}\left(\rv\right)||_{2} \leq ||\fv_{1}\left(\rv\right)-\bar{\fv_{1}}\left(\rv\right)||_{2}+ ||\bar{\fv_{1}}\left(\rv\right)||_{2} \leq {C}^{*}||\bar{\fv_{1}}\left(\rv\right)||_{2}
\end{align}
where we apply the result in \cite[Lemma~13]{suliman2018blind}. Upon using (\ref{eq: final dual for 1 2}), along with (\ref{eq: bound on alpha}), (\ref{eq: bound on beta}), and (\ref{eq: bound on gamma}), we can obtain
\begin{align}
\label{eq: bound step 2}
&||\bar{\fv_{1}}\left(\rv\right)||_{2} \leq \max_{j=1, \dots, R} ||\bar{\alphav}^{(1)}_{j}||_{2} \sum_{j=1}^{R} |\widebar{M}\left(\rv-\rv_{j}\right) | \nonumber\\
&+\max_{j=1, \dots, R} ||\bar{\betav}^{(1)}_{j}||_{2} \sum_{j=1}^{R} |\widebar{M}^{\left(1,0\right)}\left(\rv-\rv_{j}\right) |\nonumber\\
&+\max_{j=1, \dots, R} ||\bar{\gammav}^{(1)}_{j}||_{2} \sum_{j=1}^{R} |\widebar{M}^{\left(0,1\right)}\left(\rv-\rv_{j}\right) | \nonumber\\
&\leq  \frac{0.0294}{N} \bar{C} +  \left(\left(\frac{ 0.0191}{N^{2}} +  \frac{0.3424}{N^{2}}\right)\right) \hat{C} N \leq \frac{C}{N},
\end{align}
where the last inequality is based on the results in \cite[Section 2.3, Appendix~C]{candes2014towards}. Substituting (\ref{eq: bound step 2}) in (\ref{eq: bound step 1}) we get (\ref{eq: hold assump 3 for 1}). 

To prove (\ref{eq: hold assump 4 for 1}) we write
\begin{align}
||\fv_{1}\left(\rv\right)- \hv_{j}\left(\tau-\tau_{j}\right)||_{2} &\leq ||\fv_{1}\left(\rv\right)-\bar{\fv_{1}}\left(\rv\right)||_{2}+ ||\kv_{1}\left(\rv\right)||_{2} \nonumber\\
&\leq C^{*}||\kv_{1}\left(\rv\right)||_{2}, \nonumber
\end{align}
where $\kv_{1}\left(\rv\right):= \bar{\fv}_{1}\left(\rv\right)-\hv_{j}\left(\tau-\tau_{j}\right)$ and the last inequality is based on \cite[Lemma~13]{suliman2018blind}. Without loss of generality, we assume that $\rv_{j}=\bm{0}$ and bound $\kv_{1}\left(\rv\right)$ in the interval of $\Omega_{\text{close}}\left(j\right)$. By using Tylor series expansion around $\rv_{j}$ we obtain
\begin{align}
&||\kv_{1}\left(\rv\right)||_{2} = ||\kv_{1}\left(\bm{0}\right)+\tau  \kv^{(1,0)}_{1}\left(\bm{0}\right) +f  \kv^{(0,1)}_{1}\left(\bm{0}\right)\nonumber\\
&+\frac{1}{2} \tau^{2} \kv^{(2,0)}_{1}\left(\rv\right)|_{\rv=\bar{\rv}} +\frac{1}{2} f^{2} \kv^{(0,2)}_{1}\left(\rv\right)|_{\rv=\bar{\rv}} \nonumber\\
&+ \tau f  \ \kv_{1}^{(1,1)}\left(\rv\right)|_{\rv=\bar{\rv}}||_{2} \leq C N \left(\tau+f\right)^{2}, \nonumber
\end{align}
where $\bar{\rv} \in \Omega_{\text{close}}\left(j\right)$. Here, the last inequality is based on the fact that $\kv_{1}\left(\bm{0}\right)$, $\kv^{(1,0)}_{1}\left(\bm{0}\right)$, and $\kv^{(0,1)}_{1}\left(\bm{0}\right)$ are all equal to zero, the triangular inequality, and Proposition~\ref{pro: various bounds 2} given below.
\begin{proposition}
\label{pro: various bounds 2}
The following bounds hold
\begin{align}
&||\kv^{(2,0)}_{1}\left(\rv\right)|_{\rv=\bar{\rv}}||_{2} \leq 3.4462 N \nonumber\\
&||\kv^{(0,2)}_{1}\left(\rv\right)|_{\rv=\bar{\rv}}||_{2} \leq 0.9779 N\nonumber\\
&||\kv^{(1,1)}_{1}\left(\rv\right)|_{\rv=\bar{\rv}}||_{2} \leq 0.7409 N\nonumber. 
\end{align}
\end{proposition}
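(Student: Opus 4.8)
The plan is to reduce the three second-order derivative bounds to bounds on the corresponding derivatives of the mean dual polynomial $\bar{\fv}_1$. Since $\kv_1(\rv) = \bar{\fv}_1(\rv) - \hv_j(\tau-\tau_j)$ and the subtracted term is affine in $\rv$, all of its pure and mixed second-order partial derivatives vanish. Hence $\kv_1^{(2,0)} = \bar{\fv}_1^{(2,0)}$, $\kv_1^{(0,2)} = \bar{\fv}_1^{(0,2)}$, and $\kv_1^{(1,1)} = \bar{\fv}_1^{(1,1)}$, so it suffices to bound the second-order derivatives of $\bar{\fv}_1$ uniformly over $\bar{\rv} \in \Omega_{\text{close}}(j)$.

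Next, I would differentiate the closed-form expression (\ref{eq: final dual for 1 2}) term by term. Because the coefficient vectors $\bar{\alphav}^{(1)}_j, \bar{\betav}^{(1)}_j, \bar{\gammav}^{(1)}_j$ do not depend on $\rv$, each partial derivative acts only on the scalar kernels, raising their derivative orders. For instance,
\begin{equation}
\bar{\fv}_1^{(2,0)}(\rv) = \sum_{j=1}^{R} \Bigl[ \widebar{M}^{(2,0)}(\rv-\rv_j)\, \bar{\alphav}^{(1)}_j + \widebar{M}^{(3,0)}(\rv-\rv_j)\, \bar{\betav}^{(1)}_j + \widebar{M}^{(2,1)}(\rv-\rv_j)\, \bar{\gammav}^{(1)}_j \Bigr], \nonumber
\end{equation}
and analogous expansions hold for $\bar{\fv}_1^{(0,2)}$ (involving $\widebar{M}^{(0,2)}, \widebar{M}^{(1,2)}, \widebar{M}^{(0,3)}$) and for $\bar{\fv}_1^{(1,1)}$ (involving $\widebar{M}^{(1,1)}, \widebar{M}^{(2,1)}, \widebar{M}^{(1,2)}$).

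Then, applying the triangle inequality and Cauchy--Schwarz, each derivative norm is bounded by a sum over $j$ of kernel-derivative magnitudes weighted by the Euclidean norms of the coefficients, exactly as in the derivation of (\ref{eq: bound step 2}). I would insert the bounds (\ref{eq: bound on alpha})--(\ref{eq: bound on gamma}), namely $\max_j \|\bar{\alphav}^{(1)}_j\|_2 \leq 0.0294/N$, $\max_j \|\bar{\betav}^{(1)}_j\|_2 \leq 0.3424/N^2$, and $\max_j \|\bar{\gammav}^{(1)}_j\|_2 \leq 0.0191/N^2$, and combine them with uniform bounds of the form $\sup_{\rv \in \Omega_{\text{close}}(j)} \sum_{j} |\widebar{M}^{(m,n)}(\rv-\rv_j)|$. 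These kernel sums scale like $N^{m+n}$ and follow by localizing the sum to the self-term plus a rapidly decaying off-diagonal tail controlled by the minimum separation (\ref{eq: seperation condition}), adapting the interpolation-kernel estimates of \cite[Section 2.3, Appendix~C]{candes2014towards} to the 2D kernels of \cite{suliman2018blind}.

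The main obstacle is pinning down the precise numerical constants $3.4462$, $0.9779$, and $0.7409$. This demands explicit, separation-dependent upper bounds on the kernel-derivative sums for orders up to $(3,0)$ at an arbitrary point of $\Omega_{\text{close}}(j)$: the dominant contribution comes from the self-kernel $\widebar{M}^{(m,n)}(\rv-\rv_j)$ near $\rv_j$, while the off-diagonal terms are summed geometrically using the decay of the squared-Dirichlet kernel and its derivatives. The delicate bookkeeping is verifying that the $N^{-2}$ factors on $\bar{\betav}^{(1)}_j$ and $\bar{\gammav}^{(1)}_j$ exactly cancel the $N^{3}$ growth of the third-order kernel sums (and the $N^{-1}$ factor on $\bar{\alphav}^{(1)}_j$ cancels the $N^2$ growth of the second-order sums), so that every contribution is $O(N)$ and the stated constants emerge after summing; this balancing is the technical heart of the argument.
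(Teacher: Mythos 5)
Your proposal follows essentially the same route as the paper's (omitted) argument: the paper states that the proof rests precisely on the coefficient bounds (\ref{eq: bound on alpha})--(\ref{eq: bound on gamma}) together with the kernel-derivative estimates of \cite[Lemma 2.3 and Section C.2]{candes2014towards}, which is exactly the combination you invoke after correctly observing that the affine term $\hv_{j}\left(\tau-\tau_{j}\right)$ drops out of all second-order derivatives so that $\kv_{1}^{(m,n)}=\bar{\fv}_{1}^{(m,n)}$ for $m+n=2$. Your bookkeeping of the $N$-scalings (third-order kernel sums of order $N^{3}$ against the $N^{-2}$ coefficient decay, second-order sums of order $N^{2}$ against the $N^{-1}$ decay) is the right mechanism by which every contribution becomes $\mathcal{O}\left(N\right)$ and the stated constants emerge.
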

The proof of Proposition~\ref{pro: various bounds 2}, which is omitted here, is based on (\ref{eq: bound on alpha}), (\ref{eq: bound on beta}), and (\ref{eq: bound on gamma}), as well as the results obtained in \cite[Lemma 2.3 and Section C.2]{candes2014towards}. This completes the proof of Theorem~\ref{th: main dual cert 1}. 

\subsection{Proof of Theorem~\ref{th: main dual cert 2}}

The proof of Theorem~\ref{th: main dual cert 2} can be conducted in the same way as that of Theorem~\ref{th: main dual cert 1}. For that, we set $\fv_{2}\left(\rv\right)$ to be 
\begin{align} 
\label{eq: final dual for 2}
\fv_{2}\left(\rv\right) =\sum_{j=1}^{R} &\Mm_{\left(0,0\right)}\left(\rv,\rv_{j}\right) \alphav^{(2)}_{j} + \Mm_{\left(1,0\right)}\left(\rv,\rv_{j}\right) \betav^{(2)}_{j}\nonumber\\
&+ \Mm_{\left(0,1\right)}\left(\rv,\rv_{j}\right) \gammav^{(2)}_{j},
\end{align}
where $\alphav^{(2)}_{j}, \betav^{(2)}_{j},\gammav^{(2)}_{j}  \in \mathbb{C}^{K \times 1}$ are selected such that (\ref{eq: conds for f2}) is satisfied. 



\section{Proof of Lemma~\ref{lemma: bounds for Ts} }
\label{app: Ts bounds}
Starting from Lemma~\ref{lemma: error bound} we can write
\begin{align}
\label{eq: app bound T 1}
&T_{0} = \sum_{j=1}^{R} \left|\left|  \ \ \iint \limits_{\Omega_{\text{close}}\left(j\right)} \nuv\left(\rv\right)d\rv \right|\right|_{2}= \sum_{j=1}^{R} \ \ \iint \limits_{\Omega_{\text{close}}\left(j\right)} \nuv\left(\rv\right)^{H}d\rv \times  \nonumber\\
& \frac{\ \ \iint \limits_{\Omega_{\text{close}}\left(j\right)} \nuv\left(\bar{\rv}\right)d\bar{\rv}}{\left|\left|  \ \ \iint \limits_{\Omega_{\text{close}}\left(j\right)} \nuv\left(\bar{\rv}\right)d\bar{\rv} \right|\right|_{2}}=  \underbrace{\sum_{j=1}^{R} \ \ \iint \limits_{\Omega_{\text{close}}\left(j\right)} \nuv\left(\rv\right)^{H} \fv\left(\rv\right) d\rv}_{T_{0,1}} +  \nonumber\\
&\underbrace{\sum_{j=1}^{R} \ \ \iint \limits_{\Omega_{\text{close}}\left(j\right)} \nuv\left(\rv\right)^{H}\left[  \frac{\ \ \iint \limits_{\Omega_{\text{close}}\left(j\right)} \nuv\left(\bar{\rv}\right)d\bar{\rv}}{\left|\left|  \ \ \iint \limits_{\Omega_{\text{close}}\left(j\right)} \nuv\left(\bar{\rv}\right)d\bar{\rv} \right|\right|_{2}}-\fv\left(\rv\right)\right]d\rv}_{T_{0,2}}
\end{align}
By using (\ref{eq: hold assump 3}) we can bound $T_{0,1} $ by
\begin{align}
\label{eq: app bound T 2}
T_{0,1} &\leq  \left| \ \iint \limits_{0}^{1} \nuv\left(\rv\right)^{H}\fv\left(\rv\right) d\rv \right| + \ \iint \limits_{\Omega_{\text{far}}} ||\nuv\left(\rv\right)||_{2} d\rv. 
\end{align}
On the other hand, $T_{0,2}$ can be bounded using (\ref{eq: hold assump 4}) as follows
\begin{align}
\label{eq: app bound T 3}
T_{0,2} &\leq  \sum_{j=1}^{R} \ \ \iint \limits_{\Omega_{\text{close}}\left(j\right)} ||\nuv\left(\rv\right)||_{2}\left(\bar{C}_{2}N^{2}\left(|\tau-\tau_{j}|+|f-f_{j}|\right)^{2}\right)d\rv \nonumber\\
& = C_{5}T_{3}.
\end{align}
Substituting (\ref{eq: app bound T 2}) and (\ref{eq: app bound T 3}) in (\ref{eq: app bound T 1}) leads to the bound on $T_{0}$ as in Lemma~\ref{lemma: bounds for Ts}.

By following the same steps that led to (\ref{eq: app bound T 1}) we can obtain
\begin{align}
\label{eq: app bound T 4}
&T_{1} = \underbrace{(2 \pi N) \sum_{j=1}^{R} \ \ \iint \limits_{\Omega_{\text{close}}\left(j\right)}  \nuv\left(\rv\right)^{H} \fv_{1}\left(\rv\right) d\rv}_{T_{1,1}}  +  \nonumber\\
& \underbrace{2 \pi N \sum_{j=1}^{R} \ \iint \limits_{\Omega_{\text{close}}\left(j\right)} \hspace{-3pt} \nuv\left(\rv\right)^{H}\hspace{-3pt} \left[ \hspace{-1pt} \frac{(\tau-\tau_{j}) \iint \limits_{\Omega_{\text{close}}\left(j\right)} \nuv\left(\bar{\rv}\right)d\bar{\rv}}{\left|\left|  \ \iint \limits_{\Omega_{\text{close}}\left(j\right)}  \nuv\left(\bar{\rv}\right)d\bar{\rv} \right|\right|_{2}}-\fv_{1}\left(\rv\right)\right]d\rv}_{T_{1,2}}
\end{align}
The first term in (\ref{eq: app bound T 4}) can be bounded using (\ref{eq: hold assump 3 for 1}) as
\begin{align}
T_{1,1} &\leq  (2 \pi N) \left| \ \iint \limits_{0}^{1} \nuv\left(\rv\right)^{H}\fv_{1}\left(\rv\right) d\rv \right| + \bar{C}_{5} \iint \limits_{\Omega_{\text{far}}} ||\nuv\left(\rv\right)||_{2} d\rv \nonumber
\end{align}
while for the second term, we can write based on (\ref{eq: hold assump 4 for 1}) 
\begin{align}
&T_{1,2} \leq  (2 \pi N)  \times \nonumber\\
& \sum_{j=1}^{R}\ \iint \limits_{\Omega_{\text{close}}\left(j\right)} \hspace{-3pt} ||\nuv\left(\rv\right)||_{2}\left(\bar{C}_{3}N\left(|\tau-\tau_{j}|+|f-f_{j}|\right)^{2}\right)\hspace{-1pt}d\rv= C_{5}T_{3}. \nonumber
\end{align}
Substituting the above results in (\ref{eq: app bound T 4}), and then manipulating, we obtain the bound on $T_{1}$ as in Lemma~\ref{lemma: bounds for Ts}. 

Upon following the same previous steps, and by using $\fv_{2}\left(\rv\right)$, (\ref{eq: hold assump 3 for 2}), and (\ref{eq: hold assump 4 for 2}), we can obtain the upper bound on $T_{2}$.

\section{Proof of Lemma~\ref{lemma: bound for 0 1 term} }
\label{app: proof for bound 0 1}
The proof of Lemma~\ref{lemma: bound for 0 1 term} is based on Matrix Bernstein inequality lemma, which is given below.
\begin{lemma}
\label{le: matrix berns}
\cite[Theorem 1.6.2]{tropp2015introduction}
Let $\Sm_{1},\dots,\Sm_{n}$ be $ N_{1} \times N_{2}$ independent, centred, uniformly bounded random matrices with $\mathbb{E}\left[\Sm_{k}\right]=\bm{0}$ and $||\Sm_{k}||_{2} \leq q$ for $k= 1, \dots, n. $
Moreover, define the sum
\begin{equation}
\Zm = \sum_{k=1}^{n} \Sm_{k} \nonumber \nonumber
\end{equation}
and denote the matrix variance statistic of the sum by $\nu(\Zm)$, i.e.,
\begin{equation}
\nu(\Zm) := \max \left\lbrace\big|\big|\mathbb{E}\left[\Zm^{H}\Zm\right]\big|\big|_{2}, \big|\big|\mathbb{E}\left[\Zm\Zm^{H}\right]\big|\big|_{2} \right\rbrace. \nonumber
\end{equation}
Then, for every $t \geq 0$ we have
\begin{align}
\text{Pr}\left[||\Zm||_{2} \geq t\right] \leq \left(N_{1} +N_{2}\right) \exp \left( \frac{-t^{2}/2}{\nu(\Zm)+ q t/3}\right). \nonumber
\end{align}
\end{lemma}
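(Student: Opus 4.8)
The plan is to prove this cited concentration bound (\cite[Theorem~1.6.2]{tropp2015introduction}) through the \emph{matrix Laplace-transform method} combined with \emph{Lieb's concavity theorem}. Since the statement concerns rectangular matrices, the first move is to symmetrize the problem with the Hermitian dilation. For $\Zm \in \mathbb{C}^{N_{1}\times N_{2}}$ I would set
\[
\mathcal{H}(\Zm) = \begin{bmatrix} \bm{0} & \Zm \\ \Zm^{H} & \bm{0}\end{bmatrix},
\]
a Hermitian matrix of order $N_{1}+N_{2}$. The map $\mathcal{H}$ is linear, its nonzero eigenvalues are $\pm\sigma_{i}(\Zm)$, hence $\lambda_{\max}(\mathcal{H}(\Zm)) = ||\Zm||_{2}$; moreover, by independence and centring $\sum_{k}\mathbb{E}[\mathcal{H}(\Sm_{k})^{2}] = \mathbb{E}[\mathcal{H}(\Zm)^{2}]$, which is block-diagonal with blocks $\mathbb{E}[\Zm\Zm^{H}]$ and $\mathbb{E}[\Zm^{H}\Zm]$, so its spectral norm is exactly $\nu(\Zm)$. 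Writing $Y := \mathcal{H}(\Zm) = \sum_{k}Y_{k}$ with $Y_{k}:=\mathcal{H}(\Sm_{k})$ independent, centred, Hermitian and $||Y_{k}||_{2}\leq q$, the problem reduces to an upper-tail bound on $\lambda_{\max}(Y)$; the dimension factor $N_{1}+N_{2}$ appearing in the conclusion enters precisely here.

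Next I would invoke the matrix Markov inequality: for every $\theta>0$,
\[
\text{Pr}\left[\lambda_{\max}(Y)\geq t\right] \leq e^{-\theta t}\,\mathbb{E}\big[\trace\, e^{\theta Y}\big],
\]
which follows from $e^{\theta\lambda_{\max}(Y)} = \lambda_{\max}(e^{\theta Y}) \leq \trace\, e^{\theta Y}$ and scalar Markov. The crux is then to control the trace moment-generating function $\mathbb{E}[\trace\, e^{\theta Y}]$ of a \emph{sum} of independent matrices. Because $e^{A+B}\neq e^{A}e^{B}$ for non-commuting $A,B$, there is no exact factorization as in the scalar case; instead Lieb's theorem --- the concavity of $A\mapsto\trace\exp(H+\log A)$ on positive-definite $A$ --- together with Jensen's inequality yields the subadditivity
\[
\mathbb{E}\big[\trace\, e^{\theta Y}\big] \leq \trace\exp\Big(\textstyle\sum_{k}\log\mathbb{E}[e^{\theta Y_{k}}]\Big).
\]

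It then remains to bound each summand's cumulant. Using $\mathbb{E}[Y_{k}]=\bm{0}$, $||Y_{k}||_{2}\leq q$, and the scalar estimate $e^{x}\leq 1+x+\tfrac{x^{2}/2}{1-q\theta/3}$ transported to Hermitian matrices, one obtains, for $0<\theta<3/q$,
\[
\log\mathbb{E}[e^{\theta Y_{k}}] \preceq \frac{\theta^{2}/2}{1-q\theta/3}\,\mathbb{E}[Y_{k}^{2}].
\]
Summing, recalling $||\sum_{k}\mathbb{E}[Y_{k}^{2}]||_{2}=\nu(\Zm)$, and using monotonicity of the trace exponential together with $\trace\exp(A)\leq (N_{1}+N_{2})\,e^{\lambda_{\max}(A)}$, I would get
\[
\mathbb{E}\big[\trace\, e^{\theta Y}\big] \leq (N_{1}+N_{2})\exp\!\left(\frac{\theta^{2}\nu(\Zm)/2}{1-q\theta/3}\right).
\]
Plugging this into the Laplace bound and minimizing $-\theta t+\tfrac{\theta^{2}\nu(\Zm)/2}{1-q\theta/3}$ over $\theta\in(0,3/q)$ via the choice $\theta=t/(\nu(\Zm)+qt/3)$ delivers the claimed inequality. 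I expect the main obstacle to be the subadditivity step: the scalar proof of Bernstein hinges on independence factorizing the MGF, whereas the only serviceable matrix analogue is the non-elementary Lieb concavity theorem, so essentially all of the genuine difficulty lies in justifying that single inequality.
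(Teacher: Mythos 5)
Your proposal is correct, and it is essentially the paper's own route: the paper does not prove this lemma but cites it as \cite[Theorem~1.6.2]{tropp2015introduction}, whose proof is exactly your argument --- Hermitian dilation to reduce $||\Zm||_{2}$ to $\lambda_{\max}$ of an $(N_{1}+N_{2})$-dimensional sum, the matrix Laplace-transform bound, subadditivity of the trace MGF via Lieb's concavity theorem, and the Bernstein cumulant estimate $\log\mathbb{E}[e^{\theta Y_{k}}]\preceq \frac{\theta^{2}/2}{1-q\theta/3}\,\mathbb{E}[Y_{k}^{2}]$ for $0<\theta<3/q$. Your choice $\theta=t/(\nu(\Zm)+qt/3)$ indeed yields the exponent $-\tfrac{t^{2}/2}{\nu(\Zm)+qt/3}$, and the dilation correctly accounts for both the dimension factor $N_{1}+N_{2}$ and the identification of $\nu(\Zm)$ with the norm of the block-diagonal second-moment matrix, so all steps check out.
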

Now starting from the left-hand side of (\ref{eq: bound for int 0 to 1}), and based on the fact that $\fv\left(\rv\right)=\mathcal{X}^{*}\left(\qv\right) \av\left(\rv\right)$, we can write
\begin{align}
\label{eq: proof 0 1 no 1}
&\left| \ \iint \limits_{0}^{1} \nuv\left(\rv\right)^{H}\fv\left(\rv\right) d\rv \right| = \left| \ \iint \limits_{0}^{1} \left\langle \mathcal{X}^{*}\left(\qv\right), \nuv\left(\rv\right)\av\left(\rv\right)^{H}\right\rangle d\rv \right| \nonumber\\
&=  \left| \left\langle\qv,\mathcal{X} \left( \iint \limits_{0}^{1} \nuv\left(\rv\right)\av\left(\rv\right)^{H }d\rv \right) \right\rangle \right| = \left|\left\langle\qv,\ev\right\rangle\right| \nonumber\\
&= \left|\left\langle\left( \mathcal{X} \mathcal{X}^{*}\right)^{-1}\qv, \mathcal{X} \mathcal{X}^{*}\left(\ev\right)\right\rangle\right| = \left|\left\langle\tilde{\fv}\left(\rv\right),\phiv\left(\rv\right)\right\rangle\right|,
\end{align}
where we set $\tilde{\fv}\left(\rv\right) := \mathcal{X}^{*} \left( \mathcal{X} \mathcal{X}^{*}\right)^{-1}\left(\qv\right) \av\left(\rv\right)$. Based on H\"{o}lder's inequality, we can obtain based on (\ref{eq: proof 0 1 no 1})
\begin{align}
\label{eq: proof 0 1 no 2}
&\left| \ \iint \limits_{0}^{1} \nuv\left(\rv\right)^{H}\fv\left(\rv\right) d\rv \right| =  \left|\ \iint \limits_{0}^{1}\phiv\left(\rv\right)^{H}\tilde{\fv}\left(\rv\right) d \rv\right| \leq \nonumber\\
&\iint \limits_{0}^{1} \left|\phiv\left(\rv\right)^{H}\tilde{\fv}\left(\rv\right) d \rv\right| \leq \sup_{\rv \in [0,1]^{2}} ||\phiv\left(\rv\right)||_{2} \left(\iint \limits_{0}^{1} ||\tilde{\fv}\left(\rv\right) ||_{2} d\rv\right).
\end{align}
Upon using the definitions of $\mathcal{X}$ and $\widetilde{\Dm}_{p}$, we can write
\begin{equation}
\left[ \mathcal{X} \mathcal{X}^{*}\left(\qv\right)\right]_{r} =\text{Tr}\left(\sum_{p=-N}^{N} \widetilde{\Dm}_{r} \widetilde{\Dm}_{p}^{H} [\qv]_{p}\right) =  [\text{diag}\left(\uv\right) \qv]_{r}, \nonumber
\end{equation}
where $\mathcal{X} \mathcal{X}^{*} = \text{diag}\left(\uv\right) = \text{diag}\left([u_{-N}, \dots, u_{N}]\right)$ with
\begin{equation}
u_{m}= \sum_{k,l,p'=-N}^{N} e^{i 2\pi \frac{k(m-p')}{L}} \dv_{(m-l)}^{H} \dv_{(p'-l)} = L \sum_{i=-N}^{N} ||\dv_{i}||_{2}^{2}  \nonumber
\end{equation} 
Note that the last equality above is based on the fact that
\begin{align}
     \sum_{k=-N}^{N} e^{i2\pi \frac{k \left(m-p'\right)}{L}}  =\left\{
                \begin{array}{ll}
                  L &\text{if} \ \ p' = m\\
                  0 &\text{if} \ \ p' \neq m\\
                \end{array}
              \right. \nonumber
\end{align}
Based on the derivation of $\fv\left(\rv\right)$ in \cite[Section~VI]{suliman2018blind} we can express $\tilde{\fv}\left(\rv\right)$ as
\begin{align} 
\label{eq: new exp for f bar}
\tilde{\fv}\left(\rv\right) = \sum_{j=1}^{R} &\widetilde{\Mm}_{\left(0,0\right)}\left(\rv,\rv_{j}\right) \alphav_{j} + \widetilde{\Mm}_{\left(1,0\right)}\left(\rv,\rv_{j}\right) \betav_{j}\nonumber\\
&+ \widetilde{\Mm}_{\left(0,1\right)}\left(\rv,\rv_{j}\right) \gammav_{j}.
\end{align}
Herein, $\widetilde{\Mm}_{(m,n)}\left(\rv, \rv_{j}\right)$ with $m,n =0,1$ can be expressed using \cite[Equation (53)]{suliman2018blind}, and the definition of $\mathcal{X} \mathcal{X}^{*}$ as
\begin{align}
\label{eq: M titlde expre}
&\widetilde{\Mm}_{(m,n)}\left(\rv, \rv_{j}\right) = \nonumber\\
& \frac{1}{L^{2}P^{2}} \sum_{p,l,l',k,k'=-N}^{N} g_{k'}\left(-i2\pi k' \right)^{m} g_{p} \left(-i2\pi p \right)^{n} e^{i 2 \pi \frac{\left(kl-k'l'\right)}{L}} \times \nonumber\\
&  e^{-i2\pi \left(k\tau-k'\tau_{j}\right)} e^{-i2\pi p \left(f-f_{j}\right)}\frac{\dv_{\left(p-l\right)} \dv_{\left(p-l'\right)}^{H}}{\sum_{i=-N}^{N} ||\dv_{i}||_{2}^{2}},
\end{align}
with $P = \frac{N}{2}+1$ and
\begin{equation}
\label{eq: fejer coe}
g_{n}= \frac{1}{P} \sum_{l=\text{max}\{n-P,-P\}}^{\text{min}\{n+P,P\}} \left(1-\frac{|l|}{P}\right) \left(1-\frac{|n-l|}{P}\right). 
\end{equation}
Upon following the same steps that led to the matrix-vector form of $\fv\left(\rv\right)$ in \cite[Section~VI-C (Equation (107)]{suliman2018blind}, we can write $\tilde{\fv}\left(\rv\right)$ in a matrix-vector form as
\begin{equation}
\label{tilte f matrix-vector}
\tilde{\fv}\left(\rv\right) = \left.\Tm\left(\rv\right)\right.^{H} \Lm \hv,
\end{equation}
where the matrix $\Tm\left(\rv\right) \in \mathbb{C}^{3RK \times K}$ is given by
\begingroup
\fontsize{9.4pt}{9.9pt}
\begin{align}
\label{eq: matrix T}
& \left.\Tm\left(\rv\right)\right.^{H}:=  \nonumber\\
&\left[ \widetilde{\Mm}_{(0,0)}\left(\rv,\rv_{1}\right), \hdots, \widetilde{\Mm}_{(0,0)}\left(\rv,\rv_{R}\right), \frac{1}{\kappa}\widetilde{\Mm}_{(1,0)}\left(\rv,\rv_{1}\right), \hdots,  \right.\nonumber\\ &\left.\frac{1}{\kappa}\widetilde{\Mm}_{(1,0)}\left(\rv,\rv_{R}\right), \frac{1}{\kappa}\widetilde{\Mm}_{(0,1)}\left(\rv,\rv_{1}\right), \hdots,  \frac{1}{\kappa}\widetilde{\Mm}_{(0,1)}\left(\rv,\rv_{R}\right) \right]
\end{align}
\endgroup
with $\kappa = \sqrt{\frac{\pi^{2}}{3}\left(N^{2}+4N\right)}$. Moreover, $\Lm \in \mathbb{C}^{3RK \times RK}$ contains the first $RK$ columns of the inverse of the matrix $\Em$, which is defined in \cite[Equation (90)]{suliman2018blind} as
\begin{equation}
\label{eq: matrix poly}
\Em ={\begin{bmatrix} \Em^{(0,0)}_{(0,0)} & \frac{1}{\kappa}\Em^{(0,0)}_{(1,0)}   & \frac{1}{\kappa}\Em^{(0,0)}_{(0,1)}  \\ -\frac{1}{\kappa}\Em^{(1,0)}_{(0,0)}   & -\frac{1}{\kappa^2}\Em^{(1,0)}_{(1,0)} &-\frac{1}{\kappa^2}\Em^{(1,0)}_{(0,1)}  \\ -\frac{1}{\kappa}\Em^{(0,1)} _{(0,0)} &-\frac{1}{\kappa^2}\Em^{(0,1)}_{(1,0)}   & -\frac{1}{\kappa^2}\Em^{(0,1)}_{(0,1)} 
  \end{bmatrix}}
\end{equation}
where $\Em^{(m',n')}_{(m,n)} \in \mathbb{C}^{RK \times RK}$ consists of $R \times R$ block matrices of size $K \times K$ with the matrix at the $\left(l,k\right)$ location being given by $[\Em^{(m',n')}_{(m,n)}]_{(l,k)} := \widetilde{\Mm}^{(m',n')}_{(m,n)}\left(\rv_{l},\rv_{k}\right)$. Finally, $\Lm\hv=\begin{bmatrix} \alphav^{H}, \kappa \betav^{H}  , \kappa \gammav^{H} \end{bmatrix}^{H}.$ Based on (\ref{tilte f matrix-vector}) we can write
\begin{eqnarray}
\label{eq: proof 0 1 no 3}
\iint \limits_{0}^{1}\hspace{-2pt} ||\tilde{\fv}\left(\rv\right) ||_{2} d\rv \leq \hspace{-3pt} \iint \limits_{0}^{1}\hspace{-3pt}\Bigg(||\underbrace{\left.\Delta\Tm\right.^{H} \Lm \hv}_{\vv_{1}\left(\rv\right)} ||_{2} + ||\underbrace{\widebar{\Tm}^{H} \Lm \hv}_{\vv_{2}\left(\rv\right)}  ||_{2} \hspace{-2pt}\Bigg) d\rv
\end{eqnarray}
where $\Delta \Tm = \Tm-\widebar{\Tm}$ with $\widebar{\Tm} = \mathbb{E}\left[\Tm\right]$\footnote{Here, and for the rest of the paper, we will drop the dependency of $\Tm, \widebar{\Tm},$ and $\Delta \Tm$ on $\rv$ to simplify the expressions.}. Now consider the following definition
\begin{align}
\label{eq: new matrix W}
\Wm\left(\rv\right) &:=\left.\Delta \Tm\right.^{H} \Lm  =\begin{bmatrix} \Wm_{1}, \dots, \Wm_{R}\end{bmatrix}  \in \mathbb{C}^{K \times RK},
\end{align}
where $\Wm_{j} \in \mathbb{C}^{K \times K}$. Upon using the definition of $\hv$ in Appendix~\ref{subsec: proof theorem 3}, and based on (\ref{eq: new matrix W}), we can write $\vv_{1}\left(\rv\right)$ as
\begin{equation}
\label{eq: redefine v vector}
\vv_{1}\left(\rv\right) =  \sum_{j=1}^{R} \Wm_{j} \ \text{sign}\left(c_{j}\right) \hv_{j} =: \sum_{j=1}^{R} \wv_{j}.
\end{equation}
It is easy to show that $\vv_{1}\left(\rv\right)$ is a sum of independent zero-mean vectors. Therefore, we can apply the Matrix Bernstein inequality in Lemma~\ref{le: matrix berns} to obtain a probability measure on the bound of $||\vv_{1}\left(\rv\right)||_{2}$. For that, we first need to find $q\left(\vv_{1}\right)$ and $\nu\left(\vv_{1}\right)$ as in Lemma~\ref{le: matrix berns}. 

Starting from (\ref{eq: redefine v vector}), we can write 
\begin{align}
&\big|\big|\wv_{j}\big|\big|_{2}  = \big|\big|\Wm_{j} \text{sign}\left(c_{j}\right) \hv_{j}\big|\big|_{2} \leq \big|\big|\Wm_{j}\big|\big|_{2}\leq \big|\big|\Wm\left(\rv\right)\big|\big|_{2}   \nonumber\\
& =\big|\big|\Delta \Tm \Lm\big|\big|_{2}  \leq  \frac{\hat{C}_{7}R}{L^2} =: q\left(\vv_{1}\right), \nonumber
\end{align}
where the second inequality is based on the fact that $\Wm_{j}\left(\rv\right)$ is a sub-matrix of $\Wm\left(\rv\right)$ whereas the last inequality is based on the fact that $||\Lm||_{2}\leq C$, given in \cite[Equation (105)]{suliman2018blind}, and Lemma~\ref{lemma: bound on delta T} which is given at the end of this appendix.

To obtain $\nu\left(\vv_{1}\right)$ we have
\begin{align}
&\nu\left(\vv_{1}\right) = \Bigg|\Bigg|\sum_{j=1}^{R}\mathbb{E}\left[\wv_{j}^{H}\wv_{j}\right]\Bigg|\Bigg|_{2}\hspace{-4pt} = \Bigg|\Bigg|\sum_{j=1}^{R} \mathbb{E}\left[\hv_{j}^{H}\Wm_{j}^{H}\Wm_{j}  \hv_{j}\right]\Bigg|\Bigg|_{2} \nonumber\\
&= \sum_{j=1}^{R} \text{Tr}\left(\mathbb{E}\left[\Wm_{j}^{H}\Wm_{j}\right]\mathbb{E}\left[\hv_{j}\hv_{j}^{H}\right]\right) = \frac{1}{K}\Bigg|\Bigg|\sum_{j=1}^{R} \mathbb{E}\left[\Wm_{j}^{H}\Wm_{j}\right]\Bigg|\Bigg|_{2}\nonumber \\
&= \frac{1}{K} \bigg|\bigg|\mathbb{E}\left[\text{Tr}\left(\left.\Wm\left(\rv\right)\right.^{H}\Wm\left(\rv\right)\right)\right]\bigg|\bigg|_{2}= \nonumber\\
&\frac{1}{K} \Big|\Big|\text{Tr}\left(\mathbb{E}\left[\Lm \Lm^{H}\Delta \Tm \Delta \Tm^{H}\right]\right)\Big|\Big|_{2} \leq 3R\Big|\Big|\mathbb{E}\left[\Lm \Lm^{H}\Delta \Tm \Delta \Tm^{H}\right]\Big|\Big|_{2} \nonumber\\
& \leq 3R C ||\mathbb{E}\left[\Delta\Tm \Delta\Tm^{H}\right]||_{2} \leq \frac{C_{7}^{*}R^{2}}{L^{4}} 
\end{align}
for a numerical constant $C_{7}^{*}$. Here, the fourth equality is based on Lemma~\ref{lemma: covarince of h}, given at the end of this appendix, while the first inequality is based on the fact that for $\Am \in \mathbb{C}^{m \times m}$, $|\text{Tr}\left(\Am \right)| \leq m ||\Am||_{2}$. Finally, the last inequality is based on Lemma~\ref{lemma: bound on delta T}. Given that $L \geq RK$, we can finally conclude that
\begin{equation}
\label{eq: final bound eta}
\nu\left(\vv_{1}\right) \leq \frac{C_{7}^{*}R}{L^{3}K}.
\end{equation}
We are now ready to apply Lemma~\ref{le: matrix berns} to obtain a probability bound on $||\vv_{1}\left(\rv\right)||_{2}$ as follow
\begin{align}
\text{Pr}\left[ ||\vv_{1}\left(\rv\right)||_{2} \geq t \right] &\leq \left(K+1\right) \exp\left(\frac{-t^{2}/2}{\frac{t}{3}\hat{C}_{7}\frac{R}{L^{2}}+C_{7}^{*}\frac{R}{L^{3}K}}\right) \nonumber\\
\label{eq: s vector bound 3}
&\leq \left(K+1\right) \exp\left(\frac{-t^{2}L^{3}K}{4 C_{7}^{*}R}\right) \leq \frac{\delta}{2}
\end{align}
Here, the second inequality in (\ref{eq: s vector bound 3}) is valid for $t \in \left[0, \frac{3 C^{*}_{7}}{\hat{C}_{7}}\frac{1}{L K}\right]$ while the last inequality holds for $t= 2 \sqrt{\frac{C_{7}^{*}R}{L^3 K} \log \left(\frac{2(K+1)}{\delta}\right)}$. Note that this choice of $t$ falls in the range $ \left[0, \frac{3 C^{*}_{7}}{\hat{C}_{7}}\frac{1}{L K}\right]$ for $L \geq  C_{7} RK \log\left(\frac{2(K+1)}{\delta}\right)$ where $C_{7} = \frac{4\hat{C}_{7}^{2}}{9 C_{7}^{*}}$.

To obtain a probability upper bound on the second term in (\ref{eq: proof 0 1 no 3}), we set $\vv_{2}\left(\rv\right)=\widebar{\Tm}^{H} \Lm \hv:= \widetilde{\Wm}\left(\rv\right) \hv$ with 
\begin{align}
\widetilde{\Wm}\left(\rv\right) =\begin{bmatrix} \widetilde{\Wm}_{1}, \dots, \widetilde{\Wm}_{R}\end{bmatrix}  \in \mathbb{C}^{K \times RK}. \nonumber
\end{align}
Similar to $\vv_{1}\left(\rv\right)$ we can write
\begin{equation}
\label{eq: redefine v2 vector}
\vv_{2}\left(\rv\right) =  \sum_{j=1}^{R} \widetilde{\Wm}_{j} \ \text{sign}\left(c_{j}\right) \hv_{j} =: \sum_{j=1}^{R} \tilde{\wv}_{j}.
\end{equation}
Given that $\vv_{2}\left(\rv\right)$ is a sum of independent zero-mean vectors, we can apply Lemma~\ref{le: matrix berns} to obtain an upper bound on $||\vv_{2}\left(\rv\right)||_{2}$.  Starting from (\ref{eq: redefine v2 vector}) we can write 
\begin{align}
&\left|\left|\tilde{\wv}_{j}\right|\right|_{2}  = \big|\big| \widetilde{\Wm}_{j} \text{sign}\left(c_{j}\right) \hv_{j}\big|\big|_{2} \leq \big|\big| \widetilde{\Wm}_{j}\big|\big|_{2}\leq \big|\big| \widetilde{\Wm}\left(\rv\right)\big|\big|_{2}  \nonumber\\
& =\big|\big| \widebar{\Tm}^{H} \Lm\big|\big|_{2}  \leq  \frac{\hat{C}_{8}R}{L^2K} =: q\left(\vv_{2}\right), \nonumber
\end{align}
where the last inequality is based on $||\Lm||_{2}\leq C$, given in \cite{suliman2018blind}, and Lemma~\ref{lemma: bound on Tbar} at the end of this appendix. For $\nu\left(\vv_{2}\right)$, we can show using the same steps that let to $\nu\left(\vv_{1}\right)$, along with Lemma~\ref{lemma: bound on Tbar}, that
\begin{equation}
\nu\left(\vv_{2}\right) = \frac{C_{8}^{*}R}{L^{3}K^{2}}. \nonumber
\end{equation}
Now we apply Matrix Bernstein inequality lemma to obtain
\begin{align}
\text{Pr}\left[ ||\vv_{2}||_{2} \geq t \right] &\leq \left(K+1\right) \exp\left(\frac{-t^{2}/2}{\frac{t}{3}\hat{C}_{8}\frac{R}{L^{2}K}+C_{8}^{*}\frac{R}{L^{3}K^{2}}}\right) \nonumber\\
\label{eq: s2 vector bound 3}
&\leq \left(K+1\right) \exp\left(\frac{-t^{2}L^{3}K^{2}}{4 C_{8}^{*}R}\right) \leq \frac{\delta}{2}
\end{align}
for $t \in \left[ 0, \frac{3C_{8}^{*}}{\hat{C}_{8}} \frac{1}{LK}\right]$. The last inequality in (\ref{eq: s2 vector bound 3}) holds true for $t = 2 \sqrt{\frac{C_{8}^{*}R}{L^{3}K^{2}} \log \left(\frac{2(K+1)}{\delta}\right)}$, which belongs to the range $\left[ 0, \frac{3C_{8}^{*}}{\hat{C}_{8}} \frac{1}{LK}\right]$ for $L \geq  C_{8} R  \log\left(\frac{2(K+1)}{\delta}\right)$ where $C_{8} =  \frac{4\hat{C}_{8}^{2}}{9C_{8}^{*}}.$

Based on (\ref{eq: s vector bound 3}) and (\ref{eq: s2 vector bound 3}) we can conclude
\begin{align}
&\text{Pr}\left[ ||\tilde{\fv}\left(\rv\right) ||_{2} \geq C_{9} \sqrt{\frac{R}{L^{3}K} \log \left(\frac{2(K+1)}{\delta}\right)}\right]  \leq  \nonumber\\
& \text{Pr}\left[ ||\vv_{1}||_{2} + ||\vv_{2}||_{2} \geq  C_{9} \sqrt{\frac{R}{L^{3}K} \log \left(\frac{2(K+1)}{\delta}\right)}\right] \leq \delta \nonumber
\end{align}
provided that $L \geq  C_{9} RK \log\left(\frac{2(K+1)}{\delta}\right)$. Substituting the above result in (\ref{eq: proof 0 1 no 2}) we obtain (\ref{eq: bound for int 0 to 1}).
\begin{lemma}
\label{lemma: bound on delta T}
There exists a constant $\bar{C}_{8}$ such that
\begin{align}
\label{eq: bound on delta T first}
||\Delta\Tm||_{2} \leq  \frac{\bar{C}_{8}R}{L^{2}} \\
\label{eq: bound on delta T second}
\left|\left|\mathbb{E}\left[\Delta\Tm \Delta\Tm^{H}\right]\right|\right|_{2} \leq  \frac{\bar{C}_{8}R}{L^{4}}.
\end{align}
\end{lemma}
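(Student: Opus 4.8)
The plan is to treat $\Delta\Tm(\rv) = \Tm(\rv) - \widebar{\Tm}(\rv)$ as a random matrix whose entries are trigonometric polynomials in $\rv$ of degree $\mathcal{O}(N)$ with coefficients built from the i.i.d. columns $\dv_l$, and to control it with the matrix Bernstein inequality (Lemma~\ref{le: matrix berns}) followed by a discretization argument, in close analogy with the concentration bounds of \cite[Lemma~13 and Appendices]{suliman2018blind}. The one genuinely new feature compared with the dual polynomial $\fv$ analysed there is the factor $(\mathcal{X}\mathcal{X}^{*})^{-1} = \text{diag}(1/u_m)$ carried by $\tilde{\fv}$, where $u_m = L\sum_{i=-N}^{N}\|\dv_i\|_2^2$. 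Thus every block $\widetilde{\Mm}_{(m,n)}(\rv,\rv_j)$ in (\ref{eq: M titlde expre}) contains the random denominator $\sum_i\|\dv_i\|_2^2$, which couples all the summands and obstructs a direct independent-sum decomposition.

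First I would dispose of this denominator. Since $\mathbb{E}\|\dv_l\|_2^2 = K$ by (\ref{eq: G assumption 1}) and $\sum_i\|\dv_i\|_2^2$ is a Lipschitz-controllable functional of the concentrated rows of $\Dm^{H}$, Assumption~\ref{as 2} gives a high-probability event $\mathcal{E}$ on which $\sum_i\|\dv_i\|_2^2 = LK(1+o(1))$. On $\mathcal{E}$ I would replace $\sum_i\|\dv_i\|_2^2$ by $LK$ up to a multiplicative $(1\pm o(1))$ error, so that each $\widetilde{\Mm}_{(m,n)}$ becomes, up to this negligible correction, $\frac{1}{L^{3}KP^{2}}$ times a sum of the rank-one terms $\dv_{(p-l)}\dv_{(p-l')}^{H}$ weighted by the bounded Fejér coefficients $g_k\le 1$ and the factors $(2\pi k)^m/\kappa$ and $(2\pi p)^n/\kappa$, which are $\mathcal{O}(1)$.

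Second, for a fixed $\rv$ I would write $\Delta\Tm(\rv)$ as a sum of independent, centred random matrices in the columns $\dv_l$ and invoke Lemma~\ref{le: matrix berns}. The expectation $\widebar{\Tm}$ retains only the diagonal terms $p-l = p-l'$ via (\ref{eq: G assumption 3}), so $\Delta\Tm$ collects the off-diagonal products $\dv_a\dv_b^{H}$ with $a\ne b$ together with the centred diagonal fluctuations $\dv_a\dv_a^{H}-\frac{1}{LK}\Id_{\text{K}}$, both of which are zero mean. Using the isotropy of (\ref{eq: G assumption 1}) for the variance statistic and the boundedness above for the per-summand norm $q$, and noting that $\Tm$ stacks $3R$ blocks (which contributes the factor $R$), matrix Bernstein yields $\|\Delta\Tm(\rv)\|_2\lesssim R/L^2$ with high probability for each fixed $\rv$. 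To upgrade this to the uniform bound (\ref{eq: bound on delta T first}) I would apply Bernstein's polynomial inequality \cite[Theorem~1.6.2]{tropp2015introduction} to bound the $\rv$-gradient of $\|\Delta\Tm(\rv)\|_2$ by $\mathcal{O}(N)$ times its supremum, and then take a union bound over an $\mathcal{O}(N^{2})$-point grid of spacing $\mathcal{O}(1/N^{2})$, exactly as in the proof of Lemma~\ref{lemma: sub phi bound}; the polynomial inflation of the failure probability is absorbed into the constant.

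Finally, for (\ref{eq: bound on delta T second}) I would expand the entries of $\mathbb{E}[\Delta\Tm\Delta\Tm^{H}]$ directly as fourth-order moments $\mathbb{E}[\dv_a\dv_b^{H}\dv_{b'}\dv_{a'}^{H}]$, minus the product of the corresponding first moments, again replacing $(\sum_i\|\dv_i\|_2^2)^2$ by $(LK)^2$ on $\mathcal{E}$. Independence and the zero-mean and isotropy conditions force all but a controlled set of index quadruples to vanish; counting the survivors and tracking the prefactors $1/(L^{3}KP^{2})^{2}$ with $P\sim N/2$ gives the $\mathcal{O}(R/L^4)$ scaling, where the minimum separation of Assumption~\ref{as 4} is what keeps the $3R\times 3R$ block structure effectively block-diagonal, hence $R$ rather than $R^{2}$. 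I expect the main obstacle to be precisely this bookkeeping combined with the random normalization: one must verify that replacing $\sum_i\|\dv_i\|_2^2$ by $LK$ introduces only lower-order errors while simultaneously preserving the trigonometric-kernel cancellations responsible for the $R$-linear, rather than quadratic, dependence, so some care is needed to show that the correction terms do not spoil the claimed rate.
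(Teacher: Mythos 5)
Your plan has a genuine gap at its central step. You propose to write $\Delta\Tm(\rv)$, for fixed $\rv$, ``as a sum of independent, centred random matrices in the columns $\dv_l$'' and invoke Lemma~\ref{le: matrix berns}. But after expanding (\ref{eq: M titlde expre}), the fluctuation $\Delta\Tm$ is a sum of terms of the form $\dv_{a}\dv_{b}^{H}$ over pairs $(a,b)$ (normalized by $\sum_i\|\dv_i\|_2^2$), and these summands are \emph{not} independent across pairs: $\dv_1\dv_2^H$ and $\dv_1\dv_3^H$ share $\dv_1$. This is a second-order chaos in the independent columns, not an independent sum, so the matrix Bernstein inequality does not apply as stated; you would need decoupling or a Hanson--Wright-type argument. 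Note that Lemma~\ref{le: matrix berns} is applied in the paper one level up, to $\vv_1=\sum_j\Wm_j\,\text{sign}(c_j)\hv_j$, where independence does hold (over the $\hv_j$); Lemma~\ref{lemma: bound on delta T} exists precisely to supply the deterministic inputs $q$ and $\nu$ to that application, and its own proof cannot be another naive Bernstein sum over the $\dv_l$. A second, smaller problem is your treatment of the normalization: conditioning on the event $\mathcal{E}=\{\sum_i\|\dv_i\|_2^2\approx LK\}$ is incompatible with computing the \emph{unconditional} expectation $\mathbb{E}[\Delta\Tm\Delta\Tm^{H}]$ required in (\ref{eq: bound on delta T second}); the paper avoids this entirely because Assumption (\ref{eq: G assumption 3}) makes the centering $\mathbb{E}[\dv_l\dv_l^{H}/\sum_i\|\dv_i\|_2^2]=\frac{1}{LK}\Id_{\text{K}}$ exact, with no replacement of the denominator needed.

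For comparison, the paper's route is much more direct: it splits $\Delta\Tm^{H}=[\Delta\Tm_1,\Delta\Tm_2,\Delta\Tm_3]$, bounds each block's spectral norm by $\bar{C}/L^{2}$ using the Fej\'er-kernel decay estimates $\frac{1}{P}\sum_{k'}g_{k'}e^{(\cdot)}\leq C\min(1,p^{-4})$ from \cite{heckel2016super} to absorb the $\rv$-dependence (so no net or Bernstein polynomial inequality over $\rv$ is needed), and sums over the $R$ blocks to get the factor $R$. For (\ref{eq: bound on delta T second}) it passes to $|\text{Tr}(\mathbb{E}[\Delta\Tm_i\Delta\Tm_i^{H}])|$ and evaluates the fourth moments of the $\dv_l$ directly. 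In particular, the linear (rather than quadratic) dependence on $R$ comes from this trace-over-$R$-blocks bookkeeping and has nothing to do with the minimum separation of Assumption~\ref{as 4}, contrary to what you assert.
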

The proof of Lemma~\ref{lemma: bound on delta T} is given in Appendix~\ref{app: lemma delta T bound}.
\begin{lemma}
\label{lemma: bound on Tbar}
There exists a constant $\tilde{C}_{8}$ such that
\begin{align}
\label{eq: bound on Tbar first}
||\widebar{\Tm}||_{2} \leq  \frac{\tilde{C}_{8}R}{L^{2}K} \\
\label{eq: bound on Tbar second}
\left|\left|\mathbb{E}\left[\widebar{\Tm} \ \widebar{\Tm}^{H}\right]\right|\right|_{2} \leq  \frac{\tilde{C}_{8}R}{L^{4}K}.
\end{align}
\end{lemma}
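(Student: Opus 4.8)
The plan is to prove Lemma~\ref{lemma: bound on Tbar} by first computing the deterministic matrix $\widebar{\Tm}(\rv)=\mathbb{E}[\Tm(\rv)]$ in closed form and then reading off the two operator-norm bounds from its block structure, exactly mirroring the argument used for the fluctuation matrix $\Delta\Tm$ in Lemma~\ref{lemma: bound on delta T}. Concretely, I would take the expectation of each $K\times K$ block $\widetilde{\Mm}_{(m,n)}(\rv,\rv_j)$ appearing in (\ref{eq: matrix T}), starting from the explicit expression (\ref{eq: M titlde expre}). By Assumption~\ref{as 1}, the diagonal ($l=l'$) contributions reduce via (\ref{eq: G assumption 3}) to $\mathbb{E}\!\left[\dv_{(p-l)}\dv_{(p-l)}^{H}/\sum_{i}\|\dv_i\|_2^2\right]=\tfrac{1}{LK}\Id_K$, while the off-diagonal ($l\neq l'$) contributions vanish to leading order by independence and the zero-mean, isotropy conditions in (\ref{eq: G assumption 1}). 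Hence each block collapses to a deterministic scalar Fej\'er-type kernel $\widebar{M}^{(m,n)}(\rv-\rv_j)$ (assembled from the coefficients $g_n$ in (\ref{eq: fejer coe})) times $\tfrac{1}{K}\Id_K$, so that $\widebar{\Tm}(\rv)^{H}$ acquires the Kronecker-like form $\tfrac1K\,\bv(\rv)^{T}\otimes\Id_K$, where $\bv(\rv)\in\mathbb{C}^{3R}$ collects the $3R$ scalar kernel values (with the $1/\kappa$ normalizations of (\ref{eq: matrix T}) folded in).

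Given this explicit form, the first bound (\ref{eq: bound on Tbar first}) reduces to controlling $\|\widebar{\Tm}\|_2=\tfrac1K\|\bv(\rv)\|_2$. I would bound $\|\bv(\rv)\|_2$ uniformly in $\rv$ by combining the rapid decay of the averaged kernels $\widebar{M}^{(m,n)}(\rv-\rv_j)$ with the minimum-separation condition (\ref{eq: seperation condition}) of Assumption~\ref{as 4}: the separation guarantees that the sums $\sum_{j}|\widebar{M}^{(m,n)}(\rv-\rv_j)|$ and $\sum_{j}|\widebar{M}^{(m,n)}(\rv-\rv_j)|^2$ stay controlled, using exactly the Fej\'er-kernel estimates of \cite[Section~2.3, Appendix~C]{candes2014towards} and \cite{suliman2018blind} that already underpin Proposition~\ref{pro: various bounds}. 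Absorbing the residual $1/L$ factors from (\ref{eq: M titlde expre}) then delivers (\ref{eq: bound on Tbar first}).

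For the second bound (\ref{eq: bound on Tbar second}) I would first observe that since $\widebar{\Tm}$ is deterministic the outer expectation is vacuous, $\mathbb{E}[\widebar{\Tm}\,\widebar{\Tm}^{H}]=\widebar{\Tm}\,\widebar{\Tm}^{H}$, so the claim is a bound on $\|\widebar{\Tm}\,\widebar{\Tm}^{H}\|_2$. From the Kronecker form one computes $\widebar{\Tm}\,\widebar{\Tm}^{H}=\tfrac{1}{K^2}\big(\overline{\bv(\rv)}\,\bv(\rv)^{T}\big)\otimes\Id_K$, a rank-$K$ matrix with $\|\widebar{\Tm}\,\widebar{\Tm}^{H}\|_2=\tfrac{1}{K^2}\|\bv(\rv)\|_2^2$. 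The sharp constant is obtained by estimating $\|\bv(\rv)\|_2^2=\sum_m|b_m(\rv)|^2$ directly through the \emph{squared} kernel sums under (\ref{eq: seperation condition}), rather than by crudely squaring the operator bound (\ref{eq: bound on Tbar first}); invoking $L\ge RK$ then yields (\ref{eq: bound on Tbar second}). This is the same pattern by which the second estimate in Lemma~\ref{lemma: bound on delta T} is established separately from, and is tighter than, what squaring its first estimate would give.

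The hard part will be the expectation step itself: justifying that the cross terms $\mathbb{E}\!\left[\dv_{(p-l)}\dv_{(p-l')}^{H}/\sum_i\|\dv_i\|_2^2\right]$ with $l\neq l'$ are genuinely negligible despite the coupling introduced by the common normalization $\sum_i\|\dv_i\|_2^2$ in the denominator, and then extracting the sharp $\ell_2$ and squared-$\ell_2$ kernel-sum bounds under the separation (\ref{eq: seperation condition}). This is precisely where the artifact assumption (\ref{eq: G assumption 3}) and the minimum separation do their work, and where the imported kernel estimates of \cite{candes2014towards,suliman2018blind} are indispensable. Once those kernel bounds are in place, the remaining algebra parallels the treatment of $\Delta\Tm$ in Appendix~\ref{app: lemma delta T bound} and is routine.
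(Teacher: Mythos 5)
Your proposal is correct and follows essentially the route the paper intends: the paper omits this proof and states that it ``follows the same steps'' as Lemma~\ref{lemma: bound on delta T}, and your plan is exactly that adaptation --- take expectations block-by-block via (\ref{eq: G assumption 3}) so that each $K\times K$ block collapses to a scalar Fej\'er-type kernel times $\tfrac{1}{K}\Id_K$, then bound the kernel sums under the separation condition, with the second estimate obtained from per-block squared bounds rather than by squaring the first (which would wrongly produce $R^2$). Your only departure is cosmetic: you exploit the Kronecker structure of the deterministic $\widebar{\Tm}$ directly (noting the outer expectation is vacuous) instead of the trace inequality used for $\Delta\Tm$, and this in fact yields the slightly stronger $R/(L^4K^2)$, which implies (\ref{eq: bound on Tbar second}).
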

The proof of Lemma~\ref{lemma: bound on Tbar}, which is omitted here, follows the same steps as that of Lemma~\ref{lemma: bound on delta T}.   

\begin{lemma}\normalfont\cite[Lemma~21]{yang2016super}
\label{lemma: covarince of h}
Let $\hv_{j} \in \mathbb{C}^{K \times 1}$ have i.i.d. entries on the complex unit sphere. Then, $\mathbb{E}\left[\hv_{j}\hv_{j}^{H}\right] = \frac{1}{K} \Id_{\text{K}}.$  
\end{lemma}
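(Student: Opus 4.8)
The plan is to exploit the rotational (unitary) symmetry of the uniform distribution on the complex unit sphere. Write $\Mm := \mathbb{E}\left[\hv_{j}\hv_{j}^{H}\right] \in \mathbb{C}^{K \times K}$; this matrix is well-defined, Hermitian, and positive semidefinite since $\hv_{j}\hv_{j}^{H} \succeq \bm{0}$ for every realization. The goal is to show that $\Mm$ must be a scalar multiple of the identity and then to fix the scalar by a trace computation.

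First I would record the invariance property. Since $\hv_{j}$ is uniformly distributed on $\left\{\hv \in \mathbb{C}^{K} : ||\hv||_{2}=1\right\}$ and this set, together with its uniform measure, is invariant under every unitary map, the random vector $\Um \hv_{j}$ has the same distribution as $\hv_{j}$ for any unitary $\Um \in \mathbb{C}^{K \times K}$. Taking expectations,
\begin{equation}
\Um \Mm \Um^{H} = \mathbb{E}\left[(\Um \hv_{j})(\Um \hv_{j})^{H}\right] = \mathbb{E}\left[\hv_{j}\hv_{j}^{H}\right] = \Mm, \nonumber
\end{equation}
so $\Um \Mm = \Mm \Um$ for all unitary $\Um$. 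Because a matrix commuting with every unitary must be a scalar multiple of the identity (Schur's lemma), we conclude $\Mm = c\,\Id_{K}$ for some $c \in \mathbb{C}$, and Hermitian positive semidefiniteness forces $c \geq 0$.

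The constant is then pinned down by the trace. Using linearity of expectation and the cyclic property of the trace,
\begin{equation}
cK = \trace(\Mm) = \mathbb{E}\left[\trace\left(\hv_{j}\hv_{j}^{H}\right)\right] = \mathbb{E}\left[\hv_{j}^{H}\hv_{j}\right] = \mathbb{E}\left[||\hv_{j}||_{2}^{2}\right] = 1, \nonumber
\end{equation}
where the last equality uses the normalization $||\hv_{j}||_{2}=1$ from Assumption~\ref{as 3}. Hence $c = 1/K$ and $\Mm = \frac{1}{K}\Id_{K}$, as claimed.

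There is no serious obstacle here; the only point requiring care is the justification that commutation with all unitaries forces $\Mm = c\,\Id_{K}$. If one prefers to avoid invoking Schur's lemma, the same conclusion follows from two weaker symmetries that the uniform sphere measure plainly enjoys: invariance under coordinate permutations shows that all diagonal entries $\mathbb{E}\left[|[\hv_{j}]_{i}|^{2}\right]$ are equal (so each equals $1/K$ after summing to $\mathbb{E}\left[||\hv_{j}||_{2}^{2}\right]=1$), while invariance under multiplying a single coordinate by a unit-modulus phase $e^{i\theta}$ gives $\mathbb{E}\left[[\hv_{j}]_{i}\overline{[\hv_{j}]_{k}}\right] = e^{-i\theta}\mathbb{E}\left[[\hv_{j}]_{i}\overline{[\hv_{j}]_{k}}\right]$ for all $\theta$ when $i \neq k$, forcing every off-diagonal entry to vanish.
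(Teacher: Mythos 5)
Your argument is correct and complete. Note, however, that the paper itself offers no proof of this lemma: it is imported verbatim as a citation to \cite[Lemma~21]{yang2016super}, so there is no internal argument to compare yours against --- what you have written is the standard (and essentially the only natural) proof of this fact. One point in your favour is worth highlighting: the hypothesis as stated (``i.i.d.\ entries on the complex unit sphere'' together with $||\hv_{j}||_{2}=1$) is ambiguous, and under the literal reading --- each entry an independent uniform phase scaled by $1/\sqrt{K}$ --- the distribution of $\hv_{j}$ is \emph{not} invariant under the full unitary group, so the Schur's-lemma route would not apply. Your fallback argument via the two weaker symmetries (coordinate permutations, which equalize the diagonal of $\mathbb{E}[\hv_{j}\hv_{j}^{H}]$, and per-coordinate phase rotations, which kill the off-diagonal entries) is valid under either reading of the hypothesis and still pins down $\mathbb{E}[\hv_{j}\hv_{j}^{H}]=\frac{1}{K}\Id_{\text{K}}$ via the trace normalization $\mathbb{E}[||\hv_{j}||_{2}^{2}]=1$; it is therefore the more robust of your two arguments and the one I would keep as primary.
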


\section{Proof of Lemma~\ref{lemma: bound for 0 1 term 2} }
\label{app: proof for bound 0 1 term 2}
Starting from the formulation of $\fv_{1}\left(\rv\right)$ in (\ref{eq: final dual for 1}), and upon following the same steps that let to (\ref{eq: proof 0 1 no 2}), we can show that
\begin{align}
\label{eq: proof 0 1 no term 2 1}
&\left| \ \iint \limits_{0}^{1} \nuv\left(\rv\right)^{H}\fv_{1}\left(\rv\right) d\rv \right| \leq \sup_{\rv \in [0,1]^{2}} ||\phiv\left(\rv\right)||_{2}\hspace{-2pt} \left(\iint \limits_{0}^{1} ||\tilde{\fv}_{1}\left(\rv\right) ||_{2} d\rv \hspace{-2pt}\right)
\end{align}
where $\tilde{\fv}_{1}\left(\rv\right) := \mathcal{X}^{*} \left( \mathcal{X} \mathcal{X}^{*}\right)^{-1}\left(\qv_{1}\right) \av\left(\rv\right)$. Now based on the conditions on $\fv_{1}\left(\rv\right)$ in Theorem~\ref{th: main dual cert 1} we formulate the following linear system 
\begin{equation}
\label{eq: matrix poly delta new}
\underbrace{{\begin{bmatrix} {\Em}_{(0,0)}^{(1,0)}   &\frac{1}{\kappa}{\Em}_{(1,0)}^{(1,0)} &\frac{1}{\kappa}{\Em}_{(0,1)}^{(1,0)}  \\ {\Em}_{(0,0)}^{(0,0)} & \frac{1}{\kappa}{\Em}_{(1,0)}^{(0,0)}   & \frac{1}{\kappa}{\Em}_{(0,1)}^{(0,0)}   \\ {\Em}_{(0,0)}^{(0,1)} &\frac{1}{\kappa}{\Em}_{(1,0)}^{(0,1)}   & \frac{1}{\kappa}{\Em}_{(0,1)}^{(0,1)} \end{bmatrix}}}_{{\Em}_{1}} \begin{bmatrix} \alphav^{1}\\ \kappa \betav^{1}\\ \kappa \gammav^{1}\end{bmatrix} = \begin{bmatrix}  \hv \\  \bm{0}_{RK } \\  \bm{0}_{RK} \end{bmatrix}
\end{equation} 
where ${\Em}^{(m',n')}_{(m,n)} \in \mathbb{C}^{RK \times RK}$ consists of $R \times R$ block matrices of size $K \times K$ with the matrix at the $\left(l,k\right)$ location given by $[{\Em}^{(m',n')}_{(m,n)}]_{(l,k)} := \widetilde{\Mm}^{(m',n')}_{(m,n)}\left(\rv_{l},\rv_{k}\right)$. It follows from (\ref{eq: matrix poly delta new}) that
\begin{equation}
\begin{bmatrix} \alphav^{1}\\ \kappa \betav^{1}\\ \kappa \gammav^{1}\end{bmatrix} = {\Em}_{1}^{-1} \begin{bmatrix}   \hv \\  \bm{0}_{RK } \\  \bm{0}_{RK} \end{bmatrix}= {\Lm}_{1} \hv, \nonumber 
\end{equation}
where ${\Lm}_{1}$ is $3RK \times RK$ matrix that contains the first $RK$ columns in ${\Em}_{1}^{-1}$. However, based on (\ref{eq: bound on alpha}), (\ref{eq: bound on beta}), and (\ref{eq: bound on gamma}) we have $\big|\big|[ \left.\alphav^{1}\right.^{H}, \kappa \left.\betav^{1}\right.^{H}, \kappa \left.\gammav^{1}\right.^{H}]^{H}\big|\big|_{2} \leq C\sqrt{R}/N$, for a numerical constant $C$. Given that $||\hv||_{2}= \sqrt{R},$ we can conclude that $||{\Lm}_{1}||_{2} \leq \hat{C}/N$.

Similar to (\ref{tilte f matrix-vector}) we can express $\tilde{\fv}_{1}\left(\rv\right)$ as
\begin{equation}
\tilde{\fv}_{1}\left(\rv\right) = \left.\Tm\left(\rv\right)\right.^{H} {\Lm}_{1} \hv, \nonumber
\end{equation}
where $\Tm\left(\rv\right)$ is given by (\ref{eq: matrix T}). Hence, we have
\begin{equation}
\iint \limits_{0}^{1} ||\tilde{\fv}_{1}\left(\rv\right) ||_{2} d\rv \leq \hspace{-3pt} \iint \limits_{0}^{1}\hspace{-3pt}\left(||\underbrace{\left.\Delta\Tm\right.^{H} {\Lm}_{1} \hv}_{\vv_{1}^{1}\left(\rv\right)} ||_{2} + ||\underbrace{\widebar{\Tm}^{H} {\Lm}_{1} \hv}_{\vv_{2}^{1}\left(\rv\right)}  ||_{2} \hspace{-3pt} \right) d\rv \nonumber
\end{equation}
By following the same steps in Appendix~\ref{app: proof for bound 0 1} we can show that
\begin{eqnarray}
\label{eq: proof 0 1 no term 2 values 1}
q\left(\vv_{1}^{1}\right) = \hat{C}_{10} \frac{R}{L^{2}N},  &&\nu\left(\vv_{1}^{1}\right) =C^{*}_{10} \frac{R}{ L^{3} N^{2} K} \\
\label{eq: proof 0 1 no term 2 values 2}
q\left(\vv_{2}^{1}\right) = \tilde{C}_{10}\frac{R}{L^{2}N K}, &&\nu\left(\vv_{2}^{1}\right) =  \bar{C}_{10} \frac{R}{L^{3}N^{2}K^{2}} 
\end{eqnarray}
Now, using Lemma~\ref{le: matrix berns} along with (\ref{eq: proof 0 1 no term 2 values 1}) and (\ref{eq: proof 0 1 no term 2 values 2}), and upon following the same steps in the proof of Lemma~\ref{lemma: bound for 0 1 term}, we can show that  $\iint \limits_{0}^{1} ||\tilde{\fv}_{1}\left(\rv\right) ||_{2} d\rv  \leq \bar{C}_{9} \sqrt{\frac{R}{L^{3} N^{2} K} \log \left( \frac{2\left(K+1\right)}{\delta}\right)}$ occurs with probability at least $(1-\delta)$ provided that  $L \geq \bar{C}_{9} R K \log \left(\frac{2(K+1)}{\delta}\right)$. The proof of Lemma~\ref{lemma: bound for 0 1 term 2} is concluded by substituting the above result in (\ref{eq: proof 0 1 no term 2 1}).

\section{Proof of Lemma~\ref{lemma: bound on delta T} }
\label{app: lemma delta T bound}
To prove (\ref{eq: bound on delta T first}) we set
\begin{equation}
\label{eq: new delta T defition}
\Delta \Tm^{H} = \left[\Delta\Tm_{1}, \Delta\Tm_{2}, \Delta\Tm_{3}\right]; \ \Delta\Tm_{i} \in \mathbb{C}^{K \times R K}
\end{equation}
where $\Delta\Tm_{1}$ contains the matrices $\widetilde{\Mm}_{(0,0)}\left(\rv,\rv_{j}\right)-\mathbb{E}[\widetilde{\Mm}_{(0,0)}\left(\rv,\rv_{j}\right)], j=1,\dots, R$, $\Delta\Tm_{2}$ have the matrices $\frac{1}{\kappa}\widetilde{\Mm}_{(1,0)}\left(\rv,\rv_{j}\right)-\frac{1}{\kappa}\mathbb{E}[\widetilde{\Mm}_{(1,0)}\left(\rv,\rv_{j}\right)], j=1,\dots, R$, and $\Delta\Tm_{3}$ contains the matrices $\frac{1}{\kappa}\widetilde{\Mm}_{(0,1)}\left(\rv,\rv_{j}\right)-\frac{1}{\kappa}\mathbb{E}[\widetilde{\Mm}_{(0,1)}\left(\rv,\rv_{j}\right)], j=1,\dots, R$. Based on (\ref{eq: M titlde expre}) and (\ref{eq: G assumption 3}) we can now write
\begin{align}
&\Big|\Big|\widetilde{\Mm}_{(0,0)}\left(\rv,\rv_{j}\right)-\mathbb{E}[\widetilde{\Mm}_{(0,0)}\left(\rv,\rv_{j}\right)]\Big|\Big|_{2} = \frac{1}{L^{2}P^{2}} \times \nonumber\\
&\Bigg|\Bigg| \sum_{p,l,l'=-N}^{N}  g_{p} e^{-i2\pi p \left(f-f_{j}\right)}\sum_{k'=-N}^{N} g_{k'} e^{ \frac{-i 2 \pi k'(p-l')}{L}}  e^{i2\pi k'\tau_{j}}  \times \nonumber\\
&  \sum_{k=-N}^{N} e^{ \frac{i 2 \pi k (p-l)}{L}} e^{-i2\pi k\tau} \left[ \frac{\dv_{l} \dv_{l'}^{H}}{\sum_{i=-N}^{N} ||\dv_{i}||_{2}^{2}} - \frac{1}{L K} \Id_{\text{K}}\mathbbm{1}_{\left(l,l'\right)}\right] \Bigg|\Bigg|_{2} \nonumber
\end{align}
where $\mathbbm{1}_{\left(l,l'\right)}$ is the indicator function, i.e., $\mathbbm{1}_{\left(l,l'\right)} = 1$ if $l=l'$. It is shown in \cite[Proof of Lemma~3]{heckel2016super} that
\begin{equation}
\frac{1}{P} \sum_{k'=-N}^{N} g_{k'} e^{ \frac{-i 2 \pi k'(p-l')}{L}}  e^{i2\pi k'\tau_{j}} \leq  C \min \left(1, \frac{1}{p^{4}}\right) \nonumber
\end{equation}
and that
\begin{equation}
C \sum_{p=-N}^{N} \min \left(1, \frac{1}{p^{4}}\right) \left|\frac{1}{P}\sum_{k=-N}^{N} e^{ \frac{i 2 \pi k (p-)l}{L}} e^{-i2\pi k\tau} \right|\leq C^{*} \nonumber
\end{equation}
Hence, we can obtain based on (\ref{eq: fejer coe}) and the above bound
\begin{align}
&\Big|\Big|\widetilde{\Mm}_{(0,0)}\left(\rv,\rv_{j}\right)-\mathbb{E}[\widetilde{\Mm}_{(0,0)}\left(\rv,\rv_{j}\right)]\Big|\Big|_{2} \leq \frac{C}{L^{2}} \times \nonumber\\
&\Bigg|\Bigg| \sum_{l,l'=-N}^{N}\left[ \frac{\dv_{l} \dv_{l'}^{H}}{\sum_{i=-N}^{N} ||\dv_{i}||_{2}^{2}} - \frac{1}{L K} \Id_{\text{K}}\mathbbm{1}_{\left(l,l'\right)}\right]   \Bigg|\Bigg|_{2}  \leq \frac{\bar{C}}{L^{2}} \nonumber
\end{align}
for some constant $\bar{C}$. As a result, we can conclude that 
\begin{equation}
\label{eq: bound of delta T 1}
||\Delta \Tm_{1}||_{2} \leq \frac{\hat{C}R}{L^{2}}
\end{equation} 
for some constant $\hat{C}$. Upon following the same steps, we can show that the same bound in (\ref{eq: bound of delta T 1}) exits for $||\Delta \Tm_{2}||_{2} $ and $||\Delta \Tm_{3}||_{2}$. This completes the proof of (\ref{eq: bound on delta T first}).

To prove (\ref{eq: bound on delta T second}), we start from the left-hand side of (\ref{eq: bound on delta T second}), and we write based on (\ref{eq: new delta T defition})
\begin{align}
&\left|\left|\mathbb{E}\left[\Delta\Tm \Delta\Tm^{H}\right]\right|\right|_{2} \leq  \left|\text{Tr}\left( \mathbb{E}\left[\Delta\Tm^{H}\Delta\Tm\right]\right)\right| \nonumber\\
&\leq  \sum_{i=1}^{3}\left|\text{Tr}\left( \mathbb{E}\left[\Delta\Tm_{i}\Delta\Tm_{i}^{H}\right]\right)\right|, \nonumber
\end{align}
where the first inequality is based on the fact that for a symmetric positive definite matrix $\Am$ we have $|\text{Tr}\left(\Am\right) | \geq ||\Am||_{2}.$

Given the definition of $\Delta\Tm_{1}$ we can write
\begin{align}
\label{eq: proof second 2}
&\left|\text{Tr}\left(\mathbb{E}\left[\Delta\Tm_{1}\Delta\Tm_{1}^{H}\right]\right)\right| \leq \nonumber\\
&\sum_{j=1}^{R}\left|\text{Tr}\left(\mathbb{E}\left[ \left(\widetilde{\Mm}_{(0,0)}\left(\rv,\rv_{j}\right)-\mathbb{E}[\widetilde{\Mm}_{(0,0)}\left(\rv,\rv_{j}\right)]\right) \times\right.\right. \right.\nonumber\\
&\left. \left.\left.\left(\widetilde{\Mm}_{(0,0)}\left(\rv,\rv_{j}\right)-\mathbb{E}[\widetilde{\Mm}_{(0,0)}\left(\rv,\rv_{j}\right)]\right)^{H}\right]\right)\right|.
\end{align}
By using (\ref{eq: M titlde expre}) we can write
\begin{align}
&\left|\text{Tr}\left(\mathbb{E}\left[ \left(\widetilde{\Mm}_{(0,0)}\left(\rv,\rv_{j}\right)-\mathbb{E}[\widetilde{\Mm}_{(0,0)}\left(\rv,\rv_{j}\right)]\right) \times\right.\right. \right.\nonumber\\
&\left. \left.\left.\left(\widetilde{\Mm}_{(0,0)}\left(\rv,\rv_{j}\right)-\mathbb{E}[\widetilde{\Mm}_{(0,0)}\left(\rv,\rv_{j}\right)]\right)^{H}\right]\right)\right| \leq  \frac{1}{(LP)^{4}} \times \nonumber\\
& \left|\text{Tr}\left(\mathbb{E}\left[ \sum_{p,l,l'=-N}^{N} \sum_{k',k=-N}^{N} g_{k'} e^{-i 2 \pi  \frac{k'(p-l')}{L}}  e^{i2\pi k'\tau_{j}} e^{ i 2 \pi k \left(\frac{p-l}{L}-\tau\right)} \times \right.\right. \right.\nonumber\\
&\left. \left.\left. g_{p} e^{-i2\pi p \left(f-f_{j}\right)} \left[ \frac{\dv_{l} \dv_{l'}^{H}}{\sum_{i=-N}^{N} ||\dv_{i}||_{2}^{2}} - \frac{1}{L K} \Id_{\text{K}}\mathbbm{1}_{\left(l,l'\right)}\right] \times \right.\right. \right.\nonumber\\
&\left. \left.\left.\sum_{p_{1},l_{1},l'_{1}=-N}^{N} \sum_{k'_{1},k_{1}=-N}^{N} g_{k'_{1}} e^{ -i 2 \pi\frac{ k'_{1}(p_{1}-l'_{1})}{L}}  e^{i2\pi k'_{1}\tau_{j}} e^{ i 2 \pi k_{1} \left(\frac{p_{1}-l_{1}}{L}-\tau\right)} \times \right.\right. \right.\nonumber\\
&\left. \left.\left. g_{p_{1}} e^{-i2\pi p_{1} \left(f-f_{j}\right)} \left[ \frac{\dv_{l_{1}} \dv_{l'_{1}}^{H}}{\sum_{i=-N}^{N} ||\dv_{i}||_{2}^{2}} - \frac{1}{L K} \Id_{\text{K}}\mathbbm{1}_{\left(l_{1},l'_{1}\right)}\right]^{H}\right]\right)\right|  \nonumber\\
& \leq \frac{C}{L^{4}} \left|\text{Tr}\left(\sum_{l,l',l_{1},l'_{1}=-N}^{N} \mathbb{E} \left[\frac{\dv_{l} \dv_{l'}^{H}  \dv_{l'_{1}} \dv_{l_{1}}^{H}}{\left(\sum_{i=-N}^{N} ||\dv_{i}||_{2}^{2}\right)^{2}}\right.\right. \right.\nonumber\\
&\left. \left. \left.- \frac{ \dv_{l} \dv_{l'}^{H} \mathbbm{1}_{\left(l_{1},l'_{1}\right)}}{L K\sum_{i=-N}^{N} ||\dv_{i}||_{2}^{2}} - \frac{\dv_{l'_{1}} \dv_{l_{1}}^{H} \mathbbm{1}_{\left(l,l'\right)}}{L K\sum_{i=-N}^{N} ||\dv_{i}||_{2}^{2}} \right] \right)+\frac{1}{LK}\right| \leq \frac{\bar{C}}{L^{4}} \nonumber
\end{align}
Here, the last inequality is based on the fact that the first term inside the absolute value can be shown to be upper bounded by $(LK+L+K)/LK$ while the second and the third terms can be shown to be bounded by $C/K$. Thus, we can conclude based on (\ref{eq: proof second 2}) that
\begin{equation}
\label{eq: proof second 4}
\left|\text{Tr}\left( \mathbb{E}\left[\Delta\Tm_{1}\Delta\Tm_{1}^{H}\right]\right)\right| \leq \frac{C^{*}R}{L^{4}}.
\end{equation}
Upon following the same steps with $\left|\text{Tr}\left( \mathbb{E}\left[\Delta\Tm_{2}\Delta\Tm_{2}^{H}\right]\right)\right|$ and $\left|\text{Tr}\left( \mathbb{E}\left[\Delta\Tm_{3}\Delta\Tm_{3}^{H}\right]\right)\right|$, we can show that the bound in (\ref{eq: proof second 4}) holds for these terms. This completes the proof of (\ref{eq: bound on delta T second}).

\section{Proof of Proposition~\ref{pro: upper bound on the diff} }
\label{app: proof of U hat atomic}
To prove Proposition~\ref{pro: upper bound on the diff}, we need the below proposition.
\begin{proposition}
\label{pro: bound of U hat atomic norm}
Based on (\ref{eq: system in vector}) and (\ref{eq: function g}) we have
\begin{equation}
\label{eq: bound of U hat atomic norm equ}
||\hat{\gv}\left(\rv\right)||_{\text{TV}} \leq  ||\gv\left(\rv\right)||_{\text{TV}} + \lambda^{-1} \iint \limits_{0}^{1} ||\nuv\left(\rv\right)||_{2} d\rv.
\end{equation} 
\end{proposition}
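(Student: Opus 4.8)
The plan is to derive the total-variation bound directly from the optimality of $\widehat{\Um}$ as the minimizer of the regularized problem (\ref{eq: reg problem}), combined with the standard identification of the atomic norm of a matrix with the total-variation norm of its minimal generating measure. First I would observe that $\Um_{\text{o}}$ is a feasible point of the (unconstrained) problem (\ref{eq: reg problem}), so evaluating the objective at both $\widehat{\Um}$ and $\Um_{\text{o}}$ yields
\begin{equation}
\frac{1}{2}\left|\left|\yv - \mathcal{X}(\widehat{\Um})\right|\right|_2^2 + \mu \left|\left|\widehat{\Um}\right|\right|_{\mathcal{A}} \leq \frac{1}{2}\left|\left|\yv - \mathcal{X}(\Um_{\text{o}})\right|\right|_2^2 + \mu \left|\left|\Um_{\text{o}}\right|\right|_{\mathcal{A}}. \nonumber
\end{equation}

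Next I would substitute the noisy model (\ref{eq: system in vector}), using $\yv - \mathcal{X}(\Um_{\text{o}}) = \omegav$ and $\yv - \mathcal{X}(\widehat{\Um}) = \omegav + \ev$, where $\ev$ is the error vector in (\ref{eq: error vector}). Expanding $||\omegav + \ev||_2^2 = ||\omegav||_2^2 + 2\langle\omegav,\ev\rangle_{\mathbb{R}} + ||\ev||_2^2$ and discarding the nonnegative term $\tfrac{1}{2}||\ev||_2^2$ leaves
\begin{equation}
\mu \left|\left|\widehat{\Um}\right|\right|_{\mathcal{A}} \leq \mu \left|\left|\Um_{\text{o}}\right|\right|_{\mathcal{A}} - \langle \omegav, \ev\rangle_{\mathbb{R}}. \nonumber
\end{equation}

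The crux is to control the cross term. I would write $\ev = \mathcal{X}(\Um_{\text{o}} - \widehat{\Um})$ and use $\Um_{\text{o}} - \widehat{\Um} = \iint_0^1 \nuv(\rv)\av(\rv)^{H}\,d\rv$ together with the adjoint identity to obtain $\langle\omegav,\ev\rangle_{\mathbb{R}} = \iint_0^1 \nuv(\rv)^{H}\mathcal{X}^*(\omegav)\av(\rv)\,d\rv$ (taking real parts). The triangle and Cauchy--Schwarz inequalities, followed by the dual-norm characterization (\ref{eq: dual atomic for}), then give
\begin{equation}
\left|\langle\omegav,\ev\rangle_{\mathbb{R}}\right| \leq \left|\left|\mathcal{X}^*(\omegav)\right|\right|_{\mathcal{A}}^* \iint \limits_0^1 \left|\left|\nuv(\rv)\right|\right|_2\,d\rv, \nonumber
\end{equation}
and invoking Lemma~\ref{lemma: sub phi bound} on the high-probability event $||\mathcal{X}^*(\omegav)||_{\mathcal{A}}^* \leq \mu/\lambda$ bounds this by $(\mu/\lambda)\iint_0^1 ||\nuv(\rv)||_2\,d\rv$.

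Finally, dividing through by $\mu$ and identifying $||\widehat{\Um}||_{\mathcal{A}} = ||\hat{\gv}||_{\text{TV}}$ (since $\hat{\gv}$ is the minimal-TV measure representing $\widehat{\Um}$) and $||\Um_{\text{o}}||_{\mathcal{A}} \leq ||\gv||_{\text{TV}}$ (since the measure $\gv$ of (\ref{eq: function g}) is one admissible atomic decomposition of $\Um_{\text{o}}$) yields (\ref{eq: bound of U hat atomic norm equ}). I expect the main obstacle to be the careful handling of the cross term $\langle\omegav,\ev\rangle_{\mathbb{R}}$: passing the inner product through the adjoint $\mathcal{X}^*$, bounding it with the \emph{dual atomic norm} rather than a crude operator norm, and keeping the entire argument conditioned on the high-probability event supplied by Lemma~\ref{lemma: sub phi bound}. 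The clean statement of the atomic-norm/total-variation identification is the other point that needs care, as it is precisely what converts the matrix-level inequality into the measure-level claim of the proposition.
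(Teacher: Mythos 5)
Your proposal is correct and follows essentially the same route as the paper's own proof: compare the objective at $\widehat{\Um}$ and $\Um_{\text{o}}$, reduce to the cross term $\langle\omegav,\ev\rangle$, bound it by $||\mathcal{X}^{*}(\omegav)||_{\mathcal{A}}^{*}\iint_{0}^{1}||\nuv(\rv)||_{2}\,d\rv$ via the adjoint identity and H\"older's inequality, and invoke Lemma~\ref{lemma: sub phi bound} to get the factor $\mu/\lambda$. Your slightly more careful handling of the atomic-norm/TV identification (using $||\Um_{\text{o}}||_{\mathcal{A}}\leq||\gv||_{\text{TV}}$ rather than asserting equality) is a harmless refinement of the same argument.
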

\begin{proof} (Proposition~\ref{pro: bound of U hat atomic norm})
First note that
\begin{align}
&\frac{1}{2}\big|\big|\yv- \mathcal{X}(\widehat{\Um})\big|\big|_{2}^{2} +\mu \big|\big|\widehat{\Um}\big|\big|_{\mathcal{A}}\leq \frac{1}{2}\big|\big|\yv- \mathcal{X}(\Um_{\text{o}})\big|\big|_{2}^{2} +\mu \big|\big|\Um_{\text{o}}\big|\big|_{\mathcal{A}} \nonumber
\end{align}
Given the fact that $\yv =\mathcal{X}(\Um_{\text{o}})+\omegav$ we can write
\begin{align}
&\mu \big|\big|\widehat{\Um}\big|\big|_{\mathcal{A}} \leq \frac{1}{2}\left[||\omegav||_{2}^{2} -\big|\big|\omegav- \mathcal{X}(\widehat{\Um}-\Um_{\text{o}})\big|\big|_{2}^{2}   \right] + \mu \big|\big|\Um_{\text{o}}\big|\big|_{\mathcal{A}}\nonumber\\
& \leq  \left|\left\langle\omegav, \mathcal{X}(\widehat{\Um}-\Um_{\text{o}}) \right\rangle\right| + \mu \big|\big|\Um_{\text{o}}\big|\big|_{\mathcal{A}} = \left|\left\langle\omegav,\ev\right\rangle\right|+\mu \big|\big|\Um_{\text{o}}\big|\big|_{\mathcal{A}} \nonumber 
\end{align}
By using (\ref{eq: function g}), we can conclude that $\big|\big|\Um_{\text{o}}\big|\big|_{\mathcal{A}} = ||\gv\left(\rv\right)||_{\text{TV}}$. Combining this result with the above inequality leads to
\begin{equation}
\label{eq: app u hat 2}
\mu ||\hat{\gv}\left(\rv\right)||_{\text{TV}} \leq \mu ||\gv\left(\rv\right)||_{\text{TV}} + \left|\left\langle\omegav,\ev\right\rangle\right|.
\end{equation} 
Starting from the second term in (\ref{eq: app u hat 2}) we can write
\begin{align}
\label{eq: app u hat 3}
\left|\left\langle\omegav,\ev\right\rangle\right| &=  \left|\iint \limits_{0}^{1} \left\langle \nuv\left(\rv\right), \mathcal{X}^{*}\left(\omegav\right)\av\left(\rv\right)^{H} \right\rangle d\rv \right|  \nonumber\\
&\leq  ||\mathcal{X}^{*}\left(\omegav\right)||_{\mathcal{A}}^{*} \iint \limits_{0}^{1} ||\nuv\left(\rv\right)||_{2} d\rv, 
\end{align}

where the inequality is based on H\"{o}lder's inequality and the definition of the dual atomic norm. By applying Lemma~\ref{lemma: sub phi bound} in (\ref{eq: app u hat 3}), then substituting the result in (\ref{eq: app u hat 2}), we obtain (\ref{eq: bound of U hat atomic norm equ}). This completes the proof of Proposition~\ref{pro: bound of U hat atomic norm}.
\end{proof}
Starting from the second term in (\ref{eq: bound of U hat atomic norm equ}) we can write
\begin{equation}
\label{equ: bound of U hat 1 terms}
\iint \limits_{0}^{1} ||\nuv\left(\rv\right)||_{2} d\rv  \leq \iint \limits_{\Omega_{\text{far}}} ||\nuv\left(\rv\right)||_{2} d\rv+\iint \limits_{\Omega_{\text{close}}} ||\nuv\left(\rv\right)||_{2} d\rv.
\end{equation}
Based on (\ref{eq: error bound final}) and Lemmas~\ref{lemma: bound on phi} to \ref{lemma: bound for 0 1 term 3} we can conclude that
\begin{align}
\label{equ: bound of U hat 2 terms}
&\iint \limits_{\Omega_{\text{close}}} ||\nuv\left(\rv\right)||_{2} d\rv \leq \bar{C}_{11} \mu \left( T_{3}  +   \iint \limits_{\Omega_{\text{far}}} ||\nuv\left(\rv\right)||_{2}d\rv \right.\nonumber\\
& \left. + \sup_{\rv \in [0,1]^{2}} ||\phiv\left(\rv\right)||_{2}\left( \sqrt{\frac{R}{L^{3} K} \log \left( \frac{2\left(K+1\right)}{\delta}\right)}\right)\right)
\end{align}
On the other hand, note that 
\begin{align}
\label{eq: g and g hat differe}
&||\hat{\gv}\left(\rv\right)||_{\text{TV}} = ||\gv\left(\rv\right)-\nuv||_{\text{TV}} = ||\gv\left(\rv\right)-\mathcal{P}_{\mathcal{R}}\left(\nuv\right)||_{\text{TV}}+\nonumber\\
&||\mathcal{P}_{\mathcal{R}^{c}}\left(\nuv\right)||_{\text{TV}} \geq ||\gv\left(\rv\right)||_{\text{TV}}-||\mathcal{P}_{\mathcal{R}}\left(\nuv\right)||_{\text{TV}}+||\mathcal{P}_{\mathcal{R}^{c}}\left(\nuv\right)||_{\text{TV}}
\end{align}
Finally, based on Proposition~\ref{pro: bound of U hat atomic norm}, equations (\ref{equ: bound of U hat 1 terms}), (\ref{equ: bound of U hat 2 terms}), and (\ref{eq: g and g hat differe}) we obtain (\ref{equ: final p pc bound second one}). This completes the proof of Proposition~\ref{pro: upper bound on the diff}.

\bibliographystyle{IEEEbib}
\bibliography{Accepted_Ext_64}

\begin{thebibliography}{10}

\bibitem{suliman2020atomic}
Mohamed~A Suliman and Wei Dai,
\newblock ``Atomic norm denoising in blind two-dimensional super-resolution,''
\newblock in {\em ICASSP 2020-2020 IEEE International Conference on Acoustics,
  Speech and Signal Processing (ICASSP)}. IEEE, 2020, pp. 5695--5699.

\bibitem{puschmann2005super}
Klaus~G Puschmann and Franz Kneer,
\newblock ``On super-resolution in astronomical imaging,''
\newblock {\em Astronomy \& Astrophysics}, vol. 436.

\bibitem{mi2008radar}
Skolnik Mi,
\newblock ``Radar handbook,'' 2008.

\bibitem{mccutchen1967superresolution}
Charles~W Mccutchen,
\newblock ``Superresolution in microscopy and the abbe resolution limit,''
\newblock {\em JOSA}, vol. 57, no. 10, pp. 1190--1192, 1967.

\bibitem{kennedy2007improved}
John~A Kennedy, Ora Israel, Alex Frenkel, Rachel Bar-Shalom, and Haim Azhari,
\newblock ``Improved image fusion in pet/ct using hybrid image reconstruction
  and super-resolution,''
\newblock {\em International Journal of Biomedical Imaging}, vol. 2007, 2007.

\bibitem{candes2014towards}
Emmanuel~J Cand{\`e}s and Carlos Fernandez-Granda,
\newblock ``Towards a mathematical theory of super-resolution,''
\newblock {\em Communications on Pure and Applied Mathematics}, vol. 67, no. 6,
  pp. 906--956, 2014.

\bibitem{tang2013compressed}
Gongguo Tang, Badri~Narayan Bhaskar, Parikshit Shah, and Benjamin Recht,
\newblock ``Compressed sensing off the grid,''
\newblock {\em IEEE transactions on information theory}, vol. 59, no. 11, pp.
  7465--7490, 2013.

\bibitem{yang2016exact}
Zai Yang and Lihua Xie,
\newblock ``Exact joint sparse frequency recovery via optimization methods,''
\newblock {\em IEEE Transactions on Signal Processing}, vol. 64, no. 19, pp.
  5145--5157, 2016.

\bibitem{chi2014compressive}
Yuejie Chi and Yuxin Chen,
\newblock ``Compressive two-dimensional harmonic retrieval via atomic norm
  minimization,''
\newblock {\em IEEE Transactions on Signal Processing}, vol. 63, no. 4, pp.
  1030--1042, 2014.

\bibitem{heckel2016super}
Reinhard Heckel, Veniamin~I Morgenshtern, and Mahdi Soltanolkotabi,
\newblock ``Super-resolution radar,''
\newblock {\em Information and Inference: A Journal of the IMA}, vol. 5, no. 1,
  pp. 22--75, 2016.

\bibitem{tian2017low}
Zhi Tian, Zhe Zhang, and Yue Wang,
\newblock ``Low-complexity optimization for two-dimensional
  direction-of-arrival estimation via decoupled atomic norm minimization,''
\newblock in {\em 2017 IEEE International Conference on Acoustics, Speech and
  Signal Processing (ICASSP)}. IEEE, 2017, pp. 3071--3075.

\bibitem{yang2016vandermonde}
Zai Yang, Lihua Xie, and Petre Stoica,
\newblock ``Vandermonde decomposition of multilevel toeplitz matrices with
  application to multidimensional super-resolution,''
\newblock {\em IEEE Transactions on Information Theory}, vol. 62, no. 6, pp.
  3685--3701, 2016.

\bibitem{chi2016guaranteed}
Yuejie Chi,
\newblock ``Guaranteed blind sparse spikes deconvolution via lifting and convex
  optimization,''
\newblock {\em IEEE Journal of Selected Topics in Signal Processing}, vol. 10,
  no. 4, pp. 782--794, 2016.

\bibitem{yang2016super}
Dehui Yang, Gongguo Tang, and Michael~B Wakin,
\newblock ``Super-resolution of complex exponentials from modulations with
  unknown waveforms,''
\newblock {\em IEEE Transactions on Information Theory}, vol. 62, no. 10, pp.
  5809--5830, 2016.

\bibitem{suliman2018blind}
Mohamed~A Suliman and Wei Dai,
\newblock ``Blind two-dimensional super-resolution and its performance
  guarantee,''
\newblock {\em arXiv preprint arXiv:1811.02070}, 2018.

\bibitem{suliman2019exact}
Mohamed~A Suliman and Wei Dai,
\newblock ``Exact three-dimensional estimation in blind super-resolution via
  convex optimization,''
\newblock in {\em 2019 53rd Annual Conference on Information Sciences and
  Systems (CISS)}. IEEE, 2019, pp. 1--9.

\bibitem{candes2013super}
Emmanuel~J Cand{\`e}s and Carlos Fernandez-Granda,
\newblock ``Super-resolution from noisy data,''
\newblock {\em Journal of Fourier Analysis and Applications}, vol. 19, no. 6,
  pp. 1229--1254, 2013.

\bibitem{bhaskar2013atomic}
Badri~Narayan Bhaskar, Gongguo Tang, and Benjamin Recht,
\newblock ``Atomic norm denoising with applications to line spectral
  estimation,''
\newblock {\em IEEE Transactions on Signal Processing}, vol. 61, no. 23, pp.
  5987--5999, 2013.

\bibitem{li2015off}
Yuanxin Li and Yuejie Chi,
\newblock ``Off-the-grid line spectrum denoising and estimation with multiple
  measurement vectors,''
\newblock {\em IEEE Transactions on Signal Processing}, vol. 64, no. 5, pp.
  1257--1269, 2015.

\bibitem{zhang2017atomic}
Peng Zhang, Lu~Gan, Cong Ling, and Sumei Sun,
\newblock ``Atomic norm denoising-based joint channel estimation and faulty
  antenna detection for massive mimo,''
\newblock {\em IEEE Transactions on Vehicular Technology}, vol. 67, no. 2, pp.
  1389--1403, 2017.

\bibitem{tang2014near}
Gongguo Tang, Badri~Narayan Bhaskar, and Benjamin Recht,
\newblock ``Near minimax line spectral estimation,''
\newblock {\em IEEE Transactions on Information Theory}, vol. 61, no. 1, pp.
  499--512, 2014.

\bibitem{li2019atomic}
Shuang Li, Michael~B Wakin, and Gongguo Tang,
\newblock ``Atomic norm denoising for complex exponentials with unknown
  waveform modulations,''
\newblock {\em IEEE Transactions on Information Theory}, vol. 66, no. 6, pp.
  3893--3913, 2020.

\bibitem{chi2020harnessing}
Yuejie Chi and Maxime~Ferreira Da~Costa,
\newblock ``Harnessing sparsity over the continuum: Atomic norm minimization
  for superresolution,''
\newblock {\em IEEE Signal Processing Magazine}, vol. 37, no. 2, pp. 39--57,
  2020.

\bibitem{suliman2019blind}
Mohamed~A Suliman and Wei Dai,
\newblock ``Blind super-resolution in two-dimensional parameter space,''
\newblock in {\em IEEE International Conference on Acoustics, Speech and Signal
  Processing (ICASSP)}. IEEE, 2019, pp. 5511--5515.

\bibitem{ahmed2014blind}
Ali Ahmed, Benjamin Recht, and Justin Romberg,
\newblock ``Blind deconvolution using convex programming,''
\newblock {\em IEEE Transactions on Information Theory}, vol. 60, no. 3, pp.
  1711--1732, 2014.

\bibitem{luo2006low}
Xiliang Luo and Georgios~B Giannakis,
\newblock ``Low-complexity blind synchronization and demodulation for (ultra-)
  wideband multi-user ad hoc access,''
\newblock {\em IEEE Transactions on Wireless communications}, vol. 5, no. 7,
  pp. 1930--1941, 2006.

\bibitem{candes2011probabilistic}
Emmanuel~J Candes and Yaniv Plan,
\newblock ``A probabilistic and ripless theory of compressed sensing,''
\newblock {\em IEEE Transactions on Information Theory}, vol. 57, no. 11, pp.
  7235--7254, 2011.

\bibitem{tao2010random}
Terence Tao and Van Vu,
\newblock ``Random matrices: The distribution of the smallest singular
  values,''
\newblock {\em Geometric And Functional Analysis}, vol. 20, no. 1, pp.
  260--297, 2010.

\bibitem{fern2015superresolution}
Carlos Fernandez-Granda,
\newblock ``Super-resolution of point sources via convex programming,''
\newblock {\em arXiv preprint arXiv:1507.07034}, 2015.

\bibitem{moitra2015super}
Ankur Moitra,
\newblock ``Super-resolution, extremal functions and the condition number of
  vandermonde matrices,''
\newblock in {\em Proceedings of the forty-seventh annual ACM symposium on
  Theory of computing}, 2015, pp. 821--830.

\bibitem{donoho1992superresolution}
David~L Donoho,
\newblock ``Superresolution via sparsity constraints,''
\newblock {\em SIAM journal on mathematical analysis}, vol. 23, no. 5, pp.
  1309--1331, 1992.

\bibitem{xu2014precise}
Weiyu Xu, Jian-Feng Cai, Kumar~Vijay Mishra, Myung Cho, and Anton Kruger,
\newblock ``Precise semidefinite programming formulation of atomic norm
  minimization for recovering d-dimensional (d $\geq$ 2) off-the-grid
  frequencies,''
\newblock in {\em Information Theory and Applications Workshop (ITA), 2014}.
  IEEE, 2014, pp. 1--4.

\bibitem{dumitrescu2007positive}
Bogdan Dumitrescu,
\newblock {\em Positive trigonometric polynomials and signal processing
  applications}, vol. 103,
\newblock Springer, 2007.

\bibitem{dritschel2004factorization}
Michael~A Dritschel,
\newblock ``On factorization of trigonometric polynomials,''
\newblock {\em Integral Equations and Operator Theory}, vol. 49, no. 1, pp.
  11--42, 2004.

\bibitem{mclean2002spectral}
Jeremy~W McLean and Hugo~J Woerdeman,
\newblock ``Spectral factorizations and sums of squares representations via
  semidefinite programming,''
\newblock {\em SIAM journal on matrix analysis and applications}, vol. 23, no.
  3, pp. 646--655, 2002.

\bibitem{dumitrescu2004multidimensional}
Bogdan Dumitrescu,
\newblock ``Multidimensional stability test using sum-of-squares
  decomposition,''
\newblock in {\em 2004 IEEE International Symposium on Circuits and Systems
  (IEEE Cat. No. 04CH37512)}. IEEE, 2004, vol.~3, pp. III--545.

\bibitem{nesterov1994interior}
Yurii Nesterov and Arkadii Nemirovskii,
\newblock {\em Interior-point polynomial algorithms in convex programming},
\newblock SIAM, 1994.

\bibitem{van1983matrix}
Charles~F Van~Loan and Gene~H Golub,
\newblock {\em Matrix computations},
\newblock Johns Hopkins University Press Baltimore, 1983.

\bibitem{rao2015forward}
Nikhil Rao, Parikshit Shah, and Stephen Wright,
\newblock ``Forward--backward greedy algorithms for atomic norm
  regularization,''
\newblock {\em IEEE Transactions on Signal Processing}, vol. 63, no. 21, pp.
  5798--5811, 2015.

\bibitem{tang2013sparse}
Gongguo Tang, Badri~Narayan Bhaskar, and Benjamin Recht,
\newblock ``Sparse recovery over continuous dictionaries-just discretize,''
\newblock in {\em 2013 Asilomar Conference on Signals, Systems and Computers}.
  IEEE, 2013, pp. 1043--1047.

\bibitem{tropp2015introduction}
Joel~A Tropp et~al.,
\newblock ``An introduction to matrix concentration inequalities,''
\newblock {\em Foundations and Trends{\textregistered} in Machine Learning},
  vol. 8, no. 1-2, pp. 1--230, 2015.

\bibitem{yang2014exact}
Zai Yang and Lihua Xie,
\newblock ``Exact joint sparse frequency recovery via optimization methods,''
\newblock {\em arXiv preprint arXiv:1405.6585}, 2014.

\end{thebibliography}

\end{document}